\DeclareFontFamily{OT1}{pzc}{}
\DeclareFontShape{OT1}{pzc}{m}{it}{<-> s * [1.10] pzcmi7t}{}
\DeclareMathAlphabet{\mathpzc}{OT1}{pzc}{m}{it}
\definecolor{orcidlogocol}{HTML}{A6CE39}
\tikzset{
	orcidlogo/.pic={
		\fill[orcidlogocol] svg{M256,128c0,70.7-57.3,128-128,128C57.3,256,0,198.7,0,128C0,57.3,57.3,0,128,0C198.7,0,256,57.3,256,128z};
		\fill[white] svg{M86.3,186.2H70.9V79.1h15.4v48.4V186.2z}
		svg{M108.9,79.1h41.6c39.6,0,57,28.3,57,53.6c0,27.5-21.5,53.6-56.8,53.6h-41.8V79.1z M124.3,172.4h24.5c34.9,0,42.9-26.5,42.9-39.7c0-21.5-13.7-39.7-43.7-39.7h-23.7V172.4z}
		svg{M88.7,56.8c0,5.5-4.5,10.1-10.1,10.1c-5.6,0-10.1-4.6-10.1-10.1c0-5.6,4.5-10.1,10.1-10.1C84.2,46.7,88.7,51.3,88.7,56.8z};
	}
}
\newcommand{\orcidlink}[1]{\href{https://orcid.org/#1}{\mbox{\scalerel*{
				\begin{tikzpicture}[yscale=-1,transform shape]
					\pic{orcidlogo};
				\end{tikzpicture}
			}{X}}}}
\newtheorem{theorem}{Theorem}
\newtheorem{proposition}[theorem]{Proposition}
\newtheorem{fact}[theorem]{Fact}
\newtheorem{lemma}[theorem]{Lemma}
\newtheorem{corollary}[theorem]{Corollary}
\newtheorem{definition}[theorem]{Definition}
\theoremstyle{definition}
\newtheorem{example}[theorem]{Example}
\numberwithin{equation}{section}
\newcommand{\overhat}[1]{\expandafter\hat#1}
\newcommand{\rmi}{\mathrm{i}}
\newcommand{\rme}{\mathrm{e}}
\newcommand{\rmd}{\mathrm{d}}
\newcommand{\tr}{\operatorname{tr}}
\newcommand{\ket}[1]{|{#1}\rangle}
\newcommand{\bra}[1]{\langle{#1}|}
\newcommand{\Span}{\operatorname{span}}
\definecolor{dgreen}{rgb}{0,0.5,0}
\definecolor{delete}{cmyk}{0.5,0,0,0}
\newcommand{\yes}{\color{dgreen}\ding{51}\color{black}}
\newcommand{\no}{\color{red}\ding{55}\color{black}}
\def\@makefnmark{
	\leavevmode
	\raise.9ex\hbox{\fontsize\sf@size\z@\normalfont\tiny\@thefnmark}}
\begin{document}
\title{Bath Dynamical Decoupling with a Quantum Channel}
\author{Alexander Hahn\orcidlink{0000-0002-4152-9854}\thanks{School of Mathematical and Physical Sciences, Macquarie University, 2109 NSW, Australia}\textsuperscript{\, ,}\footnote{Corresponding author, email: \href{mailto:alexander.hahn@hdr.mq.edu.au}{alexander.hahn@hdr.mq.edu.au}} \and Kazuya Yuasa\orcidlink{0000-0001-5314-2780}\thanks{Department of Physics, Waseda University, Tokyo 169-8555, Japan}\and Daniel Burgarth\orcidlink{0000-0003-4063-1264}\footnote{Department Physik, Friedrich-Alexander-Universit\"at Erlangen-N\"urnberg, Staudtstra\ss e 7, 91058 Erlangen, Germany}\textsuperscript{\, ,}\footnotemark[1]}
\date{}
\maketitle

\begin{abstract}
	Bang-bang dynamical decoupling protects an open quantum system from decoherence due to its interaction with the surrounding bath/environment.
	In its standard form, this is achieved by strongly kicking the system with cycles of unitary operations, which average out the interaction Hamiltonian.
	In this paper, we generalize the notion of dynamical decoupling to repeated kicks with a quantum channel, which is applied to the bath.
	We derive necessary and sufficient conditions on the employed quantum channel and find that bath dynamical decoupling works if and only if the kick is ergodic.
	Furthermore, we study in which circumstances CPTP kicks on a mono-partite quantum system induce quantum Zeno dynamics with its Hamiltonian cancelled out.
	This does not require the ergodicity of the kicks, and the absence of decoherence-free subsystems is both necessary and sufficient.
	While the standard unitary dynamical decoupling is essentially the same as the quantum Zeno dynamics, our investigation implies that this is not true any more in the case of CPTP kicks.
	To derive our results, we prove some spectral properties of ergodic quantum channels, that might be of independent interest.
	Our approach establishes an enhanced and unified mathematical understanding of several recent experimental demonstrations and might form the basis of new dynamical decoupling schemes that harness environmental noise degrees of freedom.
\end{abstract}

\section{Introduction}
\label{sec:Introduction}
One of the main challenges in the development of quantum technology is the intrinsic coupling of quantum systems to their environments that cause decoherence~\cite{Preskill2018}\@.
A promising approach to overcome this hurdle is the technique of \emph{dynamical decoupling} (DD)~\cite{Viola1998, Ban1998, Viola1999, Viola1999a, Viola2002, Viola2003, Cappellaro2006, Uhrig2007}, which has been developed in the late 1990s and early 2000s employing old ideas from nuclear magnetic resonance (NMR)~\cite{Hahn1950, Carr1954, Meiboom1958, Waugh1968, Haeberlen1968}\@.
DD is a robust open-loop strategy based on fast and strong (bang-bang) controls to average out unwanted interactions. 
As such, it can be used, for example, to improve the quality of quantum computations~\cite{Viola1999a, Yang2010, West2010, Souza2011} or quantum memory~\cite{Biercuk2009, Souza2011, Peng2011}\@.
Nowadays, there exists an entire zoo of different DD schemes: from simple spin echos~\cite{Carr1954, Meiboom1958}, over time-optimized sequences~\cite{Uhrig2007}, to group-based~\cite{Viola1999, Viola1999a,Hahn2024}, embedded~\cite{Khodjasteh2005, Khodjasteh2007}, or randomized~\cite{Viola2005, Santos2006} methods.
See for instance Refs.~\cite{Viola2006, Santos2008} for a comparison of some of the available schemes.
Furthermore, a discussion of DD with finite-strength pulses beyond bang-bang controls can be found, e.g.\ in Ref.~\cite{Viola2003}\@.
Nevertheless, the underlying idea of all DD procedures is the same: one intersperses the dynamics of a quantum system with cycles of unitary operations.
If these kicks are applied fast enough, their induced rotations effectively average out undesired system-bath interactions.
In real physical systems, this can be implemented e.g.~through laser or microwave pulses and the timescale of the kicks is determined by the system-bath coupling strength.

Usually, the premise in DD is that we can only control the system itself and do not have access to the environmental degrees of freedom.
However, there are physical scenarios where it makes sense to relax this restriction.
For example, a recent theoretical study~\cite{burgelman_quantum_2022} shows that coupling the bath to a larger environment can increase the coherence time of the system through a DD effect induced by the interaction between the bath and the larger environment.
Furthermore, there have been several experimental demonstrations where controlling the bath leads to an enhanced lifetime of the coherence of the system.
See for instance Refs.~\cite{Hanson2008, Lange2012, Knowles2013, Dong2019, Joos2022, Uysal2023}\@.
Other experimental works actually utilized the effect of the system-bath interaction for system manipulation and control~\cite{Nakajima2018, Dasari2021}\@.
Conversely, it has been experimentally observed that acting on the bath with the ``wrong'' operations can also completely destroy the system coherence~\cite{Takahashi2008}\@.
Such a response is similar to the anti-Zeno effect, where the system decay rate is enhanced through repeated measurements~\cite{Facchi2001a} with an unfavorable time modulation~\cite{Chaudhry2016}\@.
This raises the question of which operations shall be applied to the bath to achieve DD and, in turn, a longer system coherence.
However, so far there is no unified framework that imposes general decoupling conditions on the bath operations.
This paper aims to solve this problem by considering the most general class of bath operations for DD, namely quantum channels.
In this setting of controlling the bath, it makes sense to relax the constraint of unitary kicks.
This is because we do not need to reverse the applied operations at the end of the decoupling sequence.
We only care about the final \emph{system} state, which is not affected by the bath pulses.
By this approach, we do not only inaugurate a unified systematic approach to mathematically describing existing experiments.
Also, we pave the way for the development of novel DD schemes that explicitly harness quantum noise and classical uncertainty.

The paper is structured as follows.
In Sec.~\ref{sec:preliminaries}, we give a short mathematical introduction.
This covers establishing our notation in Sec.~\ref{sec:notation} as well as recalling some standard results on unitary system DD in Sec.~\ref{sec:system_DD}\@.
Afterwards, we discuss the spectral properties of quantum channels in Sec.~\ref{sec:spectral_properties}\@.
We divide this excursus into two parts.
First, we summarize some known general results in Sec.~\ref{sec:properties_CPTP}\@.
Second, we study ergodic quantum channels in Sec.~\ref{sec:ergodic_channels} and characterize their spectral structure.
These results are used to prove our first main result in Sec.~\ref{sec:bath_DD}\@.
Here, we establish the concept of bath dynamical decoupling and show that it works if and only if the applied quantum channel is ergodic.
This is followed by a discussion in Sec.~\ref{sec:dynamical_freezing} on suppressing a mono-partite Zeno Hamiltonian through repeated applications of a quantum channel.
We find that the weaker condition of the absence of decoherence-free subsystem is both necessary and sufficient in this case.
Our investigation is complemented by case studies with some selected examples in Sec.~\ref{sec:examples}\@.
Finally, we conclude in Sec.~\ref{sec:conclusion}\@.

\section{Mathematical Preliminaries}
\label{sec:preliminaries}
In this section, we introduce some useful prerequisites.
This includes essential notation used throughout this paper and a short review of standard DD with unitary kicks.
The latter will also help to compare our scheme with existing methods.

\subsection{Notation}
\label{sec:notation}
Let us first fix some basic notation.
In this work, we consider a quantum system on a finite-dimensional Hilbert space $\mathscr{H}$ with $\dim\mathscr{H}=d<\infty$.
The space of the operators that act on the Hilbert space $\mathscr{H}$ is denoted by $\mathcal{B}(\mathscr{H})$.
As a subspace, it contains the space of density operators $\mathcal{T}(\mathscr{H})=\{\rho\in\mathcal{B}(\mathscr{H}):\rho\geq0,\ \tr(\rho)=1\}$.
Physical operations on the space $\mathcal{T}(\mathscr{H})$ that map density operators to density operators are completely positive trace-preserving (CPTP) linear maps $\mathcal{E}$, also known as ``quantum channels''~\cite{Alicki2007, Nielsen2010, Wolf2012, Chruscinski2017, Watrous2018, Chruscinski2022}\@.
They satisfy $\mathcal{E}(\mathcal{T}(\mathscr{H}))=\mathcal{T}(\mathscr{H})$ and are stable under tensoring arbitrary finite-dimensional ancilla systems $\mathscr{H}'$, i.e.~$(\mathcal{E}\otimes\mathcal{I}_{\mathscr{H}'})(\mathcal{T}(\mathscr{H}\otimes\mathscr{H}'))=\mathcal{T}(\mathscr{H}\otimes\mathscr{H}')$.
Here, $\mathcal{I}_{\mathscr{H}'}$ is the identity map on $\mathcal{B}(\mathscr{H}')$, i.e.~$\mathcal{I}_{\mathscr{H}'}(X)=X$ for all $X\in\mathcal{B}(\mathscr{H}')$.
In fact, the stability under tensoring ancilla systems is a physically necessary constraint ensuring that a density operator is again mapped to a valid density operator under $\mathcal{E}$ even in the presence of a spectator.
The characterizations of the quantum channel are summarized as follows.
\begin{definition}[Quantum channel]\label{def:quantum_channel}
	A linear map $\mathcal{E}:\mathcal{B}(\mathscr{H})\rightarrow\mathcal{B}(\mathscr{H})$ is called ``quantum channel'' or ``CPTP map,'' if it satisfies
	\begin{enumerate}[(i)]
		\item positivity: $\mathcal{E}(X)\geq 0$ for all $X\in\mathcal{B}(\mathscr{H})$ satisfying $X\geq 0$;
		\item complete positivity: if $X\in \mathcal{T}(\mathscr{H}\otimes\mathscr{H}')$ satisfies $X\geq0$, then $(\mathcal{E}\otimes \mathcal{I}_{\mathscr{H}'})(X)\geq0$ for all finite-dimensional ancilla systems $\mathscr{H}'$;
		\item\label{it:trace_preservation} trace-preservation: $\tr(\mathcal{E}(X))=\tr(X)$.
	\end{enumerate}
	Linear maps $\mathcal{E}:\mathcal{B}(\mathscr{H})\rightarrow\mathcal{B}(\mathscr{H})$ for which condition~(iii) is relaxed to $\tr(\mathcal{E}(X))\leq\tr(X)$ for all $X\in\mathcal{B}(\mathscr{H})$ satisfying $X\ge0$ are called ``quantum operations.''
\end{definition}

Quantum operations describe various general physical instruments that can be applied to quantum states.
A quantum channel $\mathcal{E}:\mathcal{B}(\mathscr{H})\rightarrow\mathcal{B}(\mathscr{H})$ is uniquely specified by its \emph{Choi-Jamio{\l}kowski state} $\Lambda(\mathcal{E})\in\mathcal{B}(\mathscr{H}\otimes\mathscr{H})$ via~\cite{Wolf2012, Chruscinski2017, Watrous2018, Chruscinski2022}
\begin{equation}
	\Lambda(\mathcal{E})=(\mathcal{E}\otimes\mathcal{I}_\mathscr{H})(\ket{\Omega}\bra{\Omega}),\label{eq:Choi-Jamiolkowski}
\end{equation}
where $\ket{\Omega}=\frac{1}{\sqrt{d}}\sum_{j=1}^d \ket{jj}$ is the maximally entangled state.
In the case of a closed system, the time-evolution of the system is described by a \emph{unitary} quantum channel $\mathcal{U}:\mathcal{B}(\mathscr{H})\rightarrow\mathcal{B}(\mathscr{H})$, which satisfies $\mathcal{U}^\dagger\mathcal{U}=\mathcal{U}\mathcal{U}^\dagger=\mathcal{I}_\mathscr{H}$, where the dagger $\dagger$ denotes the adjoint.
The Choi-Jamio{\l}kowski state $\Lambda(\mathcal{U})$ of a unitary quantum channel is pure~\cite{Hahn2022}\@.
The dynamics of a closed quantum system is then governed by a Hamiltonian $H\in\mathcal{B}(\mathscr{H})$, which is a Hermitian operator, $H=H^\dagger$, acting on the Hilbert space $\mathscr{H}$.
We denote its adjoint representation with a calligraphic letter as $\mathcal{H}=[H,{}\bullet{}]$, and will also use the terminology ``Hamiltonian'' for $\mathcal{H}$.
This makes sense because $\mathcal{H}$ is Hermitian, $\mathcal{H}=\mathcal{H}^\dag$, and is the generator of a one-parameter semi-group $\mathcal{U}_t=\rme^{-\rmi t\mathcal{H}}$, $t\in\mathbb{R}$, consisting of unitary quantum channels acting on $\mathcal{B}(\mathscr{H})$.
These unitary quantum channels describe the dynamics on $\mathcal{T}(\mathscr{H})$ under $\mathcal{H}$ in the same way as the unitaries $U_t=\rme^{-\rmi tH}$ describing the dynamics on $\mathscr{H}$ under $H$---a fact that is based on Stone's theorem~\cite[Sec.~VIII.4]{Reed1980}\@.
Again, we remark that we deal with finite-dimensional quantum systems in this work.
However, more information about the unitary closed system dynamics in the density operator picture for the general (possibly infinite-dimensional) case can be found in Ref.~\cite{Lonigro2024}\@.

In the physical setting of DD, we are given some finite-dimensional Hilbert space consisting of a system part $\mathscr{H}_1$ and a bath/environmental part $\mathscr{H}_2$.
Thus, the total Hilbert space $\mathscr{H}=\mathscr{H}_1\otimes\mathscr{H}_2$ is endowed with a bipartite structure.
Here, we will assume that $d_1=\dim\mathscr{H}_1\geq 2$ as well as $d_2=\dim\mathscr{H}_2\geq 2$.
Notice that the subdivision into $\mathscr{H}_1$ and $\mathscr{H}_2$ is arbitrary, and we could in principle choose it however we want to.
Usually, there is a physical constraint that leads to a natural subdivision.
Without loss of generality, we can write an arbitrary Hamiltonian $H$ on $\mathscr{H}=\mathscr{H}_1\otimes\mathscr{H}_2$ in the form
\begin{equation}
	H=H_1\otimes \mathbb{1}_2 + \mathbb{1}_1\otimes H_2 + \sum_{i=1}^D h_1^{(i)}\otimes h_2^{(i)},
	\label{eq:Hamiltonian_decomposition}
\end{equation} 
by exploiting an operator Schmidt decomposition~\cite{Tyson2003}\@.
Here, $\mathbb{1}_{1(2)}$ is the identity matrix on $\mathscr{H}_{1(2)}$ and $D\leq \min\{\dim(\mathscr{H}_1)^2,\dim(\mathscr{H}_2)^2\}$.
Again, without loss of generality, we can always shift the energy, so that $\tr(H_1)=\tr(H_2)=\tr(h_1^{(i)})=\tr(h_2^{(i)})=0$, for all $i$.
This is because the traces can be collected into a single term, which is proportional to the identity $\mathbb{1}_1\otimes\mathbb{1}_2$ and only results in an irrelevant global phase in the evolution under $H$.
The action of the adjoint representation $\mathcal{H}=[H,{}\bullet{}]$ of $H$ on a product operator $A=A_1\otimes A_2$ can then be written as
\begin{equation}
\mathcal{H}(A)=[H_1,A_1]\otimes A_2 + A_1\otimes [H_2,A_2] + \sum_{i}\left(
[h_1^{(i)},A_1]\otimes h_2^{(i)}A_2+A_1 h_1^{(i)}\otimes[h_2^{(i)},A_2]
\right).
\label{eq:Hamiltonian_action}
\end{equation}
The free evolution under the Hamiltonian in Eq.~\eqref{eq:Hamiltonian_decomposition} or~\eqref{eq:Hamiltonian_action} couples the system $\mathscr{H}_1$ to the bath $\mathscr{H}_2$.
This coupling causes decoherence in the reduced dynamics of system $\mathscr{H}_1$ over time.
The goal of DD is to suppress this decoherence by dynamically cancelling the interaction part of the Hamiltonian.
In the next subsection, we introduce how this is usually done in the setting of the standard system DD\@.

\subsection{Recap of the Standard Unitary System Dynamical Decoupling}
\label{sec:system_DD}
For completeness and to compare our bath DD framework with existing methods, we here recapitulate the standard system DD\@.
This method works by frequently kicking the system $\mathscr{H}_1$ with cycles of unitary rotations~\cite{Viola1999, Viola1999a, Viola2003, Khodjasteh2005}\@.
To mathematically describe this technique, consider a set $\mathscr{V}=\{V_i\}_{i=1}^M$ of unitary operators $V_i\in\mathcal{B}(\mathscr{H})$.
This set of size $\vert\mathscr{V}\vert=M$ is called the \emph{decoupling set}\@.
The operations with $V_i$ are called \emph{decoupling operations} or \emph{unitary kicks} and have the form $V_i=v_i\otimes\mathbb{1}_2$, where $v_i\in\mathcal{B}(\mathscr{H}_1)$ are unitary operators on $\mathscr{H}_1$.
This particular form of $V_i$ incorporates the usual premise that we can only control the system but not the bath.
Furthermore, we require
\begin{equation}
	\frac{1}{M}\sum_{i=1}^M v_iXv_i^\dagger=\frac{1}{d_1}\tr(X)\mathbb{1}_1,\quad \forall X\in\mathcal{B}(\mathscr{H}_1),\label{eq:decoupling_condition}
\end{equation}
which is sometimes referred to as \emph{decoupling condition}\@.
Condition~\eqref{eq:decoupling_condition} requires $\mathscr{V}$ to be a \emph{unitary 1-design}~\cite{Roy2009, Mele2023}\@.
Notice that the channel constructed in Eq.~\eqref{eq:decoupling_condition} by averaging over the set $\mathscr{V}$ is the \emph{completely depolarizing channel} $\mathcal{P}^{\mathbb{1}/d_1}=\tr({}\bullet{})\mathbb{1}_1/d_1$ acting on $\mathcal{B}(\mathscr{H}_1)$.
It is an orthogonal projection satisfying $(\mathcal{P}^{\mathbb{1}/d_1})^2=(\mathcal{P}^{\mathbb{1}/d_1})^\dagger=\mathcal{P}^{\mathbb{1}/d_1}$.
In the language of designs, this operator is referred to as \emph{twirl} over the unitary 1-design $\mathscr{V}$, see e.g.~Ref.~\cite{Conrad2021}\@.

In standard system DD, the unitary dynamics of a quantum system on $\mathscr{H}$ is interspersed cyclicly by the decoupling operations $\{V_i\}$.
This results in a dynamics given by
\begin{equation}
	U_{t,n}^\mathrm{DD}=\left(V_M \rme^{-\rmi\frac{t}{nM}H}\cdots V_2 \rme^{-\rmi\frac{t}{nM}H}V_1 \rme^{-\rmi\frac{t}{nM}H}\right)^n.\label{eq:unitary_DD}
\end{equation}
It has been shown in Ref.~\cite{Facchi2004} that this evolution leads to a variant of the \emph{quantum Zeno dynamics}~\cite{Facchi2002, Facchi2008, burgarth_quantum_2020, Burgarth2022} in the decoupling limit $n\rightarrow\infty$,
\begin{equation}
	\lim_{n\rightarrow\infty}U_{t,n}^\mathrm{DD}=(V_M\cdots V_2 V_1)^n \rme^{-\rmi t H_\mathrm{Z}} + \mathcal{O}(1/n),
	\label{eq:system_DD_Zeno}
\end{equation}
where the Zeno Hamiltonian $H_\mathrm{Z}$ is given by
\begin{equation}
	H_\mathrm{Z}=(\mathcal{P}^{\mathbb{1}/d_1}\otimes \mathcal{I}_2)(H),
	\label{eq:Zeno_Hamiltonian_system_DD}
\end{equation}
with $\mathcal{I}_{2(1)}$ the identity map on $\mathcal{B}(\mathscr{H}_{2(1)})$.
Equation~\eqref{eq:Zeno_Hamiltonian_system_DD} can be rewritten as
\begin{equation}
	H_\mathrm{Z} = \mathbb{1}_1 \otimes H_2,
	\label{eq:Dec_Zeno_Hamiltonian_system_DD}
\end{equation}
by using Eq.~\eqref{eq:Hamiltonian_decomposition} and $\tr(H_1)=\tr(h_1^{(i)})=0$, for all $i$.
Thus, the evolution under $\lim_{n\rightarrow\infty} U_{t,n}^\mathrm{DD}$ is indeed decoupled [up to an error that scales as $\mathcal{O}(1/n)$].
If we reverse the overall unitary rotation $(V_M\cdots V_2 V_1)^n$ induced by the full decoupling sequence at the end of the evolution, we recover the initial system state.
Exactly the same considerations can even be made for \emph{random dynamical decoupling}~\cite{Viola2005,Santos2006,Viola2006,Santos2008}, where Eq.~\eqref{eq:unitary_DD} is replaced by a product $V_n \rme^{-\rmi\frac{t}{n}H}\cdots V_2 \rme^{-\rmi\frac{t}{n}H}V_1 \rme^{-\rmi\frac{t}{n}H}$, in which the decoupling operations $V_i$ are sampled from an independent and identically distributed (i.i.d.) random variable at each time step $i$.
As for the deterministic case~\eqref{eq:unitary_DD}, random DD is a manifestation of the quantum Zeno dynamics, where the effective Zeno Hamiltonian is given by Eqs.~\eqref{eq:Zeno_Hamiltonian_system_DD}--\eqref{eq:Dec_Zeno_Hamiltonian_system_DD}\@.
See Refs.~\cite{Hillier2015,Hahn2022} for details.

The unitarity of the kicks is essential in these schemes since we want to retain the information about the system state.
However, imagine that we have control over the bath degrees of freedom and still only care about the system state at the end of the evolution.
In this case, a decoupling scheme completely destroying the bath state would still be reasonable.
In Ref.~\cite{Hahn2022}, we revealed the existence of a symmetry between the system and the bath that allows us to focus on the bath dynamics only: if the bath evolves unitarily, the system will do so as well.
This fact, which we call \emph{equitability of system and bath}, naturally leads to the concept of bath DD, where the system is protected through decoupling cycles acting on the bath.
Since the final bath state does not matter, we can even apply CPTP kicks to the bath, instead of the unitary kicks.
From Eq.~\eqref{eq:Zeno_Hamiltonian_system_DD}, we can deduce that a unitary decoupling cycle effectively acts as the completely depolarizing channel.
This gives us the first candidate of a quantum channel for bath DD\@.
Indeed, from Ref.~\cite[Corollary~2]{burgarth_quantum_2020}, we can infer that the bath DD scheme
\begin{equation}
	\lim_{n\rightarrow\infty}\left((\mathcal{I}_1\otimes\mathcal{P}^{\mathbb{1}/d_2})\rme^{-\rmi\frac{t}{n}\mathcal{H}}\right)^n = (\mathcal{I}_1\otimes\mathcal{P}^{\mathbb{1}/d_2})\rme^{-\rmi t (\mathcal{H}_1\otimes \mathcal{I}_2)}
\end{equation}
using the bath quantum channel $\mathcal{I}_1\otimes\mathcal{P}^{\mathbb{1}/d_2}$ gives rise to a decoupled evolution.
See also Refs.~\cite{Matolcsi2003,Barankai2018,Moebus2019,Becker2021,Moebus2023,Moebus2024,Salzmann2024}\@.
Here, $\mathcal{H}_1=[H_1,{}\bullet{}]$.
The question is then whether this observation can be generalized to other quantum channels $\mathcal{E}_2$ acting on $\mathcal{B}(\mathscr{H}_2)$ via
\begin{equation}
	\lim_{n\rightarrow\infty}\left((\mathcal{I}_1\otimes\mathcal{E}_2)\rme^{-\rmi\frac{t}{n}\mathcal{H}}\right)^n.
	\label{eq:bath_DD_def}
\end{equation}
Here, we answer this question and derive both necessary and sufficient conditions on the bath quantum channel $\mathcal{E}_2$ giving rise to a decoupled evolution according to Eq.~\eqref{eq:bath_DD_def}\@.
For this purpose, we will have to introduce some results on the spectral properties of quantum channels.

\section{Spectral Properties of Quantum Channels}
\label{sec:spectral_properties}
In this section, we discuss the mathematical preliminaries on quantum channels, which we need to study bath DD\@.
Section~\ref{sec:properties_CPTP} collects some known results on quantum channels, particularly regarding their spectral properties.
These results are not new but rather stated for completeness and the reader's convenience.
It is followed by a discussion on ergodic quantum channels in Sec.~\ref{sec:ergodic_channels}\@.

\subsection{General Quantum Channels}
\label{sec:properties_CPTP}
The key players in our discussion are CPTP maps $\mathcal{E}:\mathcal{B}(\mathscr{H})\rightarrow\mathcal{B}(\mathscr{H})$, also known as quantum channels.
See Definition~\ref{def:quantum_channel} for the definition and Refs.~\cite{Alicki2007, Nielsen2010, Wolf2012, Chruscinski2017, Watrous2018, Chruscinski2022} for detailed introductions.
In this section, we recapitulate some of their properties.
In particular, we will be interested in their spectral properties, which have been discussed in depth in Ref.~\cite[Sec.~6]{Wolf2012}\@.
We summarize some facts relevant to our following discussion.
\begin{fact}[Spectral properties of quantum channels]
\label{prop:spectral_channel}
	For a quantum channel $\mathcal{E}$ on a finite-dimensional system, the following statements hold.
	\begin{enumerate}[(i)]
		\item The spectrum $\sigma(\mathcal{E})=\{\lambda_\ell\}$ of $\mathcal{E}$ is entirely contained in the closed unit disc $\overline{\mathbb{D}}=\{\lambda\in\mathbb{C}:|\lambda|\le 1\}$, and $\lambda=1$ is an eigenvalue of $\mathcal{E}$. The so-called ``peripheral spectrum'' $\sigma_\varphi(\mathcal{E})=\{\lambda_\ell:|\lambda_\ell|=1\}$ of $\mathcal{E}$ consists of the eigenvalues of $\mathcal{E}$ with magnitude $1$.
		\item $\mathcal{E}$ can be written in its spectral representation as
			\begin{equation}
				\mathcal{E}=
				\mathcal{E}_\varphi
				+\sum_{|\lambda_\ell|<1}(\lambda_\ell\mathcal{P}_\ell+\mathcal{N}_\ell),
			\end{equation}
		with
			\begin{equation}
				\mathcal{E}_\varphi = \sum_{|\lambda_\ell|=1}\lambda_\ell\mathcal{P}_\ell,
			\end{equation}
		where $\{\mathcal{P}_\ell\}$ and $\{\mathcal{N}_\ell\}$ are the spectral projections and the nilpotents of $\mathcal{E}$, respectively, and $\mathcal{E}_\varphi$ is called the ``peripheral part'' of $\mathcal{E}$.
		The spectral projections fulfill $\mathcal{P}_\ell\mathcal{P}_{\ell'}=\delta_{\ell\ell'}\mathcal{P}_\ell$ and $\sum_\ell\mathcal{P}_\ell=\mathcal{I}$, with $\mathcal{I}$ the identity map on $\mathcal{B}(\mathscr{H})$, while the nilpotents satisfy $\mathcal{N}_\ell\mathcal{P}_{\ell'}=\mathcal{P}_{\ell'}\mathcal{N}_\ell=\delta_{\ell\ell'}\mathcal{N}_\ell$ and $\mathcal{N}_\ell^{n_\ell}=0$, for some $n_\ell\in\mathbb{N}$.
		Notice that the peripheral part $\mathcal{E}_\varphi$ is free from nilpotents, and $\mathcal{E}_\varphi$ is thus diagonalizable (also called non-defective).
		Moreover, $\mathcal{E}_\varphi$ is a quantum channel, too.
		The spectral representation of $\mathcal{E}$ is unique.
		\item The projection onto the ``peripheral subspace'' corresponding to the peripheral spectrum $\sigma_\varphi(\mathcal{E})$,
			\begin{equation}
				\mathcal{P}_\varphi=\sum_{|\lambda_\ell|=1}\mathcal{P}_\ell,
			\end{equation}
		is a quantum channel. It satisfies $\mathcal{E}_\varphi=\mathcal{E}\mathcal{P}_\varphi=\mathcal{P}_\varphi\mathcal{E}$. Since $\mathcal{P}_\varphi$ does not have any nilpotents, it is diagonalizable.
		\item $\mathcal{E}_\varphi$ restricted to the range of $\mathcal{P}_\varphi$ is invertible in the sense that $\mathcal{E}_\varphi^{-1}\mathcal{E}_\varphi=\mathcal{E}_\varphi\mathcal{E}_\varphi^{-1}=\mathcal{P}_\varphi$, and the inverse is given by
			\begin{equation}
				\mathcal{E}_\varphi^{-1}=\sum_{|\lambda_\ell|=1}\lambda_\ell^{-1}\mathcal{P}_\ell,
			\end{equation}
			which is again a quantum channel.
	\end{enumerate}
\end{fact}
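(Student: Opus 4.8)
The statement collects standard material (see Ref.~\cite[Sec.~6]{Wolf2012}), and the plan is to assemble it from two analytic inputs---trace-norm contractivity of $\mathcal{E}$ and closedness of the set of quantum channels---together with finite-dimensional spectral calculus. First I would record that every quantum channel is a contraction in the trace norm, $\trnorm{\mathcal{E}(X)}\le\trnorm{X}$, the Russo--Dye-type bound for positive trace-preserving maps. Applying this to an eigenpair $\mathcal{E}(X)=\lambda_\ell X$ gives $|\lambda_\ell|^n\trnorm{X}=\trnorm{\mathcal{E}^n(X)}\le\trnorm{X}$ for all $n$, forcing $|\lambda_\ell|\le1$ and hence $\sigma(\mathcal{E})\subseteq\overline{\mathbb{D}}$. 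That $\lambda=1$ is an eigenvalue follows because the Hilbert--Schmidt adjoint is unital, $\mathcal{E}^\dagger(\mathbb{1})=\mathbb{1}$, by trace preservation, so $1\in\sigma(\mathcal{E}^\dagger)$ and therefore $1\in\sigma(\mathcal{E})$ since the two spectra are complex conjugates of one another (alternatively, $\mathcal{E}$ maps the compact convex set $\mathcal{T}(\mathscr{H})$ into itself and has a fixed point by Brouwer). The decomposition in (ii) is then the Jordan/Riesz spectral decomposition of the finite-dimensional operator $\mathcal{E}$, with $\mathcal{P}_\ell$ the Riesz projections and $\mathcal{N}_\ell$ the nilpotent parts obeying the stated algebra; its uniqueness is that of the Jordan decomposition.

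The first genuinely non-routine point is that the peripheral part carries no nilpotents. I would argue by power-boundedness: since every $\mathcal{E}^n$ is again a channel, the family $\{\mathcal{E}^n\}$ is uniformly (trace-norm) bounded, whereas a nontrivial Jordan block of size $k\ge2$ attached to a unimodular eigenvalue would make $\mathcal{E}^n$ grow polynomially in $n$ (like $n^{k-1}$), a contradiction. Hence each peripheral $\lambda_\ell$ is semisimple, $\mathcal{N}_\ell=0$ there, so $\mathcal{E}_\varphi=\sum_{|\lambda_\ell|=1}\lambda_\ell\mathcal{P}_\ell$ is diagonalizable, and $\mathcal{E}\mathcal{P}_\ell=\lambda_\ell\mathcal{P}_\ell$ on the peripheral blocks yields both $\mathcal{E}_\varphi=\mathcal{E}\mathcal{P}_\varphi=\mathcal{P}_\varphi\mathcal{E}$ and the orthogonality/commutation relations in (iii). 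The inverse formula in (iv) is immediate spectral calculus: using $\lambda_\ell\ne0$ and $\mathcal{P}_\ell\mathcal{P}_{\ell'}=\delta_{\ell\ell'}\mathcal{P}_\ell$, one checks $(\sum_{|\lambda_\ell|=1}\lambda_\ell^{-1}\mathcal{P}_\ell)\mathcal{E}_\varphi=\sum_{|\lambda_\ell|=1}\mathcal{P}_\ell=\mathcal{P}_\varphi$.

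What remains is the \emph{main obstacle}: showing that $\mathcal{P}_\varphi$, $\mathcal{E}_\varphi$, and $\mathcal{E}_\varphi^{-1}$ are themselves quantum channels, which cannot be read off from the spectral data alone. The plan is a recurrence/limit argument. Viewing the finitely many distinct peripheral eigenvalues as a point $(\lambda_\ell)_\ell$ on a torus $\mathbb{T}^L$, a pigeonhole/Poincaré-recurrence argument produces a subsequence $m_k\to\infty$ with $\lambda_\ell^{m_k}\to1$ simultaneously for every peripheral $\ell$ (the orbit closure is a compact subgroup and so contains the identity). Since $|\lambda_\ell|<1$ forces $\mathcal{E}^{m_k}\to0$ off the peripheral subspace, one obtains $\mathcal{E}^{m_k}\to\mathcal{P}_\varphi$; because each $\mathcal{E}^{m_k}$ is a channel and the CPTP conditions are closed, the limit $\mathcal{P}_\varphi$ is a channel. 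Then $\mathcal{E}_\varphi=\mathcal{E}\mathcal{P}_\varphi$ is a composition of channels (and equally the limit of $\mathcal{E}^{m_k+1}$), hence a channel, while $\mathcal{E}^{m_k-1}\to\sum_{|\lambda_\ell|=1}\lambda_\ell^{-1}\mathcal{P}_\ell=\mathcal{E}_\varphi^{-1}$ (using $\lambda_\ell^{m_k-1}\to\lambda_\ell^{-1}$) exhibits $\mathcal{E}_\varphi^{-1}$ as a limit of channels, completing (iii) and (iv).
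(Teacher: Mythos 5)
Your proposal is correct, and it is essentially a self-contained reconstruction of the standard argument. Note that the paper itself does not prove this Fact at all: its ``proof'' is a one-line citation to Ref.~\cite[Sec.~6]{Wolf2012}, and the chain of ideas you assemble---trace-norm contractivity (via the unital adjoint and Russo--Dye duality) forcing $\sigma(\mathcal{E})\subseteq\overline{\mathbb{D}}$; unitality of $\mathcal{E}^\dagger$ giving the eigenvalue $1$; power-boundedness of $\{\mathcal{E}^n\}$ ruling out Jordan blocks on the unit circle; and the recurrence subsequence $\lambda_\ell^{m_k}\to1$ combined with closedness of the CPTP set to exhibit $\mathcal{P}_\varphi$, $\mathcal{E}_\varphi$, and $\mathcal{E}_\varphi^{-1}$ as limits of the channels $\mathcal{E}^{m_k}$, $\mathcal{E}^{m_k+1}$, $\mathcal{E}^{m_k-1}$---is precisely the proof strategy of that reference. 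Two small points you should make explicit if you write this out in full: first, for interior eigenvalues the blocks $(\lambda_\ell\mathcal{P}_\ell+\mathcal{N}_\ell)^n$ contain binomial terms $\binom{n}{j}\lambda_\ell^{n-j}\mathcal{N}_\ell^j$, and you need the (easy) observation that polynomial growth times $|\lambda_\ell|^n$ with $|\lambda_\ell|<1$ still tends to zero, so that $\mathcal{E}^{m_k}$ really converges to $\mathcal{P}_\varphi$ despite the nilpotents; second, in the recurrence step one must arrange $m_k\to\infty$ (e.g.\ via the pigeonhole differences, or by taking multiples of a common order when all peripheral eigenvalues are roots of unity), so that both the decaying part vanishes and $m_k-1\ge1$. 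Neither is a gap---both are routine---so the proof stands.
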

	\begin{proof}
		This is proved in Ref.~\cite[Sec.~6]{Wolf2012}\@.
	\end{proof}
A direct generalization of Fact~\ref{prop:spectral_channel} to quantum operations is given in Ref.~\cite[Proposition~1]{burgarth_quantum_2020}\@.
Moreover, generalizations of Fact~\ref{prop:spectral_channel} to completely positive operators on possibly infinite-dimensional Banach spaces can be found in Refs.~\cite{Groh1983, Groh1984}\@.
These results have been even further generalized in Refs.~\cite{Krengel1985, Eisner2015}\@.

Fact~\ref{prop:spectral_channel} concerns the characteristics of the spectrum as well as the spectral properties of a quantum channel, but does not discuss its eigenoperators.
We will be interested in the evolutions within the peripheral subspace of a quantum channel.
This motivates the following definition.
\begin{definition}[Recurrences and fixed points]
	The ``space of recurrences'' $\mathcal{X}(\mathcal{E})$ of a quantum channel $\mathcal{E}:\mathcal{B}(\mathscr{H})\rightarrow\mathcal{B}(\mathscr{H})$ is the complex span of all the eigenoperators of $\mathcal{E}$ belonging to its peripheral spectrum $\sigma_\varphi(\mathcal{E})$.
	That is,
	\begin{equation}
	\mathcal{X}(\mathcal{E})=\Span\{X\in\mathcal{B}(\mathscr{H}):\mathcal{E}(X)=\rme^{\rmi\phi}X,\text{ for some }\phi\in\mathbb{R}\}.
\end{equation}
$\mathcal{X}(\mathcal{E})$ contains as a subspace the ``space of fixed points'' $\mathcal{F}(\mathcal{E})$ of $\mathcal{E}$,
\begin{equation}
	\mathcal{F}(\mathcal{E})=\{X\in\mathcal{B}(\mathscr{H}):\mathcal{E}(X)=X\}.
\end{equation}
\end{definition}
\begin{fact}[Fixed-point states]
The space of fixed points $\mathcal{F}(\mathcal{E})$ of $\mathcal{E}$ is actually spanned by the fixed-point states of $\mathcal{E}$,
\begin{equation}
	\mathcal{F}(\mathcal{E})=\Span\{\rho\in\mathcal{T}(\mathscr{H}):\mathcal{E}(\rho)=\rho\}.
\end{equation}
Notice that every quantum channel has at least one fixed point by Fact~\ref{prop:spectral_channel}(i) and therefore $\dim\mathcal{F}(\mathcal{E})\ge1$.
\end{fact}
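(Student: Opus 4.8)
The plan is to reduce the statement to the single assertion that every \emph{Hermitian} fixed point of $\mathcal{E}$ is a real linear combination of fixed-point states. This reduction is immediate: a quantum channel is Hermiticity-preserving, $\mathcal{E}(X^\dagger)=\mathcal{E}(X)^\dagger$, as a consequence of positivity, so if $\mathcal{E}(X)=X$ then also $\mathcal{E}(X^\dagger)=X^\dagger$. Hence $\mathcal{F}(\mathcal{E})$ is closed under taking adjoints and is spanned by its Hermitian elements $\tfrac12(X+X^\dagger)$ and $\tfrac{1}{2\rmi}(X-X^\dagger)$. It therefore suffices to decompose an arbitrary Hermitian $X\in\mathcal{F}(\mathcal{E})$ into fixed-point states.

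First I would introduce the ergodic mean $\mathcal{E}_\infty=\lim_{N\to\infty}\frac1N\sum_{k=0}^{N-1}\mathcal{E}^k$ and argue that it is the spectral projection onto $\mathcal{F}(\mathcal{E})$. Using the spectral representation of Fact~\ref{prop:spectral_channel}(ii), the Cesàro averages of $\lambda_\ell^k$ converge to $1$ for $\lambda_\ell=1$ and to $0$ for every other eigenvalue (the remaining peripheral ones because $\frac1N\sum_{k}\lambda_\ell^k\to0$ when $|\lambda_\ell|=1$ and $\lambda_\ell\neq1$, the interior ones because $|\lambda_\ell|<1$ makes their contribution decay); since the $\lambda_\ell=1$ eigenvalue carries no nilpotent, the limit exists and equals $\mathcal{P}_{\lambda=1}$, whose range is exactly $\mathcal{F}(\mathcal{E})$. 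I would record two properties: $\mathcal{E}_\infty$ acts as the identity on $\mathcal{F}(\mathcal{E})$, because each $\mathcal{E}^k$ fixes $X$; and $\mathcal{E}\mathcal{E}_\infty=\mathcal{E}_\infty$ (the boundary term $\frac1N(\mathcal{E}^N-\mathcal{I})$ vanishes in the limit), so the range of $\mathcal{E}_\infty$ lies inside $\mathcal{F}(\mathcal{E})$. Crucially, the averages $\frac1N\sum_k\mathcal{E}^k$ are convex combinations of quantum channels, hence quantum channels, and the set of quantum channels is closed, so $\mathcal{E}_\infty$ is itself a quantum channel, in particular positive and trace-preserving.

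Then I would finish by splitting the Hermitian fixed point $X=X_+-X_-$ into its positive and negative parts $X_\pm\ge0$. The main obstacle is that $X_\pm$ are in general \emph{not} themselves fixed points, so one cannot read the states off $X$ directly; applying $\mathcal{E}_\infty$ is precisely what repairs this. Indeed $\mathcal{E}_\infty(X_\pm)\ge0$ by positivity, $\mathcal{E}_\infty(X_\pm)\in\mathcal{F}(\mathcal{E})$ by the range property, and $X=\mathcal{E}_\infty(X)=\mathcal{E}_\infty(X_+)-\mathcal{E}_\infty(X_-)$ by linearity. Whenever $\tr(X_\pm)=\tr(\mathcal{E}_\infty(X_\pm))>0$, the normalizations $\rho_\pm=\mathcal{E}_\infty(X_\pm)/\tr(X_\pm)$ are genuine fixed-point states (and if $\tr(X_\pm)=0$ the corresponding positive part vanishes), yielding $X=\tr(X_+)\rho_+-\tr(X_-)\rho_-$. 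This exhibits $X$ as a real combination of fixed-point states and, via the first paragraph, proves the claimed spanning property. The concluding remark $\dim\mathcal{F}(\mathcal{E})\ge1$ then follows because $\lambda=1\in\sigma(\mathcal{E})$ by Fact~\ref{prop:spectral_channel}(i) forces $\mathcal{F}(\mathcal{E})\neq\{0\}$, so the spanning set must contain at least one state.
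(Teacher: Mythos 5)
Your proof is correct. The paper offers no argument of its own here---it simply defers to Ref.~\cite[Corollary~6.5]{Wolf2012}---and your proposal is essentially a faithful reconstruction of that cited proof: the reduction to Hermitian fixed points via Hermiticity preservation, the Ces\`aro mean $\mathcal{E}_\infty=\lim_{N\to\infty}\frac{1}{N}\sum_{k=0}^{N-1}\mathcal{E}^k$ identified (using the absence of nilpotents on the peripheral spectrum) as a quantum channel projecting onto $\mathcal{F}(\mathcal{E})$, and the decomposition $X=\tr(X_+)\rho_+-\tr(X_-)\rho_-$ obtained by applying $\mathcal{E}_\infty$ to the positive and negative parts of a Hermitian fixed point.
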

\begin{proof}
This is shown in Ref.~\cite[Corollary~6.5]{Wolf2012}.
\end{proof}

The space of fixed points and the space of recurrences have been studied in detail in Ref.~\cite[Secs.~6.4--6.5]{Wolf2012} as well as Ref.~\cite{Wolf2010}\@.
In particular, see Ref.~\cite[Theorems~6.14 and~6.16]{Wolf2012} and Ref.~\cite[Theorem~8]{Wolf2010}\@.
See also Refs.~\cite{Amato2023, Amato2023a, Amato2024, Amato2024a}\@.
We summarize the results as follows.
\begin{fact}[Structure of the space of recurrences]\label{prop:recurrences}
	For every quantum channel $\mathcal{E}:\mathcal{B}(\mathscr{H})\rightarrow\mathcal{B}(\mathscr{H})$, there exists a decomposition of the Hilbert space
	\begin{equation}
		\mathscr{H}=\mathscr{H}_0\oplus\bigoplus_{k=0}^{K-1} \mathscr{H}_{k,1}\otimes\mathscr{H}_{k,2}
	\end{equation}
	in some basis, such that the space of recurrences $\mathcal{X}(\mathcal{E})$ has a decomposition of the form
	\begin{equation}
		\mathcal{X}(\mathcal{E})=0\oplus\bigoplus_{k=0}^{K-1} \mathcal{M}_{d_k}\otimes\rho_k,
		\label{eq:recurrent_space}
	\end{equation}
	where $0$ is a zero block of size $d_0\times d_0$ with $d_0=\dim(\mathscr{H}_0)$, $\mathcal{M}_{d_k}$ is a full $d_k\times d_k$ matrix algebra acting on $\mathscr{H}_{k,1}$ with $d_k=\dim\mathscr{H}_{k,1}$, and $\rho_k\in\mathcal{T}(\mathscr{H}_{k,2})$ is a positive density matrix $\rho_k>0$ acting on $\mathscr{H}_{k,2}$.
	Furthermore, the tensor product structure inside the direct sum comes from the reduction of a full matrix algebra to a von Neumann algebra factor (see Ref.~\cite{Lindblad1999} for details).
	This means that we can find $x_k\in\mathcal{B}(\mathscr{H}_{k,1})$ such that any $X\in\mathcal{X}(\mathcal{E})$ can be written as
	\begin{equation}
		X=0\oplus\bigoplus_{k=0}^{K-1} x_k\otimes\rho_k.
		\label{eq:recurrence}
	\end{equation}
	In this basis, the action of $\mathcal{E}$ on any recurrent operator $X\in\mathcal{X}(\mathcal{E})$ reads
	\begin{equation}
		\mathcal{E}(X)=0\oplus\bigoplus_{k=0}^{K-1} U_kx_{\pi(k)}U_k^\dagger\otimes\rho_k,
		\label{eq:recurrences_map}
	\end{equation}
	where $U_k\in\mathcal{B}(\mathscr{H}_{k,1})$ are $d_k\times d_k$ unitary matrices and $\pi$ is a permutation among sub-blocks $\mathcal{M}_{d_k}$ of equal dimensions.
	That is, $\pi$ permutes the elements of $\{0,\dots, K-1\}$ within each subset of indices $k$ that share the same matrix dimension $d_k$.
	\begin{proof}
		This is proved in Ref.~\cite[Theorems~6.14 and~6.16]{Wolf2012} and Ref.~\cite[Theorem~8]{Wolf2010}\@.
	\end{proof}
\end{fact}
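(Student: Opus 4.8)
The plan is to reduce everything to the peripheral part and then exploit that, on the space of recurrences, the channel acts as a $*$-automorphism of a finite-dimensional $C^*$-algebra whose structure is dictated by Artin--Wedderburn. By Fact~\ref{prop:spectral_channel} the space of recurrences is exactly the range of the peripheral projection, $\mathcal{X}(\mathcal{E})=\operatorname{range}\mathcal{P}_\varphi$, and on it $\mathcal{E}_\varphi$ is an invertible quantum channel whose inverse $\mathcal{E}_\varphi^{-1}$ is again a quantum channel [Fact~\ref{prop:spectral_channel}(iv)]. I would first use this reversibility to saturate the Kadison--Schwarz inequality: since both $\mathcal{E}_\varphi$ and $\mathcal{E}_\varphi^{-1}$ are completely positive and trace-preserving, their adjoints are mutually inverse unital Schwarz maps, and this forces every recurrent operator into the multiplicative domain, on which $\mathcal{E}_\varphi$ acts multiplicatively. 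This multiplicative-domain argument is the technical heart of the proof, and it shows simultaneously that $\mathcal{X}(\mathcal{E})$ is closed under a suitably weighted product and that $\mathcal{E}_\varphi$ restricts to a $*$-automorphism of the resulting algebra.

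Second, I would make the weighting explicit. Every channel has a fixed point by Fact~\ref{prop:spectral_channel}(i); let $\rho_\infty$ be one of maximal rank and $P=\operatorname{supp}\rho_\infty$ its support projection. A short positivity argument---using that peripheral eigenvalues have unit modulus, so that no weight can decay under $\mathcal{E}_\varphi$---shows that every peripheral eigenoperator satisfies $X=PXP$. This isolates the transient subspace $\mathscr{H}_0=(\operatorname{range}P)^\perp$ and lets me restrict to a channel with a faithful fixed point $\rho_\infty>0$. Passing to the modular (GNS) picture in which $\rho_\infty$ is the reference state turns the weighted product into the ordinary operator product, so the recurrences on $\operatorname{range}P$, stripped of the weight, form a genuine finite-dimensional $C^*$-subalgebra $\mathcal{A}\subseteq\mathcal{B}(\operatorname{range}P)$.

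Third, I would invoke the structure theory of finite-dimensional $C^*$-algebras. By Artin--Wedderburn, $\mathcal{A}\cong\bigoplus_k\mathcal{M}_{d_k}$, and its canonical representation on $\operatorname{range}P$ fixes the multiplicity decomposition $\operatorname{range}P=\bigoplus_k\mathscr{H}_{k,1}\otimes\mathscr{H}_{k,2}$, with $\mathcal{M}_{d_k}$ acting on $\mathscr{H}_{k,1}$ and the identity on the multiplicity space $\mathscr{H}_{k,2}$. Re-introducing the weight $\rho_\infty$ places a fixed positive density matrix $\rho_k\in\mathcal{T}(\mathscr{H}_{k,2})$, $\rho_k>0$, on each multiplicity factor, which is exactly the form~\eqref{eq:recurrent_space}--\eqref{eq:recurrence}. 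Finally, since $\mathcal{E}_\varphi$ is a $*$-automorphism of $\mathcal{A}$, it must permute the minimal central projections and can only identify blocks of equal dimension $d_k$; by Skolem--Noether, its restriction to each matched block is inner, i.e.\ a unitary conjugation $x\mapsto U_k x U_k^\dagger$. Combining the permutation $\pi$ of equal-size blocks with these inner automorphisms yields the action~\eqref{eq:recurrences_map}. The main obstacle I anticipate is the first step---rigorously establishing the Schwarz-equality/multiplicative-domain identity and its compatibility with the non-trivial peripheral phases $\rme^{\rmi\phi}\neq1$---since this is precisely what forces the rigid tensor-with-fixed-state form and rules out any residual coupling between the free factor $\mathcal{M}_{d_k}$ and the frozen factor $\rho_k$.
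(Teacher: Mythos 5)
Your route is the same one taken by the proof the paper relies on---the paper itself gives no self-contained argument but defers to Ref.~\cite[Theorems~6.14 and~6.16]{Wolf2012} and Ref.~\cite[Theorem~8]{Wolf2010}, whose proofs proceed precisely via restriction to the support of a maximal-rank fixed point, a Kadison--Schwarz/multiplicative-domain argument, Artin--Wedderburn, and block permutations composed with unitary conjugations---but your first step fails as literally stated, and it is the step you yourself call the technical heart. The maps $\mathcal{E}_\varphi$ and $\mathcal{E}_\varphi^{-1}$ are \emph{not} mutually inverse on $\mathcal{B}(\mathscr{H})$: by Fact~\ref{prop:spectral_channel}(iv) their composition is $\mathcal{P}_\varphi$, not the identity, so their adjoints compose to the unital completely positive \emph{projection} $\mathcal{P}_\varphi^\dagger\neq\mathcal{I}$ and are not mutually inverse Schwarz maps. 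Running the double-Schwarz argument with this weaker relation, a Heisenberg-picture peripheral element $A=\mathcal{P}_\varphi^\dagger(A)$ only yields the one-sided chain
\begin{equation}
A^\dagger A\le(\mathcal{E}_\varphi^{-1})^\dagger\Bigl(\bigl[\mathcal{E}_\varphi^\dagger(A)\bigr]^\dagger\mathcal{E}_\varphi^\dagger(A)\Bigr)\le\mathcal{P}_\varphi^\dagger(A^\dagger A),
\end{equation}
which does not close by itself: to obtain equality (and with it the multiplicative-domain property and the algebra structure of the recurrences) one must trace the outer inequality against a \emph{faithful} invariant state, and no such state exists on all of $\mathscr{H}$ whenever $\mathscr{H}_0\neq0$, which is exactly the generic situation the Fact covers (a faithful fixed point would itself lie in $\mathcal{X}(\mathcal{E})$ and would therefore have to vanish on $\mathscr{H}_0$).

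The repair is already contained in your own sketch, but your first two steps must be executed in the opposite order. First carry out your second step: pick a fixed point $\rho_\infty$ of maximal rank with support projection $P$, and prove that every peripheral eigenoperator satisfies $X=PXP$ (this part of your sketch is correct and standard). Only then, on the corner $P\mathcal{B}(\mathscr{H})P$, is $\rho_\infty$ faithful, and it is fixed by $\mathcal{E}_\varphi$, by $\mathcal{E}_\varphi^{-1}$, and by $\mathcal{P}_\varphi$ alike; positivity of $\mathcal{P}_\varphi^\dagger(A^\dagger A)-A^\dagger A$ together with $\tr[\rho_\infty(\mathcal{P}_\varphi^\dagger(A^\dagger A)-A^\dagger A)]=\tr[(\mathcal{P}_\varphi(\rho_\infty)-\rho_\infty)A^\dagger A]=0$ forces equality in the chain above, and faithfulness then upgrades this to the Schwarz equality $\mathcal{E}_\varphi^\dagger(A^\dagger A)=\bigl[\mathcal{E}_\varphi^\dagger(A)\bigr]^\dagger\mathcal{E}_\varphi^\dagger(A)$. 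With the steps in this order, the remainder of your outline---the modular dressing by $\rho_\infty^{1/2}$, Artin--Wedderburn for the resulting finite-dimensional $*$-algebra, the permutation of minimal central projections between blocks of equal dimension, and Skolem--Noether innerness on each block---goes through and yields Eqs.~\eqref{eq:recurrent_space}--\eqref{eq:recurrences_map}.
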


We now look into the structure of the spaces $\mathscr{H}_{k,1}$ in more detail and define the notion of de\-coherence-free subsystems as first introduced in Refs.~\cite{Shor1995, Plenio1997}\@.
See also Refs.~\cite{Knill2000, Zanardi2000, Kempe2001, bacon_decoherence_2001, Kribs2005, Shabani2005, Knill2006, Lidar2013, guan_structure_2018, Dash2023} for more details and applications of the decoherence-free subsystems in the context of quantum error correction.
\begin{definition}[Decoherence-free subsystem]
	Let $\mathcal{E}:\mathcal{B}(\mathscr{H})\rightarrow\mathcal{B}(\mathscr{H})$ be a finite-dimensional quantum channel. Fix the basis for $\mathscr{H}$, such that $\mathcal{X}(\mathcal{E})$ is decomposed as in Eq.~\eqref{eq:recurrent_space}\@. Then, each $\mathscr{H}_{k,1}$ with $\dim\mathscr{H}_{k,1}\ge2$ is called ``decoherence-free subsystem'' of $\mathscr{H}$ with respect to $\mathcal{E}$.
\end{definition}

In the following, we will just say that $\mathcal{E}$ has a decoherence-free subsystem.
The reason for this name is that we can factorize out the unitaries $U_k$ acting on $\mathscr{H}_{k,1}$ from $\mathcal{E}$ as $\mathcal{E}=\mathcal{U}\circ\tilde{\mathcal{E}}$.
Then, $\tilde{\mathcal{E}}$ acts trivially on $\mathscr{H}_{k,1}$ (up to possible permutations) while $\mathcal{U}$ preserves the coherence.
$\dim\mathscr{H}_{k,1}\ge2$ is demanded to allow the encoding of quantum information into $\mathscr{H}_{k,1}$.
If $\pi$ and all $U_k$ are trivial, one speaks of a \emph{noiseless} subsystem. See Refs.~\cite{Choi2006, guan_structure_2018}\@.
Note also that a decoherence-free subsystem $\mathscr{H}_{k,1}$ is called a \emph{``decoherence-free subspace''} (see e.g.~Refs.~\cite{Lidar1998,Lidar2003,guan_structure_2018}) in the literature, if $\dim\mathscr{H}_{k,2}=1$.

\begin{example}[Decoherence-free subsystem]\label{example:DFSS}
	For concreteness, we here introduce two examples of quantum channels with decoherence-free subsystems.
	\begin{enumerate}[(i)]
		\item\label{it:DFSS1} Let $\mathscr{H}=\mathbb{C}_1^2\otimes\mathbb{C}_2^2$, i.e.~a two-qubit system, where we associate the first qubit with $\mathbb{C}_1^2$ and the second qubit with $\mathbb{C}_2^2$. Define the quantum channel $\mathcal{E}^\Omega$ by its action on an arbitrary $A\in\mathcal{B}(\mathbb{C}_1^2\otimes\mathbb{C}_2^2)$ as $\mathcal{E}^\Omega(A)=\tr_2(A)\otimes\Omega$ for some fixed $\Omega\in\mathcal{T}(\mathbb{C}_2^2)$, where $\tr_2$ denotes the partial trace over $\mathbb{C}_2^2$. For example, take $\Omega=\frac{1}{2}\mathbb{1}$, so that $\mathcal{E}^{\mathbb{1}/2}(A)=\tr_2(A)\otimes\frac{1}{2}\mathbb{1}$. Then, $\mathcal{B}(\mathbb{C}_1^2)$ is a decoherence-free subsystem. This can be easily verified by looking at the decomposition of the space of recurrences of $\mathcal{E}^\Omega$ in the computational basis. For all $X\in\mathcal{X}(\mathcal{E}^\Omega)$, we have $X=x\otimes\Omega$, and hence $\mathcal{E}^\Omega(X)=x\otimes\Omega$, where $x\in\mathcal{M}_2$. Thus, $\mathcal{B}(\mathbb{C}_1^2)$ is even a noiseless subsystem.
		\item\label{it:DFSS2} The above example is rather simplistic, as it involves neither unitary action on the decoherence-free subsystem nor permutation of sub-blocks. Now, we introduce a quantum channel, which involves both. Let $\mathscr{H}=\mathbb{C}_1^2\otimes\mathbb{C}_2^2\otimes\mathbb{C}_3^2$, i.e.~a three-qubit system, and define the quantum channel $\mathcal{E}^\mathrm{df}$ as follows. Let $A\in\mathcal{B}(\mathbb{C}_2^2\otimes\mathbb{C}_3^2)$. Then, for fixed unitary operators $U_0,U_1\in\mathcal{B}(\mathbb{C}_2^2)$ and fixed density operators $\rho_0,\rho_1\in\mathcal{T}(\mathbb{C}_3^2)$, define the quantum channel $\mathcal{E}^\mathrm{df}$ by $\mathcal{E}^\mathrm{df}(\ket{0}\bra{0}\otimes A)=\ket{1}\bra{1}\otimes U_1\tr_3(A)U_1^\dagger\otimes\rho_1$, $\mathcal{E}^\mathrm{df}(\ket{1}\bra{1}\otimes A)=\ket{0}\bra{0}\otimes U_0\tr_3(A)U_0^\dagger\otimes\rho_0$, $\mathcal{E}^\mathrm{df}(\ket{0}\bra{1}\otimes A)=0$, and $\mathcal{E}^\mathrm{df}(\ket{1}\bra{0}\otimes A)=0$. Here, $\mathcal{B}(\mathbb{C}_2^2)$ is a decoherence-free subsystem, on which $\mathcal{E}^\mathrm{df}$ still acts unitarily. In the decomposition of the space of recurrences, we have two non-zero sub-blocks of $\mathbb{C}_2^2\otimes\mathbb{C}_3^2$, which are indexed by $\ket{0}\bra{0}$ and $\ket{1}\bra{1}$ on $\mathcal{B}(\mathbb{C}_1^2)$. The channel $\mathcal{E}^\mathrm{df}$ permutes these two sub-blocks.
	\end{enumerate}
\end{example}

In the next subsection, we will look at a very special class of quantum channels, namely \emph{ergodic quantum channels}.
As we will see, the ergodicity is a spectral property, so the acquired language from this subsection will help to understand this feature better.

\subsection{Ergodic Quantum Channels}\label{sec:ergodic_channels}
The goal of this section is to clarify the spectral structure of the peripheral part of an ergodic quantum channel.
For concreteness, we interlude our analysis with the introduction of some examples, which explicitly demonstrate the features we prove.
Let us start by introducing what ergodic quantum channels are.
\begin{definition}[Ergodic and mixing quantum channels]\label{def:ergodic_mixing}
	If the fixed point of a quantum channel $\mathcal{E}:\mathcal{B}(\mathscr{H})\rightarrow\mathcal{B}(\mathscr{H})$ is unique, i.e.~if $\dim\mathcal{F}(\mathcal{E})=1$, the channel $\mathcal{E}$ is called ``ergodic.'' If, in addition, the unique fixed point is the only element of the space of recurrences of $\mathcal{E}$, i.e.~if $\dim\mathcal{X}(\mathcal{E})=1$, the channel $\mathcal{E}$ is called ``mixing.''
\end{definition}

For a detailed analysis of ergodic and mixing quantum channels, see e.g.~Refs.~\cite{Burgarth2007, burgarth_ergodic_2013}\@.
For further information, we refer to the literature on measure-preserving dynamical systems~\cite{Krengel1985, Eisner2015}, where the weak mixing (strong mixing) property of an endomorphism on a probability space corresponds to ergodicity (mixing) in Definition~\ref{def:ergodic_mixing}\@.
In particular, see Refs.~\cite{Chan2001, Moothathu2009, Burke2016} for relationships of these mathematical concepts to quantum channels.
Notice that every mixing channel is ergodic but the converse is not true, as the following examples show.
\begin{example}[Ergodic and mixing quantum channels]\label{ex:ergodic_mixing}
The following examples show mixing, ergodic, and non-ergodic channels. They shall give some intuition for the above definitions.
	\begin{enumerate}[(i)]
		\item\label{it:example_mixing} Let $\rho_*\in\mathcal{T}(\mathscr{H})$. Then, the quantum channel $\mathcal{P}^{\rho_*}(\rho)=\tr(\rho)\rho_*$ is mixing, with $\mathcal{F}(\mathcal{P}^{\rho_*})=\mathcal{X}(\mathcal{P}^{\rho_*})=\{a\rho_*:a\in\mathbb{C}\}$, which is one-dimensional. Notice that, for $\rho_*=\mathbb{1}/d$, the channel $\mathcal{P}^{\mathbb{1}/d}$ is the completely depolarizing channel, which naturally appeared in our discussion on the system DD in Sec.~\ref{sec:system_DD}\@.
		\item\label{it:example_erg2} Consider the qubit channel $\mathcal{E}^{\updownarrow}(\rho)=\ket{0}\bra{1}\rho\ket{1}\bra{0}+\ket{1}\bra{0}\rho\ket{0}\bra{1}$ from Refs.~\cite{Burgarth2007, burgarth_ergodic_2013}\@. It has a unique fixed-point state $\rho_*^{\updownarrow}=(\ket{0}\bra{0}+\ket{1}\bra{1})/2=\mathbb{1}/2$. Therefore, it is ergodic with $\mathcal{F}(\mathcal{E}^{\updownarrow})=\{a\mathbb{1}:a\in\mathbb{C}\}$. However, it is not mixing, since it has an eigenoperator $\vartheta^{\updownarrow}=\ket{0}\bra{0}-\ket{1}\bra{1}=Z$ corresponding to eigenvalue $\lambda=-1$ in its space of recurrences $\mathcal{X}(\mathcal{E}^{\updownarrow})$. Its peripheral spectrum consists of $\sigma_\varphi(\mathcal{E}^\updownarrow)=\{1,-1\}$, and $\mathcal{X}(\mathcal{E}^{\updownarrow})=\Span\{\mathbb{1},Z\}=\Span\{\ket{0}\bra{0},\ket{1}\bra{1}\}$, which is two-dimensional.
		\item\label{it:example_erg3_1} Consider the qutrit channel $\mathcal{E}^\Lsh(\rho)=\ket{0}\bra{1}\rho\ket{1}\bra{0}+\ket{1}\bra{0}\rho\ket{0}\bra{1}+\ket{0}\bra{2}\rho\ket{2}\bra{0}$. It has a unique fixed-point state $\rho_*^\Lsh=(\ket{0}\bra{0}+\ket{1}\bra{1})/2$. Therefore, it is ergodic with $\mathcal{F}(\mathcal{E}^\Lsh)=\{a\rho_*^\Lsh:a\in\mathbb{C}\}$. However, it is not mixing, since it has an eigenoperator $\vartheta^\Lsh=\ket{0}\bra{0}-\ket{1}\bra{1}$ corresponding to eigenvalue $\lambda=-1$ in its space of recurrences $\mathcal{X}(\mathcal{E}^\Lsh)$. Its peripheral spectrum consists of $\sigma_\varphi(\mathcal{E}^\Lsh)=\{1,-1\}$, and $\mathcal{X}(\mathcal{E}^\Lsh)=\Span\{\rho_*^\Lsh,\vartheta^\Lsh\}=\Span\{\ket{0}\bra{0},\ket{1}\bra{1}\}$, which is two-dimensional.
		\item\label{it:example_erg3_2} Consider the qutrit channel $\mathcal{E}^{\triangle}(\rho)=\ket{0}\bra{1}\rho\ket{1}\bra{0}+\ket{1}\bra{2}\rho\ket{2}\bra{1}+\ket{2}\bra{0}\rho\ket{0}\bra{2}$. It has a unique fixed-point state $\rho_*^{\triangle}=(\ket{0}\bra{0}+\ket{1}\bra{1}+\ket{2}\bra{2})/3=\mathbb{1}/3$. Therefore, it is ergodic with $\mathcal{F}(\mathcal{E}^\triangle)=\{a\rho_*^\triangle:a\in\mathbb{C}\}$. However, it is not mixing, since it has eigenoperators $\vartheta_\pm^\triangle=\ket{0}\bra{0}+\rme^{\mp2\pi\rmi/3}\ket{1}\bra{1}+\rme^{\pm2\pi\rmi/3}\ket{2}\bra{2}$ corresponding to eigenvalues $\lambda=\rme^{\pm2\pi\rmi/3}$ in its space of recurrences $\mathcal{X}(\mathcal{E}^\triangle)$. Its peripheral spectrum consists of $\sigma_\varphi(\mathcal{E}^\triangle)=\{1,\rme^{2\pi\rmi/3},\rme^{-2\pi\rmi/3}\}$, and $\mathcal{X}(\mathcal{E}^\triangle)=\Span\{\rho_*^\triangle,\vartheta_+^\triangle,\vartheta_-^\triangle\}=\Span\{\ket{0}\bra{0},\ket{1}\bra{1},\ket{2}\bra{2}\}$, which is three-dimensional.
		\item\label{it:example_erg3_3} Consider the qutrit channel $\mathcal{E}^\square(\rho)=\sum_{i=1}^4K_i\rho K_i^\dag$, with four Kraus operators
\begin{equation}
K_1
=\begin{pmatrix}
0&0&0\\
0&0&0\\
1&0&0
\end{pmatrix},
\ \ %
K_2
=\begin{pmatrix}
0&0&0\\
0&0&0\\
0&1&0
\end{pmatrix},
\ \ %
K_3
=\sqrt{p}
\begin{pmatrix}
0&0&1\\
0&0&0\\
0&0&0
\end{pmatrix},
\ \ %
K_4
=\sqrt{1-p}
\begin{pmatrix}
0&0&0\\
0&0&1\\
0&0&0
\end{pmatrix},
\end{equation}
where $p\in(0,1)$. 
It swaps two states
\begin{equation}
\rho_0^\square
=\begin{pmatrix}
p&0&0\\
0&1-p&0\\
0&0&0
\end{pmatrix},
\qquad
\rho_1^\square
=\begin{pmatrix}
0&0&0\\
0&0&0\\
0&0&1
\end{pmatrix},
\end{equation}
as $\mathcal{E}^\square(\rho_0^\square)=\rho_1^\square$ and $\mathcal{E}^\square(\rho_1^\square)=\rho_0^\square$, and its space of recurrences reads $\mathcal{X}(\mathcal{E}^\square)=\Span\{\rho_0^\square,\rho_1^\square\}$, with a non-trivial block structure with $d_1=d_2=1$ and $\dim\mathscr{H}_{1,2}=2$, $\dim\mathscr{H}_{2,2}=1$ in the decomposition~\eqref{eq:recurrent_space}.
Its peripheral spectrum consists of $\sigma_\varphi(\mathcal{E}^\square)=\{1,-1\}$, with the peripheral eigenoperators $\rho_*^\square=(\rho_0^\square+\rho_1^\square)/2$ and $\vartheta^\square=\rho_0^\square-\rho_1^\square$ corresponding to the peripheral eigenvalues $\lambda=1$ and $-1$, respectively. 
It has a unique fixed-point state $\rho_*^\square$, and therefore, it is ergodic with $\mathcal{F}(\mathcal{E}^\square)=\{a\rho_*^\square:a\in\mathbb{C}\}$, but it is not mixing.
Its space of recurrences is given by $\mathcal{X}(\mathcal{E}^\square)=\Span\{\rho_*^\square,\vartheta^\square\}=\Span\{\rho_0^\square,\rho_1^\square\}$, which is two-dimensional.
\item\label{it:example_dephasing} An interesting example of quantum channel that is neither ergodic nor mixing is the completely dephasing channel $\mathcal{E}^\mathrm{d}$. For a $d$-dimensional $\mathscr{H}$, it acts on basis operators of $\mathcal{B}(\mathscr{H})$ as $\mathcal{E}^\mathrm{d}(\ket{i}\bra{j})=\delta_{ij}\ket{i}\bra{i}$ ($i,j=0,\ldots,d-1$). The fixed points of $\mathcal{E}^\mathrm{d}$ are given by convex combinations of the states of the form $\ket{i}\bra{i}$ ($i=0,\ldots,d-1$). Therefore, $\mathcal{F}(\mathcal{E}^\mathrm{d})=\Span\{\ket{i}\bra{i}:i=0,\ldots,d-1\}$, which is $d$-dimensional.
	\end{enumerate}
\end{example}
Several criteria for mixing channels are presented in Refs.~\cite{Burgarth2007, burgarth_ergodic_2013}\@.
In addition, it is interesting to note that, for infinitely divisible channels~\cite{Chruscinski2022, Amato2023}, including Markovian CPTP semi-groups generated by Gorini-Kossakowski-Lindblad-Sudarshan (GKLS) generators~\cite{Alicki2007, Nielsen2010, Wolf2012, Chruscinski2017, Chruscinski2022}, the properties of ergodicity and mixing coincide, i.e.~every ergodic CPTP semi-group is also mixing~\cite{Amato2023, Amato2024a}\@.
For semi-group channels, simple criteria to check the ergodicity exist~\cite{Yoshida2023,Zhang2024}\@.

The following proposition clarifies the spectral properties of ergodic quantum channels and provides the main result of this subsection.
A priori, the definition of ergodicity only concerns the fixed-point spectrum of $\mathcal{E}$.
Nevertheless, we show that the restriction of $\mathcal{E}$ having a unique fixed point suffices to characterize the entire spectral decomposition of its peripheral part.
\begin{proposition}[Spectral structure of ergodic quantum channel]\label{prop:ergodic_peripheral}
Let $\mathcal{E}:\mathcal{B}(\mathscr{H})\rightarrow\mathcal{B}(\mathscr{H})$ be an ergodic quantum channel with peripheral part $\mathcal{E}_\varphi=\sum_{|\lambda_\ell|=1}\lambda_\ell\mathcal{P}_\ell$. 
Then, the peripheral spectrum of $\mathcal{E}$ is given by
\begin{equation}
\sigma_\varphi(\mathcal{E})=\{\rme^{2\pi\rmi\ell/K}\}_{\ell=0,\ldots,K-1},
\label{eq:peripheral_spectrum_ergodic}
\end{equation}
with some $K\in\mathbb{N}$, and the peripheral eigenvalues $\lambda_\ell\in\sigma_\varphi(\mathcal{E})$ are not degenerate.
Moreover, the spectral projection $\mathcal{P}_\ell$ belonging to a peripheral eigenvalue $\lambda_\ell=\omega_\ell=\rme^{2\pi\rmi\ell/K}$ ($\ell=0,\ldots,K-1$) is given by
\begin{equation}
\mathcal{P}_\ell=R_\ell\tr(L_\ell^\dag{}\bullet{}),
\label{eq:rank_1_projections}
\end{equation}
with
\begin{gather}
R_\ell=0\oplus\frac{1}{K}\bigoplus_{k=0}^{K-1}\omega_\ell^k\rho_k,
\label{eq:Rk}\\
L_\ell=M\oplus\bigoplus_{k=0}^{K-1}\omega_\ell^k\mathbb{1}_k,
\label{eq:Lk}
\end{gather}
where $\rho_k\,(>0)$ and $\mathbb{1}_k$ are a positive density matrix and the identity, respectively, on the $k$-th block in the space of recurrences $\mathcal{X}(\mathcal{E})$ decomposed as Eq.~\eqref{eq:recurrent_space} with $d_k=1$, and the blocks are relabeled such that the permutation $\pi$ in the recurrence~\eqref{eq:recurrences_map} under the action of $\mathcal{E}$ simply shifts the blocks as $\pi(k)=k+1\mod K$.
Furthermore, $M$ is a matrix that is not necessarily zero.
\end{proposition}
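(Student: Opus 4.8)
The plan is to transport the problem into the finite-dimensional space of recurrences via Fact~\ref{prop:recurrences}, use ergodicity to collapse the block structure to its simplest possible form, and then diagonalize the resulting map explicitly. First I would invoke Fact~\ref{prop:recurrences} to write $\mathcal{X}(\mathcal{E})=0\oplus\bigoplus_k\mathcal{M}_{d_k}\otimes\rho_k$ together with the action $\mathcal{E}(X)=0\oplus\bigoplus_k U_kx_{\pi(k)}U_k^\dagger\otimes\rho_k$, so that the fixed-point condition $\mathcal{E}(X)=X$ becomes $x_k=U_kx_{\pi(k)}U_k^\dagger$ for every $k$. Since $\pi$ only permutes blocks of equal dimension, I would split it into cycles and iterate this relation once around a cycle of length $L$ formed from blocks of common dimension $d$: this gives $x_{k_0}=Wx_{k_0}W^\dagger$ for a single unitary monodromy $W=U_{k_0}U_{\pi(k_0)}\cdots$, with the remaining $x$'s in the cycle determined by $x_{k_0}$. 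The solutions are thus the commutant $\{x:xW=Wx\}$, whose dimension is $\ge d\ge1$, so every cycle contributes at least one dimension to $\mathcal{F}(\mathcal{E})$. Consequently $\dim\mathcal{F}(\mathcal{E})=1$ forces a single cycle with $d=1$; relabelling the blocks turns $\pi$ into the shift $\pi(k)=k+1\bmod K$.

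With all $d_k=1$ the coefficients $x_k$ are scalars and the $U_k$ are phases that cancel in the conjugation, so $\mathcal{E}$ restricted to $\mathcal{X}(\mathcal{E})$ is exactly the cyclic shift $\rho_j\mapsto\rho_{j-1}$ on the $K$-dimensional span of the embedded $\rho_k$. The spectrum of a cyclic shift is the set of $K$-th roots of unity $\{\rme^{2\pi\rmi\ell/K}\}_{\ell=0,\dots,K-1}$, each simple, which yields simultaneously the peripheral spectrum~\eqref{eq:peripheral_spectrum_ergodic} and the non-degeneracy of the peripheral eigenvalues. Diagonalizing the shift by a discrete Fourier transform produces the right eigenoperators $R_\ell\propto\sum_k\omega_\ell^k\rho_k$, which after the normalization $1/K$ coincide with the stated~\eqref{eq:Rk}.

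Because the peripheral eigenvalues are simple and $\mathcal{E}_\varphi$ is diagonalizable by Fact~\ref{prop:spectral_channel}(ii), each spectral projection is rank one and can be written $\mathcal{P}_\ell=R_\ell\tr(L_\ell^\dagger{}\bullet{})$, where $L_\ell$ is the corresponding left eigenoperator normalized so that $\tr(L_\ell^\dagger R_\ell)=1$. The candidate $L_\ell=\sum_k\omega_\ell^k\mathbb{1}_k$ from~\eqref{eq:Lk} passes the biorthonormality check immediately: using $\tr\rho_k=1$ reduces $\tr(L_\ell^\dagger R_{\ell'})$ to the orthogonality relation $\frac1K\sum_k\overline{\omega_\ell^k}\,\omega_{\ell'}^k=\delta_{\ell\ell'}$ of the Fourier characters, giving $\tr(L_\ell^\dagger R_{\ell'})=\delta_{\ell\ell'}$.

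The delicate step—and the part I expect to be the main obstacle—is to confirm that these $L_\ell$ are genuine \emph{left} eigenoperators of $\mathcal{E}$, equivalently that $\sum_\ell\mathcal{P}_\ell=\mathcal{P}_\varphi$, so that each $\mathcal{P}_\ell$ annihilates the full non-peripheral (decaying plus nilpotent) subspace rather than merely the other recurrences. This is precisely where the non-recurrent summand $\mathscr{H}_0$ must be handled with care: an operator supported on $\mathscr{H}_0$ can still carry a non-zero peripheral component, so the left eigenoperators are not determined by their restriction to the recurrence blocks alone and must be pinned down globally. I would resolve this by constructing $L_\ell$ as $\mathcal{P}_\varphi^\dagger$ applied to the block identities, or equivalently via the Cesàro average $\lim_{N\to\infty}\frac1N\sum_{n=0}^{N-1}\omega_\ell^{-n}\mathcal{E}^n$, exploiting that $\mathcal{P}_\varphi$ is a quantum channel (Fact~\ref{prop:spectral_channel}(iii)) and that $\mathcal{E}_\varphi$ is invertible on its range (Fact~\ref{prop:spectral_channel}(iv)); verifying $\mathcal{E}^\dagger L_\ell=\overline{\omega_\ell}\,L_\ell$ then identifies $\mathcal{P}_\ell$ as the true spectral projection and completes the proof.
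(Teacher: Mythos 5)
Your first two steps are correct, and the first takes a genuinely different route from the paper. To rule out $d_c\ge2$ and multiple cycles, the paper quotes the list of peripheral eigenvalues with multiplicities from Ref.~\cite[Theorem~9]{Wolf2010} and observes that ergodicity forbids the degeneracy of the eigenvalue $1$ that either feature would force; you instead derive the same conclusion directly from the fixed-point equation $x_k=U_kx_{\pi(k)}U_k^\dagger$, iterating around each cycle to get $x_{k_0}=W_cx_{k_0}W_c^\dagger$ and counting $\dim\mathcal{F}(\mathcal{E})=\sum_c\dim\{x:[x,W_c]=0\}\ge\sum_cd_c\ge\#\{\text{cycles}\}$. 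That argument is more self-contained, and your Fourier diagonalization producing $R_\ell$ and the biorthonormality $\tr(L_\ell^\dag R_{\ell'})=\delta_{\ell\ell'}$ coincide with the paper's.

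The step you single out as delicate is exactly where the problem lies, but the situation is sharper than a deferred verification: the check $\mathcal{E}^\dag(L_\ell)=\overline{\omega_\ell}\,L_\ell$ \emph{fails} for the $L_\ell$ of Eq.~\eqref{eq:Lk} whenever the fixed point $\rho_*$ is not of full rank, i.e.\ whenever $\mathscr{H}_0\neq\{0\}$. Trace preservation gives $\mathcal{E}^\dag(\mathbb{1})=\mathbb{1}$, and ergodicity makes $\lambda=1$ simple for $\mathcal{E}^\dag$ as well, so the left eigenoperator at $\lambda=1$ is necessarily the \emph{full} identity and $\mathcal{P}_0=\rho_*\tr({}\bullet{})$, not the trace against the truncated identity $0\oplus\bigoplus_k\mathbb{1}_k$. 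The paper's own channel $\mathcal{E}^\Lsh$ of Example~\ref{ex:ergodic_mixing}\eqref{it:example_erg3_1} makes this concrete: $(\mathcal{E}^\Lsh)^\dag(\ket{0}\bra{0}+\ket{1}\bra{1})=\mathbb{1}$, the Ces\`aro mean (the true $\mathcal{P}_0$) sends $\ket{2}\bra{2}\mapsto\rho_*^\Lsh$ whereas Eqs.~\eqref{eq:rank_1_projections} and~\eqref{eq:Lk} send it to $0$ --- indeed that map does not even commute with $\mathcal{E}^\Lsh$, so it cannot be a spectral projection --- and the true left eigenoperator at $\lambda=-1$ is $\ket{0}\bra{0}-\ket{1}\bra{1}-\ket{2}\bra{2}$ rather than $\vartheta^\Lsh$ as asserted in Example~\ref{example:peripheral_projections}\eqref{it:peripheral_qutrit2}. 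So you have put your finger on a genuine gap that the paper's own proof shares: there, Eq.~\eqref{eq:rank_1_projections} is deduced from rank-one-ness plus biorthonormality, which, as you note, fixes the left eigenoperators only up to components invisible to the $R_{\ell'}$ --- precisely the components living on the non-recurrent part. Your repair, $L_\ell:=\mathcal{P}_\varphi^\dag\bigl(0\oplus\bigoplus_k\omega_\ell^k\mathbb{1}_k\bigr)$ (equivalently the Ces\`aro construction), is the correct one and provably yields genuine left eigenoperators; but these differ from Eq.~\eqref{eq:Lk} in general, so what your plan establishes, once completed, is a \emph{corrected} proposition --- Eq.~\eqref{eq:Lk} as written is valid only in the irreducible, Evans--H{\o}egh-Krohn case~\cite{Evans1978}, and must otherwise be replaced by $\mathcal{P}_\varphi^\dag(L_\ell)$ --- rather than the proposition as stated. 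Note that the uncorrected $L_\ell$ are also used in Corollary~\ref{corollary:ergodic_commute} and in the sufficiency half of Theorem~\ref{thm:bath_DD}, so the correction needs to be propagated there.
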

\begin{proof}
Note that the permutation $\pi$ in the recurrence~\eqref{eq:recurrences_map} under the action of a general quantum channel $\mathcal{E}$ consists of separate cycles, each of which is labeled by $c$ and is a cycle of length $K_c$ [thus, $\sum_cK_c=K$ gives the number of blocks in the decomposition~\eqref{eq:recurrent_space}].
Each cycle $c$ permutes sub-blocks of the same dimension $d_c$.
In Ref.~\cite[Theorem~9]{Wolf2010}, it is proven that the peripheral eigenvalues of a general quantum channel $\mathcal{E}$ are given by
\begin{equation}
\{\mu_i^{(c)}\overline{\mu}_j^{(c)}\rme^{2\pi\rmi m/K_c}\}_{c;\,i,j=1,\ldots,d_c;\,m=0,\ldots,K_c-1},
\label{eq:peripheral_spectrum}
\end{equation}
where $\mu_i^{(c)}$ are some phases, i.e.~$|\mu_i^{(c)}|=1$ ($i=1,\ldots,d_c$).
There are $\sum_cK_cd_c^2$ eigenvalues and they exhaust all the peripheral eigenvalues of $\mathcal{E}$ with algebraic multiplicities.
Observe now that $i=j\,(=1,\ldots,d_c)$ with $m=0$ for every cycle $c$ yield generally degenerate eigenvalue $1$.
The assumption that the channel $\mathcal{E}$ is ergodic (i.e.~with non-degenerate eigenvalue $1$) thus requires that $d_c=1$ with a unique cycle.
This proves that the peripheral spectrum of an ergodic quantum channel $\mathcal{E}$ is given by Eq.~\eqref{eq:peripheral_spectrum_ergodic}, and the peripheral eigenvalues are not degenerate.
It is easy to check that the operator $R_k$ defined in Eq.~\eqref{eq:Rk} is the eigenoperator of the ergodic quantum channel $\mathcal{E}$ belonging to the eigenvalue $\omega_\ell$, i.e.
\begin{equation}
\mathcal{E}(R_\ell)=\omega_\ell R_\ell,\quad\ell=0,\ldots,K-1,
\end{equation}
if the blocks in the space of recurrences $\mathcal{X}(\mathcal{E})$ are relabeled such that the permutation $\pi$ in the recurrence~\eqref{eq:recurrences_map} under the action of $\mathcal{E}$ simply shifts the blocks as $\pi(k)=k+1\mod K$.
Moreover, we have the orthogonality
\begin{equation}
\tr(L_\ell^\dag R_{\ell'})=\delta_{\ell\ell'},\quad\ell,\ell'=0,\ldots,K-1.
\end{equation}
The peripheral eigenvalues $\omega_\ell$ are non-degenerate, and the corresponding spectral projections $\mathcal{P}_\ell$ are rank-1.
The spectral projections $\mathcal{P}_\ell$ are thus given by Eq.~\eqref{eq:rank_1_projections}.
Notice that the recurrences $X\in\mathcal{X}(\mathcal{E})$ are of the form of Eq.~\eqref{eq:recurrence}; in particular, they may have a zero block on $\mathscr{H}_0$.
Therefore, $L_\ell$ can act as a (generally) non-zero matrix $M$ on this block without affecting the action of $\mathcal{P}_\ell$ onto $\mathcal{X}(\mathcal{E})$.
We also refer to the forthcoming work~\cite{Amato2024b} for an explicit specification.
\end{proof}
Proposition~\ref{prop:ergodic_peripheral} is a generalization of Ref.~\cite[Theorem~3.1]{Evans1978}, which holds for so-called \emph{irreducible} quantum channels (also see Ref.~\cite{Carbone2019}).
They are ergodic quantum channels whose fixed point is of full rank.
We can directly infer the following corollary from Proposition~\ref{prop:ergodic_peripheral}\@.
\begin{corollary}\label{corollary:ergodic_commute}
Let $\mathcal{E}:\mathcal{B}(\mathscr{H})\rightarrow\mathcal{B}(\mathscr{H})$ be an ergodic quantum channel with peripheral spectral projections $\mathcal{P}_\ell=R_\ell\tr(L_\ell^\dagger{}\bullet{})$ as in Proposition~\ref{prop:ergodic_peripheral}\@. Then,
\begin{enumerate}[(i)]
	\item\label{it:ergodic_commute} $[L_\ell^\dagger, R_\ell]=0$, $\forall\ell$, and
	\item\label{it:ergodic_fixed} the fixed-point state $\rho_*\in\mathcal{T}(\mathscr{H})$ of $\mathcal{E}$ is given by $\rho_*=R_0= L_\ell^\dagger R_\ell=\sqrt{R_\ell^\dag R_\ell}$, $\forall\ell$.
\end{enumerate}
\end{corollary}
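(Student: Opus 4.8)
The plan is to prove both statements by direct substitution of the explicit block-diagonal expressions for $R_\ell$ and $L_\ell$ supplied by Proposition~\ref{prop:ergodic_peripheral} and then computing block by block. Since $\omega_\ell=\rme^{2\pi\rmi\ell/K}$ is a root of unity, $\overline{\omega_\ell^k}=\omega_\ell^{-k}$, so that
\[
L_\ell^\dagger=0\oplus\bigoplus_{k=0}^{K-1}\omega_\ell^{-k}\mathbb{1}_k.
\]

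For part~(\ref{it:ergodic_commute}), I would compute the two products $L_\ell^\dagger R_\ell$ and $R_\ell L_\ell^\dagger$. Because both operators are block-diagonal in the same decomposition, the products are evaluated block by block; in the $k$-th block the scalar phases combine as $\omega_\ell^{-k}\omega_\ell^{k}=1$ and $\mathbb{1}_k$ acts trivially, so each block reduces to $\tfrac{1}{K}\rho_k$ irrespective of the order of multiplication. Hence
\[
L_\ell^\dagger R_\ell=R_\ell L_\ell^\dagger=0\oplus\frac{1}{K}\bigoplus_{k=0}^{K-1}\rho_k,
\]
which immediately yields $[L_\ell^\dagger,R_\ell]=0$.

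For part~(\ref{it:ergodic_fixed}), the key observation is that the right-hand side of the last display is exactly $R_0$ (the case $\ell=0$, where $\omega_0=1$), so $L_\ell^\dagger R_\ell=R_0$ for every $\ell$. I would then verify that $R_0$ is the fixed-point state: it is positive since each $\rho_k>0$, has unit trace because $\tr(R_0)=\tfrac{1}{K}\sum_k\tr(\rho_k)=1$, and satisfies $\mathcal{E}(R_0)=\omega_0 R_0=R_0$ by the eigenvalue relation in Proposition~\ref{prop:ergodic_peripheral}; by ergodicity this fixed point is unique, so $\rho_*=R_0$. Finally, for the square-root identity I would compute $R_\ell^\dagger R_\ell$ block by block: using $\rho_k^\dagger=\rho_k$ the phases cancel again to give $R_\ell^\dagger R_\ell=0\oplus\tfrac{1}{K^2}\bigoplus_k\rho_k^2$, and taking the positive square root uses $\rho_k>0$ to conclude $\sqrt{R_\ell^\dagger R_\ell}=0\oplus\tfrac{1}{K}\bigoplus_k\rho_k=R_0=\rho_*$.

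There is no genuine obstacle here; the computation is routine once the explicit forms are inserted. The only two points meriting care are the cancellation of the root-of-unity phases $\omega_\ell^{\pm k}$ in every product, and the use of the strict positivity $\rho_k>0$ to identify $\sqrt{\rho_k^2}$ with $\rho_k$ in the last step.
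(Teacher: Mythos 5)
Your proof is correct and takes essentially the same route as the paper, whose own proof simply states that both claims ``follow immediately from Proposition~\ref{prop:ergodic_peripheral}''---your block-by-block computation is exactly the omitted immediate verification. The only difference is that you prove the identity $\rho_*=\sqrt{R_\ell^\dag R_\ell}$ directly from the explicit forms, whereas the paper defers it to a cited lemma; your version is self-contained and equally valid.
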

\begin{proof}
These follow immediately from Proposition~\ref{prop:ergodic_peripheral}\@. 
See also Ref.~\cite[Lemma~6]{Burgarth2007} for the last equality.
\end{proof}

\begin{example}[Spectral structure of ergodic quantum channel]\label{example:peripheral_projections}
Let us look at the ergodic channels introduced in Example~\ref{ex:ergodic_mixing}\@.
	\begin{enumerate}[(i)]
		\item\label{it:peripheral_mixing} The quantum channel $\mathcal{P}^{\rho_*}$ has only one peripheral eigenoperator, namely the fixed-point state $\rho_*$, onto which $\mathcal{P}^{\rho_*}$ projects. Hence, its only peripheral projection is $\mathcal{P}^{\rho_*}=\rho_*\tr(\mathbb{1}{}\bullet{})$ itself. The channel $\mathcal{P}^{\rho_*}$ is irreducible if and only if $\rho_*$ is of full rank.
		\item\label{it:peripheral_qubit} The qubit channel $\mathcal{E}^\updownarrow$ has two peripheral eigenoperators $\rho_*^\updownarrow=\mathbb{1}/2$ and $\vartheta^\updownarrow=Z$ corresponding to eigenvalues $\lambda_0=1$ and $\lambda_1=-1$, respectively, and its peripheral projections are given by $\mathcal{P}_0^\updownarrow=\frac{1}{2} \mathbb{1}\tr(\mathbb{1}{}\bullet{})$ and $\mathcal{P}_1^\updownarrow=\frac{1}{2}Z\tr(Z{}\bullet{})$. The channel $\mathcal{E}^\updownarrow$ is irreducible as its fixed point $\rho_*^\updownarrow=\mathbb{1}/2$ is of full rank.
		\item\label{it:peripheral_qutrit2} The qutrit channel $\mathcal{E}^\Lsh$ has two peripheral eigenoperators $\rho_*^\Lsh=(\ket{0}\bra{0}+\ket{1}\bra{1})/2$ and $\vartheta^\Lsh=\ket{0}\bra{0}-\ket{1}\bra{1}$ corresponding to eigenvalues $\lambda_0=1$ and $\lambda_1=-1$, respectively, and its peripheral projections are given by $\mathcal{P}_0^\Lsh=R_0^\Lsh\tr(L_0^{\Lsh\dag}{}\bullet{})$ with $R_0^\Lsh=L_0^\Lsh/2=\rho_*^\Lsh$, and $\mathcal{P}_1^\Lsh=\frac{1}{2}\vartheta^\Lsh\tr(\vartheta^\Lsh{}\bullet{})$. The channel $\mathcal{E}^\Lsh$ is ergodic but not irreducible as its fixed point $\rho_*^\Lsh=(\ket{0}\bra{0}+\ket{1}\bra{1})/2$ is not of full rank.
		\item The qutrit channel $\mathcal{E}^{\triangle}$ has three peripheral eigenoperators $\rho_*^{\triangle}=(\ket{0}\bra{0}+\ket{1}\bra{1}+\ket{2}\bra{2})/3=\mathbb{1}/3$ and $\vartheta_\pm^\triangle=\ket{0}\bra{0}+\rme^{\mp2\pi\rmi/3}\ket{1}\bra{1}+\rme^{\pm2\pi\rmi/3}\ket{2}\bra{2}$ corresponding to eigenvalues $\lambda_0=1$ and $\lambda_\pm=\rme^{\pm2\pi\rmi/3}$, respectively. Its peripheral projections are given by $\mathcal{P}_0^\triangle=\frac{1}{3}\mathbb{1}\tr(\mathbb{1}{}\bullet{})$ and $\mathcal{P}_\pm^\triangle=\frac{1}{3}\vartheta_\pm^\triangle\tr(\vartheta_\mp^\triangle{}\bullet{})$. The channel $\mathcal{E}^\triangle$ is irreducible as its fixed point $\rho_*^\triangle=\mathbb{1}/3$ is of full rank.
		\item\label{it:peripheral_qutrit1} The qutrit channel $\mathcal{E}^\square$ has two peripheral eigenoperators $\rho_*^\square=(\rho_0^\square+\rho_1^\square)/2$ and $\vartheta^\square=\rho_0^\square-\rho_1^\square$ corresponding to eigenvalues $\lambda_0=1$ and $\lambda_1=-1$, respectively. 
Its peripheral projections are given by $\mathcal{P}_\ell^\square=R_\ell^\square\tr(L_\ell^{\square\dag}{}\bullet{})$, with
\begin{gather}
R_0^\square
=\frac{1}{2}
\begin{pmatrix}
p&0&0\\
0&1-p&0\\
0&0&1
\end{pmatrix},
\qquad
L_0^\square
=\begin{pmatrix}
1&0&0\\
0&1&0\\
0&0&1
\end{pmatrix},\\
R_1^\square
=\frac{1}{2}
\begin{pmatrix}
p&0&0\\
0&1-p&0\\
0&0&-1
\end{pmatrix},
\qquad
L_1^\square
=\begin{pmatrix}
1&0&0\\
0&1&0\\
0&0&-1
\end{pmatrix}.
\end{gather}
The channel $\mathcal{E}^\square$ is irreducible if and only if $p\in(0,1)$.
\end{enumerate}
\end{example}
The examples discussed in this paper are listed in Table~\ref{tab:examples}\@.
\begin{table}
	\caption{\label{tab:examples}Collection of the examples of quantum channels discussed in this paper. See Examples~\ref{example:DFSS} and~\ref{ex:ergodic_mixing} for the definitions of the selected quantum channels. In this table, we summarize some important spectral properties of the channels. DFSS refers to decoherence-free subsystem.}
	\centering
	\begin{tabular}{|c|c|c|c|c|c|}
		\hline
		\bf{Channel} & \bf{Mixing} & \bf{Irreducible} & \bf{Ergodic} & \bf{No DFSS} & \bf{Refer to}\\
		\hhline{:======:}
		$\mathcal{E}^\Omega$ & \no & \no & \no & \no & Example~\ref{example:DFSS}(i)\\
		\hline
		$\mathcal{E}^\mathrm{df}$ & \no & \no & \no & \no & Example~\ref{example:DFSS}(ii)\\
		\hline
		$\mathcal{E}^\mathrm{d}$ & \no & \no & \no & \yes & Example~\ref{ex:ergodic_mixing}(vi)\\
		\hline
		$\mathcal{E}^\Lsh$ & \no & \no & \yes & \yes & Example~\ref{ex:ergodic_mixing}(iii), \ref{example:peripheral_projections}(iii)\\
		\hline
		$\mathcal{E}^\updownarrow$ & \no & \yes & \yes & \yes & Example~\ref{ex:ergodic_mixing}(ii), \ref{example:peripheral_projections}(ii)\\
		\hline
		$\mathcal{E}^\triangle$ & \no & \yes & \yes & \yes & Example~\ref{ex:ergodic_mixing}(iv), \ref{example:peripheral_projections}(iv)\\
		\hline
		$\mathcal{E}^\square$ & \no & \yes & \yes & \yes & Example~\ref{ex:ergodic_mixing}(v), \ref{example:peripheral_projections}(v)\\
		\hline
		$\mathcal{P}^{\rho_*}$ & \yes & \yes\ iff $\rho_*$ full rank& \yes & \yes & Example~\ref{ex:ergodic_mixing}(i), \ref{example:peripheral_projections}(i)\\
		\hline
	\end{tabular}
\end{table}

Importantly, we emphasize that ergodic quantum channels cannot have any decoherence-free subsystem, which is a direct consequence of Proposition~\ref{prop:ergodic_peripheral}\@.
Vice versa, a channel with a decoherence-free subsystem cannot be ergodic.
On the other hand, there are examples of channels which are neither ergodic nor have a decoherence-free subsystem.
See Example~\ref{ex:ergodic_mixing}\eqref{it:example_dephasing}\@.
Such channels are characterized by the following lemma.
On the space of recurrences $\mathcal{X}(\mathcal{E})$, such a channel decomposes into a direct sum of ergodic channels.
\begin{lemma}[Quantum channel without a decoherence-free subsystem]\label{lemma:no_DFS}
 	Let $\mathcal{E}$ be a quantum channel without a decoherence-free subsystem.
 	Let its peripheral spectrum $\sigma_\varphi(\mathcal{E})$ contain at least one eigenvalue with degeneracy strictly larger than $1$ (thus, $\mathcal{E}$ is \emph{not} ergodic according to Proposition~\ref{prop:ergodic_peripheral}).
 	Then, $\mathcal{E}$ acts as a direct sum of ergodic quantum channels $\mathcal{E}_c$ on $\mathcal{X}(\mathcal{E})$. That is, $\mathcal{E}(X)=\bigoplus_c\mathcal{E}_c(X)$ for any $X\in\mathcal{X}(\mathcal{E})$.
\end{lemma}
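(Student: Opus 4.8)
The plan is to combine the structure theorem for the space of recurrences (Fact~\ref{prop:recurrences}) with the no-decoherence-free-subsystem hypothesis to pin down the action of $\mathcal{E}$ on $\mathcal{X}(\mathcal{E})$, and then to read off the ergodic summands from the cycle structure of the permutation $\pi$. First I would note that, by the definition of a decoherence-free subsystem, the absence of any such subsystem is equivalent to $\dim\mathscr{H}_{k,1}=1$, i.e.\ $d_k=1$, for every block $k$ in the decomposition~\eqref{eq:recurrent_space}. Then each matrix algebra $\mathcal{M}_{d_k}$ collapses to $\mathbb{C}$, a generic recurrent operator reads
\begin{equation}
	X=0\oplus\bigoplus_{k=0}^{K-1}x_k\,\rho_k,\qquad x_k\in\mathbb{C},
\end{equation}
and in Eq.~\eqref{eq:recurrences_map} the $1\times1$ unitaries $U_k$ become phases acting trivially on the scalars $x_k$. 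Hence the action of $\mathcal{E}$ on $\mathcal{X}(\mathcal{E})$ reduces to a pure relabelling of coefficients,
\begin{equation}
	\mathcal{E}(X)=0\oplus\bigoplus_{k=0}^{K-1}x_{\pi(k)}\,\rho_k .
\end{equation}

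Next I would invoke the cycle decomposition of $\pi$ used in the proof of Proposition~\ref{prop:ergodic_peripheral}: $\pi$ splits into disjoint cycles $c$ of lengths $K_c$ with $\sum_c K_c=K$. Grouping blocks by cycle, I set $\mathscr{H}_c=\bigoplus_{k\in c}\mathscr{H}_{k,2}$ (using $d_k=1$, so $\mathscr{H}_{k,1}\otimes\mathscr{H}_{k,2}\cong\mathscr{H}_{k,2}$), which gives $\mathscr{H}=\mathscr{H}_0\oplus\bigoplus_c\mathscr{H}_c$ and the block-diagonal splitting $\mathcal{X}(\mathcal{E})=\bigoplus_c\mathcal{X}_c$ with $\mathcal{X}_c=\Span\{\rho_k:k\in c\}$. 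Since $\pi$ maps each cycle onto itself, the coefficient relabelling leaves every $\mathcal{X}_c$ invariant, so $\mathcal{E}|_{\mathcal{X}(\mathcal{E})}$ already respects the direct-sum structure.

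For each cycle I would then produce a genuine ergodic channel $\mathcal{E}_c$ on $\mathcal{B}(\mathscr{H}_c)$ matching $\mathcal{E}$ on $\mathcal{X}_c$, by taking
\begin{equation}
	\mathcal{E}_c(\sigma)=\sum_{k\in c}\tr(P_k\sigma)\,\rho_{\pi^{-1}(k)},
\end{equation}
where $P_k$ is the orthogonal projection onto $\mathscr{H}_{k,2}$. This map is manifestly CPTP, sends $\rho_k\mapsto\rho_{\pi^{-1}(k)}$, and therefore reproduces the coefficient shift, so that $\mathcal{E}(X)=\bigoplus_c\mathcal{E}_c(X)$ for all $X\in\mathcal{X}(\mathcal{E})$ (recall that recurrent operators have no component on $\mathscr{H}_0$).

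The step I expect to be the main obstacle is verifying that each $\mathcal{E}_c$ is genuinely \emph{ergodic}, rather than merely acting ergodically on $\mathcal{X}_c$. I would settle this by observing that the range of $\mathcal{E}_c$ is contained in $\Span\{\rho_k:k\in c\}=\mathcal{X}_c$, so any peripheral eigenoperator lies in $\mathcal{X}_c$ and thus $\mathcal{X}(\mathcal{E}_c)\subseteq\mathcal{X}_c$; on $\mathcal{X}_c$ the map $\mathcal{E}_c$ acts as a single $K_c$-cycle on the coefficients, whose only fixed vector is the constant one. Hence the unique fixed-point state is $\frac{1}{K_c}\sum_{k\in c}\rho_k$, giving $\dim\mathcal{F}(\mathcal{E}_c)=1$, i.e.\ $\mathcal{E}_c$ is ergodic (its peripheral spectrum being $\{\rme^{2\pi\rmi m/K_c}\}_{m=0}^{K_c-1}$, consistent with Proposition~\ref{prop:ergodic_peripheral}). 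Finally, the standing hypothesis that some peripheral eigenvalue of $\mathcal{E}$ has degeneracy larger than $1$ forces at least two cycles, so that the direct sum $\bigoplus_c\mathcal{E}_c$ is nontrivial.
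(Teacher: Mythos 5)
Your proof is correct and follows essentially the same route as the paper's: the absence of a decoherence-free subsystem forces $d_k=1$ for all $k$, the action of $\mathcal{E}$ on $\mathcal{X}(\mathcal{E})$ reduces to a permutation of scalar coefficients, and the cycle decomposition of $\pi$ yields the ergodic summands acting on disjoint blocks. The only notable difference is that you construct each $\mathcal{E}_c$ explicitly as a CPTP map on $\mathcal{B}(\mathscr{H}_c)$ and verify its ergodicity directly via the fixed points of a cyclic permutation, whereas the paper leaves the channels $\mathcal{E}_c$ implicit and instead cites the peripheral-spectrum formula~\eqref{eq:peripheral_spectrum} to conclude non-degeneracy; your version is somewhat more self-contained on that point.
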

\begin{proof}
 	Consider the action of $\mathcal{E}$ on $X\in\mathcal{X}(\mathcal{E})$ as per Eq.~\eqref{eq:recurrences_map}.
	Since by assumption $\mathcal{E}$ does not have a decoherence-free subsystem, it immediately follows that $d_k=1$, $\forall k$.
	Then, the peripheral eigenvalues of $\mathcal{E}$ are given by
	\begin{equation}
	\{\rme^{2\pi\rmi m/K_c}\}_{c;\,m=0,\ldots,K_c-1}.
	\end{equation}
	See Eq.~\eqref{eq:peripheral_spectrum}.
	Notice that each cycle $c$ gives rise to the peripheral spectrum 
	\begin{equation}
	\sigma_\varphi^{(c)}(\mathcal{E})=\{\rme^{2\pi\rmi m/K_c}\}_{m=0,\ldots,K_c-1}
	\end{equation}
	without degeneracy, and acts as an ergodic channel $\mathcal{E}_c$.
	The cycles $c$ are disjoint and $\mathcal{E}_c$ with different $c$ act on blocks belonging to different cycles.
	Therefore, $\mathcal{E}(X)$ is a direct sum of $\mathcal{E}_c(X)$.
\end{proof}
Note that such a quantum channel $\mathcal{E}$ without a decoherence-free subsystem does not necessarily act as a direct sum of ergodic quantum channels on operators $A\not\in\mathcal{X}(\mathcal{E})$.

In summary, in this section we have characterized
\begin{enumerate}[(i)]
	\item the spectral structure of an ergodic quantum channel (Proposition~\ref{prop:ergodic_peripheral}), and
	\item the action of a quantum channel without decoherence-free subsystem on its space of recurrences (Lemma~\ref{lemma:no_DFS}).
\end{enumerate}
These results allow us to study the physical problem of bath DD by kicking with quantum channels (instead of unitaries as discussed in Sec.~\ref{sec:system_DD}).
In the next section, we identify the necessary and sufficient conditions for the bath DD to work.

\section{Bath Dynamical Decoupling}\label{sec:bath_DD}
In the bath DD, we are allowed to employ a CPTP operation to frequently kick the bath, instead of cycles of unitary kicks.
This is legitimate because we do not care about the state of the bath at the end of a decoupling sequence.
What matters is the state of the system, and it does not matter even if the coherence of the bath is destroyed by frequent non-unitary kicks.
We have already seen such a possibility in Sec.~\ref{sec:system_DD}, on the basis of the quantum Zeno dynamics induced by general quantum operations~\cite{burgarth_quantum_2020} (see also Refs.~\cite{Becker2021,Moebus2023}).
Let us recapitulate a result from Ref.~\cite{burgarth_quantum_2020}, that is important for our following discussion.
\begin{fact}[Quantum Zeno dynamics induced by general quantum channels]\label{fact:QZD}
Let $\mathcal{E}:\mathcal{B}(\mathscr{H})\rightarrow\mathcal{B}(\mathscr{H})$ be a quantum channel and let $\mathcal{H}=[H,{}\bullet{}]$ be the adjoint representation of a Hamiltonian $H\in\mathcal{B}(\mathscr{H})$. Then, for all $t\in\mathbb{R}$ and for all $n\in\mathbb{N}$,
	\begin{equation}
		\left(\mathcal{E}\rme^{-\rmi\frac{t}{n}\mathcal{H}}\right)^n=\mathcal{E}_\varphi^n\rme^{-\rmi t\mathcal{H}_\mathrm{Z}^\mathcal{E}} +\mathcal{O}(1/n)\quad\mathrm{as}\quad n\to\infty,
		\label{eq:Zeno_Limit}
	\end{equation}
	where $\mathcal{E}_\varphi$ is the peripheral part of $\mathcal{E}$, and
	\begin{equation}
		\mathcal{H}_\mathrm{Z}^\mathcal{E}=\sum_{|\lambda_\ell|=1}\mathcal{P}_\ell\mathcal{H}\mathcal{P}_\ell,
		\label{eq:Zeno_Hamiltonian}
	\end{equation}
	with $\{\mathcal{P}_\ell\}$ the peripheral projections of $\mathcal{E}$.
\end{fact}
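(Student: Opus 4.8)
The plan is to prove the statement in two movements: first show that, as $n\to\infty$, the iterated map collapses onto the peripheral block of $\mathcal{E}$, and then average out the part of $\mathcal{H}$ that connects \emph{distinct} peripheral eigenvalues. Throughout I would use the spectral representation of Fact~\ref{prop:spectral_channel}, writing $\mathcal{E}=\mathcal{E}_\varphi+\mathcal{E}_D$ with $\mathcal{E}_\varphi=\sum_{|\lambda_\ell|=1}\lambda_\ell\mathcal{P}_\ell$ the peripheral part and $\mathcal{E}_D=\sum_{|\lambda_\ell|<1}(\lambda_\ell\mathcal{P}_\ell+\mathcal{N}_\ell)$ the strictly contracting remainder, whose spectral radius is some $r<1$. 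Abbreviating $W_n=\mathcal{E}\rme^{-\rmi\frac{t}{n}\mathcal{H}}$ and using $\rme^{-\rmi\frac{t}{n}\mathcal{H}}=\mathcal{I}+\order{1/n}$, the map $W_n$ is an $\order{1/n}$ perturbation of $\mathcal{E}$, and the whole argument is an estimate of $W_n^n$.

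For the first movement I would invoke the persistence of the spectral gap. For large $n$ the spectrum of $W_n$ splits into a cluster inside the disc of radius $r'=(1+r)/2<1$ (the perturbation of $\sigma(\mathcal{E}_D)$) and a cluster near $|\lambda|=1$ (the perturbation of $\sigma_\varphi(\mathcal{E})$). Letting $\Pi_n$ be the Riesz projection onto the outer cluster, analytic perturbation theory gives $\Pi_n=\mathcal{P}_\varphi+\order{1/n}$. Splitting $W_n^n=\Pi_nW_n^n\Pi_n+(\mathcal{I}-\Pi_n)W_n^n(\mathcal{I}-\Pi_n)$ and bounding the second term by $C(r')^n\to0$ reduces the problem to $(\Pi_nW_n\Pi_n)^n$. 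Up to $\order{1/n}$ this compressed one-step map equals $\mathcal{E}_\varphi\rme^{-\rmi\frac{t}{n}\mathcal{P}_\varphi\mathcal{H}\mathcal{P}_\varphi}$ on the range of $\mathcal{P}_\varphi$, since $\mathcal{P}_\varphi\mathcal{E}=\mathcal{E}_\varphi=\mathcal{E}\mathcal{P}_\varphi$ and each excursion into and out of the range of $\mathcal{I}-\mathcal{P}_\varphi$ costs an extra factor $1/n$.

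The second movement is the averaging. On the peripheral subspace $\mathcal{E}_\varphi$ is invertible by Fact~\ref{prop:spectral_channel}(iv), so I would pass to the interaction picture via $(\mathcal{E}_\varphi B)^n=\mathcal{E}_\varphi^n\prod_{k=n-1}^{0}(\mathcal{E}_\varphi^{-k}B\mathcal{E}_\varphi^k)$ with $B=\rme^{-\rmi\frac{t}{n}\mathcal{P}_\varphi\mathcal{H}\mathcal{P}_\varphi}$. The compressed iterate then becomes $\mathcal{E}_\varphi^n$ times an ordered product of factors $\rme^{-\rmi\frac{t}{n}\mathcal{H}_k}$ with $\mathcal{H}_k=\mathcal{E}_\varphi^{-k}\mathcal{P}_\varphi\mathcal{H}\mathcal{P}_\varphi\mathcal{E}_\varphi^k=\sum_{\ell,\ell'}(\lambda_{\ell'}/\lambda_\ell)^k\,\mathcal{P}_\ell\mathcal{H}\mathcal{P}_{\ell'}$. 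The decisive estimate is the discrete average $\frac{1}{n}\sum_{k=0}^{n-1}(\lambda_{\ell'}/\lambda_\ell)^k$, which equals $1$ for $\ell=\ell'$ and is $\order{1/n}$ for $\ell\ne\ell'$ by the geometric bound $|\sum_{k=0}^{n-1}z^k|\le 2/|1-z|$, valid because distinct peripheral eigenvalues satisfy $\lambda_{\ell'}/\lambda_\ell\ne1$. Hence the time-averaged generator converges to $\sum_\ell\mathcal{P}_\ell\mathcal{H}\mathcal{P}_\ell=\mathcal{H}_\mathrm{Z}^\mathcal{E}$, the eigenvalue-changing contributions being killed by exactly this rotating-phase cancellation. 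A discrete Duhamel/Trotter expansion of the ordered product, together with $\comm{\mathcal{H}_\mathrm{Z}^\mathcal{E}}{\mathcal{E}_\varphi}=0$, then assembles $\Pi_nW_n^n\Pi_n=\mathcal{E}_\varphi^n\rme^{-\rmi t\mathcal{H}_\mathrm{Z}^\mathcal{E}}+\order{1/n}$, which is the claim.

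The hard part will be the uniform bookkeeping that upgrades these two $o(1)$-type convergences to the sharp $\order{1/n}$ remainder stated in the Fact. One must control the second- and higher-order terms of the ordered product uniformly in $n$ while the generators $\mathcal{H}_k$ do not commute, and the constants in the geometric-sum bounds must absorb the non-normality of $\mathcal{E}_\varphi$ (its spectral projections $\mathcal{P}_\ell$ are idempotent but not orthogonal, so $\|\mathcal{P}_\ell\|$ enters). Interleaving this with the first movement, so that leakage out of the peripheral block and its return are simultaneously accounted for at order $1/n$, is the delicate point; everything else is the standard averaging (rotating-wave) mechanism.
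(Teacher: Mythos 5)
First, a point of reference: the paper does not prove Fact~\ref{fact:QZD} at all --- its ``proof'' is a one-line citation to Corollary~1 of Ref.~\cite{burgarth_quantum_2020}, so your attempt can only be compared with that reference's argument. Your two-movement plan is, in essence, that argument. The splitting of $\mathcal{E}$ into peripheral plus strictly contracting parts, the exponential suppression of the contracting block, the passage to the interaction picture via the invertibility of $\mathcal{E}_\varphi$ on its range (Fact~\ref{prop:spectral_channel}(iv)), the computation $\mathcal{E}_\varphi^{-k}\mathcal{P}_\varphi\mathcal{H}\mathcal{P}_\varphi\mathcal{E}_\varphi^{k}=\sum_{\ell,\ell'}(\lambda_{\ell'}/\lambda_\ell)^k\,\mathcal{P}_\ell\mathcal{H}\mathcal{P}_{\ell'}$, the uniform geometric bound $|\sum_{k=0}^{n-1}z^k|\le 2/|1-z|$ (uniform because the peripheral spectrum is finite, so $\min_{\ell\neq\ell'}|1-\lambda_{\ell'}/\lambda_\ell|>0$), and the commutation $[\mathcal{H}_\mathrm{Z}^\mathcal{E},\mathcal{E}_\varphi]=0$ are exactly the right mechanisms, correctly deployed.

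What keeps the proposal from being a proof is precisely the part you defer, and it is not mere bookkeeping --- at two places an idea is needed that you only gesture at. (i) In the first movement you approximate the compressed one-step map $\Pi_nW_n\Pi_n$ to accuracy $\mathcal{O}(1/n)$; but an $\mathcal{O}(1/n)$ error \emph{per step} accumulates to $\mathcal{O}(1)$ over $n$ steps, so as stated this yields no control of $(\Pi_nW_n\Pi_n)^n$. The rescue is structural rather than quantitative: after intertwining $\Pi_n$ with $\mathcal{P}_\varphi$ by the standard similarity $S_n=\mathcal{I}+\mathcal{O}(1/n)$, the dangerous $\mathcal{O}(1/n)$ one-step errors are commutators of the form $[\,\cdot\,,\mathcal{P}_\varphi]$, and any such commutator is annihilated when compressed by $\mathcal{P}_\varphi$ on both sides; hence the effective one-step error within the peripheral block is $\mathcal{O}(1/n^2)$, which is what actually permits taking the $n$-th power. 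Your phrase ``each excursion costs an extra factor $1/n$'' points at this, but the claim must be exhibited, not asserted. (ii) In the second movement, averaging the \emph{first-order} term $\frac{1}{n}\sum_k\mathcal{H}_k$ does not control the ordered product: the second-order terms number $\sim n^2$ and are each $\mathcal{O}(1/n^2)$, hence $\mathcal{O}(1)$ in total. What kills them is summation by parts (Abel summation) against the bounded partial sums $S_m=\sum_{k\le m}(\lambda_{\ell'}/\lambda_\ell)^k$ inside a discrete Duhamel expansion, using that the partial products flanking each factor $\mathcal{H}_k-\mathcal{H}_\mathrm{Z}^\mathcal{E}$ change by only $\mathcal{O}(1/n)$ per step; this is where the sharp $\mathcal{O}(1/n)$ rate in Eq.~\eqref{eq:Zeno_Limit} is won or lost. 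You explicitly name this as ``the delicate point'' and stop there. So: the strategy is sound and is essentially the cited reference's proof, but as written it is a well-aimed plan with the decisive estimates missing, not a complete proof.
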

	\begin{proof}
		This is proved in Ref.~\cite[Corollary~1]{burgarth_quantum_2020}\@.
	\end{proof}

Fact~\ref{fact:QZD} has later been generalized to the infinite-dimensional Banach space setting in Refs.~\cite{Moebus2019, Bernad2020, Becker2021}\@. 
We make the following definition for bath DD\@.
\begin{definition}[Bath dynamical decoupling to work]\label{def:bath_DD}
	Let $\mathcal{E}_2$ be a quantum channel acting on $\mathcal{B}(\mathscr{H}_2)$ and let $\mathcal{E}=\mathcal{I}_1\otimes\mathcal{E}_2$ be a quantum channel acting on $\mathcal{B}(\mathscr{H})$ with $\mathcal{I}_1$ the identity on $\mathcal{B}(\mathscr{H}_1)$. Then, we say that ``bath dynamical decoupling works,'' if for any Hamiltonian $\mathcal{H}$, which is generally decomposed as Eq.~\eqref{eq:Hamiltonian_action}, the limit evolution as in Eq.~\eqref{eq:Zeno_Limit} is governed by the Zeno Hamiltonian $\mathcal{H}_\mathrm{Z}^{\mathcal{E}}$ of the form
	\begin{equation}
		\mathcal{H}_\mathrm{Z}^{\mathcal{E}}=\left(\mathcal{H}_1+\sum_i c_i \mathpzc{h}_1^{(i)}\right)\otimes\mathcal{I}_2,
		\label{eq:bath_DD_works}
	\end{equation}
	with some $c_i\in\mathbb{R}$, where $\mathpzc{h}_1^{(i)}=[h_1^{(i)},{}\bullet{}]$, and $\mathcal{I}_2$ is the identity map on $\mathcal{B}(\mathscr{H}_2)$.
\end{definition}
The Zeno Hamiltonian $\mathcal{H}_\mathrm{Z}^\mathcal{E}$ in Eq.~\eqref{eq:bath_DD_works} does not contain any interaction between the subsystems $\mathscr{H}_1$ and $\mathscr{H}_2$.
Therefore, the evolution governed by $\mathcal{H}_\mathrm{Z}^{\mathcal{E}}$ is indeed decoupled.
Nevertheless, we allow the system part of the interaction Hamiltonian, i.e.~$h_1^{(i)}$, to survive the DD, as long as it does not couple the system with the bath.
We will see the condition for suppressing even this part to keep solely the system Hamiltonian $\mathcal{H}_1$.

We now prove that \emph{ergodic quantum channels} are exactly the ones which allow the bath DD to work.
\begin{theorem}[Condition for bath dynamical decoupling]\label{thm:bath_DD}
	Bath dynamical decoupling with the quantum channel $\mathcal{E}=\mathcal{I}_1\otimes\mathcal{E}_2$ works, if and only if $\mathcal{E}_2$ is ergodic.
	The coefficients $c_i$ in the decoupled Hamiltonian~\eqref{eq:bath_DD_works} are given by $c_i=\tr(h_2^{(i)}\rho_*)$, where $\rho_*\in\mathcal{T}(\mathscr{H}_2)$ is the unique fixed-point state of $\mathcal{E}_2$, satisfying $\mathcal{E}_2(\rho_*)=\rho_*$.
\end{theorem}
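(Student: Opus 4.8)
The plan is to reduce everything to the explicit formula for the Zeno Hamiltonian supplied by Fact~\ref{fact:QZD}, namely $\mathcal{H}_\mathrm{Z}^\mathcal{E}=\sum_{|\lambda_\ell|=1}\mathcal{P}_\ell\mathcal{H}\mathcal{P}_\ell$, and then to evaluate it on the spectral data of $\mathcal{E}_2$. First I would record that, since $\mathcal{E}=\mathcal{I}_1\otimes\mathcal{E}_2$ and $\mathcal{I}_1$ contributes only the eigenvalue $1$ (with projection $\mathcal{I}_1$), the peripheral spectrum of $\mathcal{E}$ coincides with $\sigma_\varphi(\mathcal{E}_2)$ and each peripheral projection factorizes as $\mathcal{P}_\ell=\mathcal{I}_1\otimes\mathcal{P}_\ell^{(2)}$, where $\mathcal{P}_\ell^{(2)}$ is the corresponding spectral projection of $\mathcal{E}_2$. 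Substituting the decomposition~\eqref{eq:Hamiltonian_action} of $\mathcal{H}$ then reduces the whole problem to computing the four bath superoperators $\sum_\ell\mathcal{P}_\ell^{(2)}(\,\cdot\,)\mathcal{P}_\ell^{(2)}$ sandwiching, respectively, $\mathcal{I}_2$, the bath Hamiltonian $\mathcal{H}_2=[H_2,{}\bullet{}]$, the left multiplication $X\mapsto h_2^{(i)}X$, and the commutator $\mathpzc{h}_2^{(i)}=[h_2^{(i)},{}\bullet{}]$. Throughout, every identity is to be read on the peripheral subspace $\mathcal{B}(\mathscr{H}_1)\otimes\mathcal{X}(\mathcal{E}_2)$, on which $\mathcal{P}_\varphi^{(2)}=\sum_\ell\mathcal{P}_\ell^{(2)}$ plays the role of $\mathcal{I}_2$.

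For sufficiency ($\mathcal{E}_2$ ergodic $\Rightarrow$ bath decoupling works), I would invoke Proposition~\ref{prop:ergodic_peripheral} to write each $\mathcal{P}_\ell^{(2)}=R_\ell\tr(L_\ell^\dagger{}\bullet{})$ as a rank-one map, together with Corollary~\ref{corollary:ergodic_commute}, which gives $[L_\ell^\dagger,R_\ell]=0$ and $R_\ell L_\ell^\dagger=\rho_*$ for every $\ell$. Then the four pieces evaluate cleanly. The $\mathcal{I}_2$-piece gives $\sum_\ell\mathcal{P}_\ell^{(2)}=\mathcal{P}_\varphi^{(2)}$, the identity on the peripheral subspace, so the free system term survives as $\mathcal{H}_1\otimes\mathcal{I}_2$. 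The $\mathcal{H}_2$-piece and the interaction commutator piece both reduce to the scalar $\tr(L_\ell^\dagger[H_2,R_\ell])=\tr([R_\ell,L_\ell^\dagger]H_2)=0$ (and its analogue with $h_2^{(i)}$), hence vanish. The left-multiplication piece produces the scalar $\tr(L_\ell^\dagger h_2^{(i)}R_\ell)=\tr(R_\ell L_\ell^\dagger h_2^{(i)})=\tr(h_2^{(i)}\rho_*)$, which is independent of $\ell$ and real; summing over $\ell$ collapses it to $\tr(h_2^{(i)}\rho_*)\,\mathcal{P}_\varphi^{(2)}$. Collecting the terms yields exactly $\mathcal{H}_\mathrm{Z}^\mathcal{E}=(\mathcal{H}_1+\sum_i c_i\mathpzc{h}_1^{(i)})\otimes\mathcal{I}_2$ with $c_i=\tr(h_2^{(i)}\rho_*)$, which also proves the claimed formula for the coefficients.

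For necessity I would argue the contrapositive: if $\mathcal{E}_2$ is not ergodic, then some Hamiltonian violates~\eqref{eq:bath_DD_works}, and I split on whether $\mathcal{E}_2$ has a decoherence-free subsystem. If it does, I pick $H=\mathbb{1}_1\otimes H_2$ with $H_2$ diagonal in the eigenbasis $\{\ket{u_i}\}$ of the unitary $U_k$ acting on a block $\mathscr{H}_{k,1}$ with $\dim\mathscr{H}_{k,1}\ge2$, with distinct eigenvalues $a_i$; the off-diagonal peripheral eigenoperators $E_{ij}=\ket{u_i}\bra{u_j}\otimes\rho_k$ then satisfy $\sum_\ell\mathcal{P}_\ell^{(2)}\mathcal{H}_2\mathcal{P}_\ell^{(2)}(E_{ij})=(a_i-a_j)E_{ij}\neq0$, so $\mathcal{H}_\mathrm{Z}^\mathcal{E}\neq0$, whereas~\eqref{eq:bath_DD_works} would force it to vanish here. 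If $\mathcal{E}_2$ has no decoherence-free subsystem but is non-ergodic, Lemma~\ref{lemma:no_DFS} writes it on $\mathcal{X}(\mathcal{E}_2)$ as a direct sum of at least two ergodic blocks with distinct fixed-point states $\rho_*^{(c)}$; taking $H=h_1\otimes h_2$ with $h_2=\rho_*^{(0)}-\rho_*^{(1)}$ and any nonzero traceless $h_1$, the block-by-block repetition of the sufficiency computation gives $\mathcal{H}_\mathrm{Z}^\mathcal{E}=\mathpzc{h}_1\otimes D$, where $D$ acts as the scalar $\tr(h_2\rho_*^{(c)})$ on block $c$; Cauchy--Schwarz forces $\tr(h_2\rho_*^{(0)})\neq\tr(h_2\rho_*^{(1)})$, so $D$ is not proportional to $\mathcal{P}_\varphi^{(2)}$ and~\eqref{eq:bath_DD_works} fails.

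The main obstacle I anticipate is the bookkeeping in the necessity direction, specifically verifying that the multi-block superoperators really decouple across cycles: I must check that the cross-block contributions $\tr(L_\ell^{(c)\dagger}[h_2,R_{\ell'}^{(c')}])$ with $c\neq c'$ vanish, which they do because the supports of $L_\ell^{(c)}$ and $R_{\ell'}^{(c')}$ are orthogonal, and that within a single ergodic block the peripheral eigenvalues are non-degenerate so no spurious $\ell\neq\ell'$ terms appear. A secondary subtlety, present everywhere, is that the equality with the target form~\eqref{eq:bath_DD_works} holds on the peripheral subspace $\mathcal{B}(\mathscr{H}_1)\otimes\mathcal{X}(\mathcal{E}_2)$, where $\mathcal{P}_\varphi^{(2)}$ replaces $\mathcal{I}_2$; I would state this restriction explicitly so that the comparison of $\mathcal{H}_\mathrm{Z}^\mathcal{E}$ with the decoupled form is unambiguous.
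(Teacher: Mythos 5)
Your sufficiency argument is essentially the paper's own: rank-one peripheral projections from Proposition~\ref{prop:ergodic_peripheral}, the identities $[L_\ell^\dagger,R_\ell]=0$ and $R_\ell L_\ell^\dagger=L_\ell^\dagger R_\ell=\rho_*$ from Corollary~\ref{corollary:ergodic_commute}, term-by-term evaluation of the decomposition~\eqref{eq:Hamiltonian_action}, and the replacement of $\mathcal{P}_\varphi^{(2)}$ by $\mathcal{I}_2$ justified by the presence of $\mathcal{E}_\varphi^n$ in the limit evolution~\eqref{eq:Zeno_Limit}\@. Your necessity argument in the no-DFS case is also sound and is a genuine variant of the paper's: where the paper takes $H_2=a_1P_{c_1}+a_2P_{c_2}$, constant on each cycle, so that $[H_1\otimes H_2,{}\bullet{}]$ commutes with all Zeno projections and survives verbatim, you rerun the sufficiency computation cycle by cycle (using Lemma~\ref{lemma:no_DFS}) with $h_2=\rho_*^{(0)}-\rho_*^{(1)}$ and obtain $\mathpzc{h}_1\otimes D$ with two different scalars on the two cycles; since the supports of $\rho_*^{(0)}$ and $\rho_*^{(1)}$ are orthogonal you even get $\tr(h_2\rho_*^{(0)})>0>\tr(h_2\rho_*^{(1)})$ directly, with no need for Cauchy--Schwarz. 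Both versions establish failure.

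The genuine gap is in your DFS case. Your construction silently assumes that the decoherence-free block $\mathscr{H}_{k,1}$ is \emph{fixed} by the permutation $\pi$ in~\eqref{eq:recurrences_map}\@. Only for a cycle of length $1$ are the single-block operators $E_{ij}=\ket{u_i}\bra{u_j}\otimes\rho_k$ peripheral eigenoperators; if the block sits in a cycle of length $K_c\ge2$, then $\mathcal{E}_2$ maps an operator supported on one block to a different block, $E_{ij}$ is not an eigenoperator, and your identity $\sum_\ell\mathcal{P}_\ell^{(2)}\mathcal{H}_2\mathcal{P}_\ell^{(2)}(E_{ij})=(a_i-a_j)E_{ij}$ fails. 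Worse, the prescription ``$H_2$ diagonal in the eigenbasis of $U_k$'' is then not the right condition and can yield $\mathcal{H}_\mathrm{Z}^\mathcal{E}=0$, i.e.\ no contradiction at all. Concretely, take two qubit blocks swapped by $\pi$ with $U_0=\mathbb{1}$ and $U_1=Z$ (realized, e.g., by pinching onto the block decomposition followed by a block-swapping unitary). Every basis is an eigenbasis of $U_0$, so your prescription allows $H_2$ supported on block $0$ and equal to the Pauli matrix $X$ there (diagonal in the Hadamard basis); a short computation shows that the image of every peripheral eigenoperator under $[X\oplus 0,{}\bullet{}]$ lies entirely in \emph{other} peripheral eigenspaces, so $\mathcal{P}_\ell^{(2)}\mathcal{H}_2\mathcal{P}_\ell^{(2)}=0$ for all peripheral $\ell$ and your candidate Hamiltonian proves nothing. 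What controls the construction is not $U_k$ but the monodromy $U_0U_1\cdots U_{K_c-1}$ of the whole cycle; this is exactly what the paper's conditions $H_{\pi(k),1}=U_k^\dagger H_{k,1}U_k$ and $[H_{0,1},U_0\cdots U_{K_c-1}]=0$ encode, which make left and right multiplication by $H_2$ commute with $\mathcal{E}_2$ on $\mathcal{X}(\mathcal{E}_2)$, hence with every $\mathcal{P}_\ell^{(2)}$, for any cycle length. Your case (a) needs to be repaired along these lines. A secondary, non-fatal remark: your DFS counterexample is a pure bath Hamiltonian $\mathbb{1}_1\otimes H_2$, which does violate Definition~\ref{def:bath_DD} but can never correlate system and bath; the paper's choice $H_1\otimes H_2$ exhibits a surviving genuine interaction, which is the physically stronger (and more robust) notion of failure.
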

\begin{proof}
		We first show that the ergodicity of $\mathcal{E}_2$ is a sufficient condition for bath DD to work.
		Afterwards, we show that it is also a necessary condition.

		\emph{Sufficiency:} Let $\mathcal{E}_2$ be an ergodic quantum channel.
		Then, by Proposition~\ref{prop:ergodic_peripheral}, all its peripheral projections $\mathcal{P}_\ell$ are one-dimensional and are given in the form $\mathcal{P}_\ell=R_\ell\tr(L_\ell^\dag{}\bullet{})$.
		For each $\ell$,
		\begin{align}
			&[(\mathcal{I}_1\otimes\mathcal{P}_\ell)\mathcal{H}(\mathcal{I}_1\otimes\mathcal{P}_\ell)](A_1\otimes A_2)
			\nonumber\\
			&\qquad
			=(\mathcal{I}_1\otimes\mathcal{P}_\ell)([H,A_1\otimes R_\ell])\tr(L_\ell^\dagger A_2)\nonumber\\
			&\qquad
			=\tr_2\!\left(
			(\mathbb{1}_1\otimes L_\ell^\dagger)[H,A_1\otimes R_\ell]\right)\otimes R_\ell\tr(L_\ell^\dagger A_2)\nonumber\\
			&\qquad
			=\left(\tr_2[(\mathbb{1}_1\otimes L_\ell^\dagger)H(\mathbb{1}_1\otimes R_\ell)]A_1
			-A_1\tr_2[(\mathbb{1}_1\otimes L_\ell^\dagger)(\mathbb{1}_1\otimes R_\ell)H]\right)\otimes R_\ell\tr(L_\ell^\dagger A_2)\nonumber\\
			&\qquad
			=\Bigl[\tr_2[(\mathbb{1}_1\otimes L_\ell^\dagger R_\ell)H],A_1\Bigr]\otimes R_\ell\tr(L_\ell^\dagger A_2)\nonumber\\
			&\qquad
			=\Bigl[\tr_2[(\mathbb{1}_1\otimes\rho_*)H],A_1\Bigr]\otimes\mathcal{P}_\ell(A_2),
		\end{align}
		where we have used Corollary~\ref{corollary:ergodic_commute}\@.
		Summing over $\ell$ yields
		\begin{equation}
			\mathcal{H}_\mathrm{Z}^\mathcal{E}(A_1\otimes A_2)
			=[H_\mathrm{Z}^\mathcal{E},A_1]\otimes\mathcal{P}_\varphi(A_2),
			\label{eq:Zeno_generator_ergodic}
		\end{equation}
		with $H_\mathrm{Z}^\mathcal{E}=\tr_2[(\mathbb{1}_1\otimes\rho_*)H]$ and $\mathcal{P}_\varphi$ the projection onto the peripheral subspace of $\mathcal{E}_2$.
		Using the operator Schmidt decomposition~\eqref{eq:Hamiltonian_decomposition} of the Hamiltonian $H$, we get
		\begin{equation}
		H_\mathrm{Z}^\mathcal{E}=H_1+\sum_ic_ih_1^{(i)},\label{eq:Zeno_Hamiltonian_suff}
		\end{equation}
		with $c_i=\tr(h_2^{(i)}\rho_*)$.
		Here, we removed the term $\tr(H_2\rho_*)\mathbb{1}$ in Eq.~\eqref{eq:Zeno_Hamiltonian_suff} as it does not affect the commutator $[H_\mathrm{Z}^\mathcal{E},A_1]$ in Eq.~\eqref{eq:Zeno_generator_ergodic}\@.
		Thanks to the presence of $\mathcal{E}_\varphi$ in the limit evolution~\eqref{eq:Zeno_Limit}, we are allowed to replace $\mathcal{P}_\varphi$ by $\mathcal{I}_2$ in the Zeno generator $\mathcal{H}_\mathrm{Z}^\mathcal{E}$ in Eq.~\eqref{eq:Zeno_generator_ergodic}, and get the decoupled Zeno Hamiltonian~\eqref{eq:bath_DD_works}.
		Bath DD thus works.

		\emph{Necessity:} To prove the necessity of the ergodicity of $\mathcal{E}_2$, we assume that $\mathcal{E}_2$ is not ergodic and show that decoupling does not work.
		If $\mathcal{E}_2$ is not ergodic, the eigenvalue $\lambda=1$ of $\mathcal{E}_2$ is degenerate.
		This happens if $\mathcal{E}_2$ admits a decoherence-free subsystem with $d_k\ge2$ in the decomposition~\eqref{eq:recurrent_space}, and/or if the permutation $\pi$ in the recurrence~\eqref{eq:recurrences_map} under the action of $\mathcal{E}_2$ consists of two or more disjoint cycles.
		(In the latter case, each cycle leads to an eigenvalue $1$ due to Equation~\eqref{eq:peripheral_spectrum}\@.)

		Let us first consider the former scenario, assuming the existence of a decoherence-free subsystem in the bath under the action of $\mathcal{E}_2$.
		Suppose that a cycle $c$ in the permutation $\pi$ in the recurrence~\eqref{eq:recurrences_map} under $\mathcal{E}_2$ permutes sub-blocks of dimension $d_c\ge2$, with the length of the cycle $c$ given by $K_c\ge1$.
		We just have to find a counterexample of the Hamiltonian $H$ for which bath DD does not work.
		To this end, we first define
		\begin{equation}
			H_2=0\oplus \bigoplus_{k\in c}(H_{k,1}\otimes\mathbb{1}_{k,2})\oplus \bigoplus_{k\not\in c} (0_{k,1}\otimes 0_{k,2}),
		\end{equation}
		acting on $\mathscr{H}_2$, where $H_{k,1}=H_{k,1}^\dag\in\mathcal{B}(\mathscr{H}_{k,1})$, $\forall k\in c$.
		Here, $k\in c$ ($k\not\in c$) means that block $k$ belongs to the cycle $c$ (cycles disjoint from $c$).
		Furthermore, the $0$ again denotes a $d_0\times d_0$ zero-matrix block acting on $\mathscr{H}_0$ (see Fact~\ref{prop:recurrences}).
		Notice that $H_2$ respects the block structure of $\mathcal{X}(\mathcal{E}_2)$ in Eq.~\eqref{eq:recurrent_space}, and acts only on the blocks relevant to the cycle $c$.
		We will hence focus on the blocks $k\in c$ and omit the other blocks in the following computation.
		In addition, we relabel the blocks $k\in c$ such that $\pi(k)=k+1\mod K_c$.
		Now, by properly choosing $H_{k,1}$, we can make the left (right) action of $H_2$ commutative with the map $\mathcal{E}_2$ on $\mathcal{X}(\mathcal{E}_2)$.
		Indeed, for any $X\in\mathcal{X}(\mathcal{E}_2)$,
		\begin{align}
		H_2\mathcal{E}_2(X)-\mathcal{E}_2(H_2X)
		&=\bigoplus_{k\in c}\left(
		H_{k,1}U_kx_{\pi(k)}U_k^\dag
		-U_kH_{\pi(k),1}x_{\pi(k)}U_k^\dag
		\right)\otimes\rho_k\nonumber\\
		&=\bigoplus_{k\in c}
		(H_{k,1}-U_kH_{\pi(k),1}U_k^\dag)U_kx_{\pi(k)}U_k^\dag
		\otimes\rho_k\nonumber\\
		&=0.
		\end{align}
		This can be accomplished by choosing $H_{k,1}$ so that $[H_{0,1},U_0\cdots U_{K_c-1}]=0$ and $H_{\pi(k),1}=U_k^\dag H_{k,1}U_k$ ($k=0,\ldots,K_c-1$).
		Non-trivial $H_{k,1}=H_{k,1}^\dag\not\propto\mathbb{1}_{k,1}$ fulfilling these conditions exist.
		Such a set $\{H_{k,1}\}$ also yields, for the right action of $H_2$,
		\begin{align}
		\mathcal{E}_2(X)H_2-\mathcal{E}_2(XH_2)
		&=\bigoplus_{k\in c}\left(
		U_kx_{\pi(k)}U_k^\dag H_{k,1}
		-U_kx_{\pi(k)}H_{\pi(k),1}U_k^\dag
		\right)\otimes\rho_k\nonumber\\
		&=\bigoplus_{k\in c}
		U_kx_{\pi(k)}U_k^\dag
		(H_{k,1}-U_kH_{\pi(k),1}U_k^\dag)
		\otimes\rho_k\nonumber\\
		&=0.
		\end{align}
		These commutativities also lead to the commutativity of the adjoint action $\mathcal{H}_2=[H_2,{}\bullet{}]$ with the map $\mathcal{E}_2$ on $\mathcal{X}(\mathcal{E}_2)$,
		\begin{equation}
		[\mathcal{H}_2,\mathcal{E}_2](X)
		=[H_2,\mathcal{E}_2(X)]-\mathcal{E}_2([H_2,X])
		=0.
		\end{equation}
		Moreover, these commutativities are inherited by the peripheral projections $\mathcal{P}_\ell$ of $\mathcal{E}_2$ through $\mathcal{P}_\ell=\oint_{\Gamma_\ell}\frac{\rmd z}{2\pi\rmi}(z\mathcal{I}_2-\mathcal{E}_2)^{-1}$, where $\Gamma_\ell$ is a contour on the complex $z$ plane going anti-clockwise around the $\ell$-th peripheral eigenvalue $\lambda_\ell$.
		That is,
		\begin{equation}
		H_2\mathcal{P}_\ell(X)-\mathcal{P}_\ell(H_2X)=0,\qquad
		\mathcal{P}_\ell(X)H_2-\mathcal{P}_\ell(XH_2)=0,\qquad
		[\mathcal{H}_2,\mathcal{P}_\ell](X)=0,
		\end{equation}
		for any $X\in\mathcal{X}(\mathcal{E}_2)$. 
		Then, for $\mathcal{H}=[H_1\otimes H_2,{}\bullet{}]$ with $H_1\in\mathcal{B}(\mathscr{H}_1)$, we get
		\begin{equation}
		[\mathcal{H},\mathcal{I}_1\otimes\mathcal{P}_\ell](A\otimes X)
		=[H_1,A]\otimes[H_2\mathcal{P}_\ell(X)-\mathcal{P}_\ell(H_2X)]
		+AH_1\otimes[\mathcal{H}_2,\mathcal{P}_\ell](X)
		=0.
		\end{equation}
		Recall the adjoint action~\eqref{eq:Hamiltonian_action} on a product operator.
		Due to this commutativity, the Hamiltonian $\mathcal{H}=[H_1\otimes H_2,{}\bullet{}]$ with the specially designed $H_2$ thus survives the Zeno projection with respect to $\mathcal{E}=\mathcal{I}_1\otimes\mathcal{E}_2$ as
		\begin{align}
		\mathcal{H}_\mathrm{Z}^\mathcal{E}
		&=\sum_{|\lambda_\ell|=1}(\mathcal{I}_1\otimes\mathcal{P}_\ell)\mathcal{H}(\mathcal{I}_1\otimes\mathcal{P}_\ell)
		\nonumber\\
		&=\sum_{|\lambda_\ell|=1}\mathcal{H}(\mathcal{I}_1\otimes\mathcal{P}_\ell)
		\nonumber\\
		&=\mathcal{H}(\mathcal{I}_1\otimes\mathcal{P}_\varphi).
		\label{eq:Zeno_generator_nonergodic}
		\end{align}
		Thanks to the presence of $\mathcal{E}_\varphi$ in the limit evolution~\eqref{eq:Zeno_Limit}, we are allowed to remove $\mathcal{I}_1\otimes\mathcal{P}_\varphi$ from the Zeno generator $\mathcal{H}_\mathrm{Z}^\mathcal{E}$ in Eq.~\eqref{eq:Zeno_generator_nonergodic}, and the Zeno Hamiltonian remains $\mathcal{H}_\mathrm{Z}^\mathcal{E}=\mathcal{H}=[H_1\otimes H_2,{}\bullet{}]$.
		The Zeno projection thus fails to suppress the interaction $H_1\otimes H_2$, and bath DD does not work.

		Let us next consider the other scenario for the non-ergodicity of $\mathcal{E}_2$, where there are two or more disjoint cycles in the recurrence~\eqref{eq:recurrences_map} of $\mathcal{E}_2$.
		In this case, it is much easier to construct a counterexample of the Hamiltonian $H$ for which bath DD does not work.
		We focus on two disjoint cycles $c_1$ and $c_2$, and consider
		\begin{equation}
			H_2
			=0\oplus
			\bigoplus_{k\in c_1}a_1(\mathbb{1}_{k,1}\otimes\mathbb{1}_{k,2})\oplus\bigoplus_{k\in c_2}a_2(\mathbb{1}_{k,1}\otimes\mathbb{1}_{k,2})
			\oplus\bigoplus_{k\not\in c_1\cup c_2}(0_{k,1}\otimes0_{k,2}),
		\end{equation}
		with $a_1,a_2\in\mathbb{R}$ and $a_1\neq a_2$.
		Notice that $H_2$ respects the block structure of $\mathcal{X}(\mathcal{E}_2)$ in Eq.~\eqref{eq:recurrent_space}, and acts only on the blocks relevant to the cycles $c_1$ and $c_2$.
		This $H_2$ trivially commutes with $X\in\mathcal{X}(\mathcal{E}_2)$ and $\mathcal{E}_2$ on $\mathcal{X}(\mathcal{E}_2)$: for any $X\in\mathcal{X}(\mathcal{E}_2)$,
		\begin{equation}
		\mathcal{H}_2(X)=0,\quad
		H_2\mathcal{E}_2(X)-\mathcal{E}_2(H_2X)=0,\quad
		\mathcal{E}_2(X)H_2-\mathcal{E}_2(XH_2)=0,\quad
		[\mathcal{H}_2,\mathcal{E}_2](X)=0,
		\end{equation}
		where $\mathcal{H}_2=[H_2,{}\bullet{}]$.
		Then, the same argument as the former scenario applies, and the coupling Hamiltonian $\mathcal{H}=[H_1\otimes H_2,{}\bullet{}]$ with any $H_1\in\mathcal{B}(\mathscr{H}_1)$ survives the Zeno projection, $\mathcal{H}_\mathrm{Z}^\mathcal{E}=\mathcal{H}=[H_1\otimes H_2,{}\bullet{}]$.
		Even though $\mathcal{H}_2$ itself acts trivially on any $X\in\mathcal{X}(\mathcal{E}_2)$ in the space of recurrences $\mathcal{X}(\mathcal{E}_2)$, i.e.~$\mathcal{H}_2(X)=0$, the coupling Hamiltonian $\mathcal{H}=[H_1\otimes H_2,{}\bullet{}]$ are able to generate a correlation between the system and the bath as $\rme^{-\rmi t\mathcal{H}}(A\otimes X)=A\otimes QXQ+\rme^{-\rmi ta_1\mathcal{H}_1}(A)\otimes P_{c_1}XP_{c_1}+\rme^{-\rmi ta_2\mathcal{H}_1}(A)\otimes P_{c_2}XP_{c_2}$, where $P_{c_i}=0\oplus
			\bigoplus_{k\in c_i}(\mathbb{1}_{k,1}\otimes\mathbb{1}_{k,2})\oplus\bigoplus_{k\not\in c_i}(0_{k,1}\otimes0_{k,2})$ ($i=1,2$) and $Q=\mathbb{1}_2-P_{c_1}-P_{c_2}$.
		This completes the proof that bath DD does not work with a non-ergodic channel $\mathcal{E}_2$.
\end{proof}

Theorem~\ref{thm:bath_DD} gives a complete characterization of bath DD by quantum channels.
It establishes a one-to-one correspondence between bath quantum channels $\mathcal{E}_2:\mathcal{B}(\mathscr{H}_2)\rightarrow\mathcal{B}(\mathscr{H}_2)$ and a decoupled Zeno dynamics.
The criterion on $\mathcal{E}_2$ is easy to check: one only needs to verify that the eigenvalue $\lambda=1$ is not degenerate.
This ensures that $\mathcal{E}_2$ is ergodic.
Computationally, this can be done by exploiting the (row) vectorization isomorphism $\mathrm{vec}:\mathrm{CPTP}(\mathscr{H}_2)\rightarrow\mathcal{B}(\mathscr{H}_2\otimes \mathscr{H}_2)$, which maps a quantum channel $\mathcal{E}_2$ to an operator $\hat{\mathcal{E}}_2$ on an enlarged (doubled) Hilbert space $\mathscr{H}_2\otimes \mathscr{H}_2$~\cite{Havel2003, Wood2015, Watrous2018, Chruscinski2022}\@.
The operator $\hat{\mathcal{E}}_2$ admits a matrix representation on a basis of $\mathscr{H}_2\otimes \mathscr{H}_2$.
By diagonalizing the resulting matrix and by checking the uniqueness of the eigenvalue $\lambda=1$, the ergodicity of $\mathcal{E}_2$ can be verified.
More specifically, we can always write down an explicit matrix representation of $\hat{\mathcal{E}}_2$ if the Kraus operators $E_k$~\cite[Ch.~2.2]{Wolf2012} of $\mathcal{E}_2$ are known in some basis.
Given that $\mathcal{E}_2$ has Kraus representation $\mathcal{E}_2(A)=\sum_k E_kAE_k^\dagger$, the matrix $\hat{\mathcal{E}}_2$ can be written as $\hat{\mathcal{E}}_2=\sum_k E_k\otimes \overline{E}_k$.
See for instance Ref.~\cite[Proposition~2.20]{Watrous2018}\@.
Here, the bar denotes complex conjugation on the chosen basis.

Notice that our bath DD is a generalization of the standard unitary group-based DD introduced in Sec.~\ref{sec:system_DD}\@.
In the unitary group-based DD, one aims to mimic the action of the completely depolarizing channel $\mathcal{P}^{\mathbb{1}/d}$ by averaging over the decoupling set $\mathscr{V}$.
The channel $\mathcal{P}^{\mathbb{1}/d}$ is mixing and thus ergodic.
Nevertheless, since there are ergodic quantum channels that are not mixing (see Example~\ref{ex:ergodic_mixing}), Theorem~\ref{thm:bath_DD} extends the class of possible quantum channels for DD\@.

Interestingly, the condition of the ergodicity of a quantum channel is extremely stable.
In fact, any probabilistic combination of an ergodic quantum channel with another arbitrary (not necessarily ergodic) quantum channel is always ergodic~\cite[Theorem~4]{burgarth_ergodic_2013}\@.
Counterintuitively, a probabilistic mixture of an ergodic quantum channel with the identity channel is even always mixing~\cite[Corollary~6]{burgarth_ergodic_2013}\@.
Therefore, it suffices to only apply an ergodic quantum channel probabilistically with a small probability for bath DD to work.

Finally, let us remark on quantum operations which are \emph{not} trace-preserving.
Such maps do not admit a peripheral spectrum, see e.g.~Ref.~\cite[Remark~1]{burgarth_quantum_2020}\@.
Therefore, their Zeno dynamics according to Fact~\ref{fact:QZD} always leads to a trivial Zeno Hamiltonian $\mathcal{H}^\mathcal{E}_\mathrm{Z}=0$.
This is why we only need to consider true quantum channels in Theorem~\ref{thm:bath_DD}\@.

In the next section, we will investigate the conditions, under which CPTP kicks completely suppress a Hamiltonian.
This study will reveal an interesting qualitative difference between bath DD, as introduced in this section, and unitary system DD, as introduced in Sec.~\ref{sec:system_DD}\@.

\section{Zeno Hamiltonian Suppression}\label{sec:dynamical_freezing}
In Sec.~\ref{sec:bath_DD}, we studied the possibility of eliminating an interaction Hamiltonian $H_\mathrm{I}=\sum_i h_1^{(i)}\otimes h_2^{(i)}$ by kicking the bath with a quantum channel $\mathcal{E}=\mathcal{I}_1\otimes\mathcal{E}_2$.
This ultimately led to Theorem~\ref{thm:bath_DD}, which posed the condition of ergodicity on $\mathcal{E}_2$.
Similarly, for a single system $\mathscr{H}$, instead of the system-bath setting, we would be able to eliminate an arbitrary Hamiltonian $H\in\mathcal{B}(\mathscr{H})$ of the system by frequent applications of a quantum channel $\mathcal{E}$.
This can be regarded as a generalization of the ``quantum Zeno effect,'' which, in its simplest form, \emph{freezes} a system via frequent projective measurements~\cite{Misra1977, Facchi2001a, Facchi2008}\@.
In Ref.~\cite{burgarth_quantum_2020}, it is shown that ``quantum Zeno dynamics'' is induced by \emph{general quantum operations}.
Fact~\ref{fact:QZD} is based on this result.
While transitions among subspaces are suppressed, dynamics within each subspace (called ``Zeno subspace'') is allowed, which is called ``quantum Zeno dynamics''~\cite{Facchi2002, Facchi2008}\@.
Here, in this section, we are going to clarify the condition on the quantum channel $\mathcal{E}$ that further suppresses the quantum Zeno dynamics, by eliminating the Hamiltonian $H$ of the system and achieving the vanishing Zeno Hamiltonian $\mathcal{H}_\mathrm{Z}^\mathcal{E}=0$ via frequent applications of $\mathcal{E}$.

Let us first state what we mathematically mean by ``Zeno Hamiltonian suppression'' in the following discussion.
\begin{definition}[Zeno Hamiltonian suppression]\label{def:dynamical_freezing}
	Let $\mathcal{E}$ be a quantum channel acting on $\mathcal{B}(\mathscr{H})$. Let $H=H^\dag\in\mathcal{B}(\mathscr{H})$ be a Hamiltonian, with adjoint representation $\mathcal{H}=[H,{}\bullet{}]$. Then, we say that ``Zeno Hamiltonian suppression works,'' if the Zeno Hamiltonian $\mathcal{H}_\mathrm{Z}^{\mathcal{E}}$ induced by $\mathcal{E}$ is nullified, i.e.~$\mathcal{H}_\mathrm{Z}^{\mathcal{E}}=0$, for any $\mathcal{H}$.
\end{definition}
Even though the system is not literally frozen in general due to the presence of $\mathcal{E}_\varphi^n$ in the limit evolution~\eqref{eq:Zeno_Limit}, there is no quantum Zeno dynamics within Zeno subspaces if $\mathcal{H}_\mathrm{Z}^{\mathcal{E}}=0$ in the Zeno limit.
If the system starts from a stationary state of $\mathcal{E}$, the system is literally frozen in the initial state.
Such a state actually exists.

We can infer that the condition for Zeno Hamiltonian suppression in Definition~\ref{def:dynamical_freezing} is weaker than the condition for bath DD in Definition~\ref{def:bath_DD}\@.
In fact, it turns out that all quantum channels by which bath DD works also lead to Zeno Hamiltonian suppression.
In the following theorem, we completely characterize the channels by which Zeno Hamiltonian suppression works.
\begin{theorem}\label{thm:dynamical_freezing}
	Zeno Hamiltonian suppression by a quantum channel $\mathcal{E}:\mathcal{B}(\mathscr{H})\rightarrow\mathcal{B}(\mathscr{H})$ works, if and only if $\mathcal{E}$ does not admit any decoherence-free subsystem.
\end{theorem}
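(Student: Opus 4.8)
The plan is to prove both implications by combining the block decomposition of the space of recurrences $\mathcal{X}(\mathcal{E})$ from Fact~\ref{prop:recurrences} with the rank-one structure of the peripheral projections of ergodic channels from Proposition~\ref{prop:ergodic_peripheral} and the commutation relation $[L_\ell^\dagger,R_\ell]=0$ from Corollary~\ref{corollary:ergodic_commute}. The definition of suppression requires $\mathcal{H}_\mathrm{Z}^\mathcal{E}=\sum_{|\lambda_\ell|=1}\mathcal{P}_\ell\mathcal{H}\mathcal{P}_\ell=0$ for \emph{every} $H$, so the absence/presence of a decoherence-free subsystem should be read off precisely from the dimensions $d_k$ appearing in Eq.~\eqref{eq:recurrent_space}.

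For sufficiency, I would assume $\mathcal{E}$ has no decoherence-free subsystem, so that $d_k=1$ for every block. By Lemma~\ref{lemma:no_DFS} (and trivially when $\mathcal{E}$ is ergodic), $\mathcal{E}$ acts on $\mathcal{X}(\mathcal{E})$ as a direct sum of ergodic channels $\mathcal{E}_c$, one per cycle $c$ of the permutation $\pi$, each carrying rank-one peripheral projections $\mathcal{P}_\ell^{(c)}=R_\ell^{(c)}\tr(L_\ell^{(c)\dagger}{}\bullet{})$. The global spectral projection for a peripheral eigenvalue $\lambda_\ell$ is the sum of these over the cycles containing $\lambda_\ell$. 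I would then evaluate $\mathcal{P}_\ell\mathcal{H}\mathcal{P}_\ell$ directly: since $R_\ell^{(c')}$ and $L_\ell^{(c)}$ live on disjoint blocks for $c\neq c'$, all cross-cycle products vanish and only diagonal terms survive, and for each of these cyclicity of the trace gives
\begin{equation}
\tr\!\left(L_\ell^{(c)\dagger}\mathcal{H}(R_\ell^{(c)})\right)=\tr\!\left(L_\ell^{(c)\dagger}[H,R_\ell^{(c)}]\right)=\tr\!\left(H\,[R_\ell^{(c)},L_\ell^{(c)\dagger}]\right)=0,
\end{equation}
the last equality being Corollary~\ref{corollary:ergodic_commute}\eqref{it:ergodic_commute}. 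Hence every summand of $\mathcal{H}_\mathrm{Z}^\mathcal{E}$ vanishes for arbitrary $H$, and suppression works.

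For necessity, I would assume $\mathcal{E}$ does admit a decoherence-free subsystem, i.e.\ some cycle $c$ permutes blocks of dimension $d_c\ge2$, and exhibit one Hamiltonian with $\mathcal{H}_\mathrm{Z}^\mathcal{E}\neq0$, transplanting the construction from the necessity part of Theorem~\ref{thm:bath_DD} to the single-system setting. Concretely I would take $H=0\oplus\bigoplus_{k\in c}(H_{k,1}\otimes\mathbb{1}_{k,2})$, zero on all other blocks, with non-scalar Hermitian $H_{k,1}$ satisfying $H_{\pi(k),1}=U_k^\dagger H_{k,1}U_k$ and $[H_{0,1},U_0\cdots U_{K_c-1}]=0$; such operators exist exactly because $d_c\ge2$. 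The same computation as in Theorem~\ref{thm:bath_DD} shows that both left and right multiplication by $H$, hence the adjoint action $\mathcal{H}=[H,{}\bullet{}]$, commute with $\mathcal{E}$ on $\mathcal{X}(\mathcal{E})$, and this commutativity is inherited by every peripheral projection through the Riesz representation $\mathcal{P}_\ell=\oint_{\Gamma_\ell}\frac{\rmd z}{2\pi\rmi}(z\mathcal{I}-\mathcal{E})^{-1}$. Therefore $\mathcal{P}_\ell\mathcal{H}\mathcal{P}_\ell=\mathcal{H}\mathcal{P}_\ell$ and $\mathcal{H}_\mathrm{Z}^\mathcal{E}=\mathcal{H}\mathcal{P}_\varphi$, which is nonzero since for $X=x\otimes\rho_k$ with $k\in c$ one has $\mathcal{H}(X)=[H_{k,1},x]\otimes\rho_k\neq0$ for a suitable $x\in\mathcal{M}_{d_c}$.

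I expect the main obstacle to lie entirely in the necessity direction, specifically in checking that the tailored family $\{H_{k,1}\}$ both exists — the consistency condition of propagating $H_{0,1}$ once around the cycle forces it to commute with the holonomy $U_0\cdots U_{K_c-1}$, which still admits non-scalar Hermitian solutions in dimension $d_c\ge2$ — and produces an adjoint action that commutes with \emph{all} peripheral projections rather than merely fixing the $\lambda=1$ sector. The sufficiency direction is by contrast essentially the one-line trace identity above once Corollary~\ref{corollary:ergodic_commute} is invoked, the only bookkeeping being the vanishing of cross-cycle terms by disjointness of supports.
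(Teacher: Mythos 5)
Your proposal is correct and follows essentially the same route as the paper: the sufficiency argument (direct sum of ergodic channels via Lemma~\ref{lemma:no_DFS}, rank-one summands $R_\ell^{(j)}\tr(L_\ell^{(j)\dagger}{}\bullet{})$ with cross-cycle terms killed by disjoint supports and diagonal terms killed by cyclicity plus Corollary~\ref{corollary:ergodic_commute}) is the paper's computation verbatim. For necessity the paper simply invokes the construction from Theorem~\ref{thm:bath_DD} ``in exactly the same way,'' and your transplanted Hamiltonian $H=0\oplus\bigoplus_{k\in c}(H_{k,1}\otimes\mathbb{1}_{k,2})$ with the holonomy-commutation condition is precisely the intended instantiation of that reference.
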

\begin{proof}
		Again, we first show the sufficiency and then the necessity.

		\emph{Sufficiency:} Since $\mathcal{E}$ does not admit a decoherence-free subsystem, it is a direct sum of ergodic quantum channels on $\mathcal{X}(\mathcal{E})$ by Lemma~\ref{lemma:no_DFS}\@. 
		Furthermore, all its peripheral projections are given in the form
		\begin{equation}
			\mathcal{P}_\ell=\sum_{j=1}^{\nu_\ell}R_\ell^{(j)}\tr(L_\ell^{(j)\dagger}{}\bullet{}),
		\end{equation}
		where $\nu_\ell$ denotes the degeneracy of the eigenvalue $\lambda_\ell$.
		In addition, $L_\ell^{(i)\dag}$ and $R_\ell^{(j)}$ act on different blocks in the decomposition~\eqref{eq:recurrent_space} of $\mathcal{X}(\mathcal{E})$ if $i\neq j$, i.e.~$L_\ell^{(i)\dag}R_\ell^{(j)}=R_\ell^{(j)}L_\ell^{(i)\dag}=0$ for $i\neq j$.
		Note that $\nu_\ell$ is smaller or equal to the number of cycles in the recurrence~\eqref{eq:recurrences_map} of $\mathcal{E}$.
		Then, the Zeno projection~\eqref{eq:Zeno_Hamiltonian} of the Hamiltonian $\mathcal{H}=[H,{}\bullet{}]$ results in
		\begingroup
		\allowdisplaybreaks
		\begin{align}
			\mathcal{H}_\mathrm{Z}^\mathcal{E}(A)
			&=\sum_{|\lambda_\ell|=1}(\mathcal{P}_\ell\mathcal{H}\mathcal{P}_\ell)(A)\nonumber\\
			&=\sum_{|\lambda_\ell|=1}\left(
			\sum_{i=1}^{\nu_\ell}R_\ell^{(i)}\tr(L_\ell^{(i)\dag}{}\bullet{})
			\right)\mathcal{H}\left(
			\sum_{j=1}^{\nu_\ell}R_\ell^{(j)}\tr(L_\ell^{(j)\dag}A)
			\right)\nonumber\\
			&=\sum_{|\lambda_\ell|=1}
			\sum_{i=1}^{\nu_\ell}
			\sum_{j=1}^{\nu_\ell}
			R_\ell^{(i)}\tr\!\left(
			L_\ell^{(i)\dag}[H,R_\ell^{(j)}]
			\right)\tr(L_\ell^{(j)\dag}A)
			\nonumber\\
			&=\sum_{|\lambda_\ell|=1}
			\sum_{j=1}^{\nu_\ell}
			R_\ell^{(j)}\tr\!\left(
			[R_\ell^{(j)},L_\ell^{(j)\dag}]H
			\right)\tr(L_\ell^{(j)\dag}A)
			\nonumber\\
			&=0,
		\end{align}
		\endgroup
		where we have used Corollary~\ref{corollary:ergodic_commute}\@.

		\emph{Necessity:} We assume that $\mathcal{E}$ admits a decoherence-free subsystem.
		Then, in exactly the same way as in the proof of Theorem~\ref{thm:bath_DD}, we can construct a non-trivial Hamiltonian $H$ which survives the Zeno projection.
		It means that there exists a Hamiltonian $H$ for which Zeno Hamiltonian suppression does not work.
		Therefore, decoherence-free subsystems are not allowed in the kick $\mathcal{E}$ in order for Zeno Hamiltonian suppression to work.
\end{proof}

Theorem~\ref{thm:dynamical_freezing} gives rise to an important qualitative difference between bath DD and the standard group-based unitary DD\@.
In the latter, both goals, the suppression of system-bath interactions and the suppression of a Zeno Hamiltonian, are accomplished by the same scheme: the unitaries have to form a quantum $1$-design, as discussed in Sec.~\ref{sec:system_DD}\@.
The difference comes from the fact that, on one hand, for interaction Hamiltonians of the form $H=H_1\otimes H_2$ the adjoint representation $\mathcal{H}=[H,{}\bullet{}]$ reads
\begin{equation}
	\mathcal{H}(\rho_1\otimes\rho_2)=[H_1,\rho_1]\otimes H_2\rho_2+\rho_1H_1\otimes [H_2,\rho_2],
	\label{eq:interaction_H}
\end{equation}
while, on the other hand, a simple bath Hamiltonian $H=\mathbb{1}_1\otimes H_2$ yields the adjoint representation $\mathcal{H}=[H,{}\bullet{}]$ as
\begin{equation}
	\mathcal{H}(\rho_1\otimes\rho_2)=\rho_1\otimes[H_2,\rho_2].\label{eq:bath_H}
\end{equation}
Therefore, in the case of a simple bath Hamiltonian~\eqref{eq:bath_H}, the DD scheme only has to remove the commutator $[H_2,\rho_2]=H_2\rho_2-\rho_2H_2$.
On the contrary, in the interaction Hamiltonian case~\eqref{eq:interaction_H}, the DD scheme additionally has to remove the individual term $H_2\rho_2$.
This is always ensured in standard unitary group-based DD, where the decoupling scheme is a twirl over the quantum unitary $1$-design $\mathscr{V}$.
Hence, it acts as a projection onto the average of the unitary group, which is a mixing quantum channel.
This sets apart DD from the quantum Zeno dynamics:
while for unitary DD these two concepts are unified~\cite{Facchi2004, Hahn2022}, in the case of CPTP kicks the quantum Zeno effect (Zeno Hamiltonian suppression) is achieved by a more general class of channels than (bath) DD\@.

\section{Examples}\label{sec:examples}
Since the above discussions are rather abstract, we provide some explicit examples in this section.
The channels we study here are the ones that appeared in Examples~\ref{example:DFSS}, \ref{ex:ergodic_mixing}, and~\ref{example:peripheral_projections}, which are collected in Table~\ref{tab:examples}\@.

To numerically investigate the bath DD and the Zeno Hamiltonian suppression, we look at the Choi-Jamio{\l}kowski states~\eqref{eq:Choi-Jamiolkowski} of the respective quantum evolutions.
For the bath DD, given a quantum channel $\mathcal{E}$ acting on $\mathcal{B}(\mathscr{H}_2)$ and an arbitrary Hamiltonian $\mathcal{H}$ acting on $\mathcal{B}(\mathscr{H}_1\otimes\mathscr{H}_2)$, we define the dynamical decoupling evolution
\begin{equation}
	 \mathcal{E}_\mathrm{DD}\equiv\left((\mathcal{I}_1\otimes\mathcal{E})\rme^{-\rmi\frac{t}{n}\mathcal{H}}\right)^n ,
\end{equation}
with $\mathcal{I}_1$ the identity map on $\mathcal{B}(\mathscr{H}_1)$.
For the Zeno Hamiltonian suppression, given a quantum channel $\mathcal{E}$ acting on $\mathcal{B}(\mathscr{H})$ and an arbitrary Hamiltonian $\mathcal{H}$ acting on $\mathcal{B}(\mathscr{H})$, we define the Zeno evolution
\begin{equation}
	 \mathcal{E}_\mathrm{Z}\equiv\left(\mathcal{E}\rme^{-\rmi\frac{t}{n}\mathcal{H}}\right)^n.
\end{equation}

To assess the decoupling fidelity of the bath DD, we compute the purity 
\begin{equation}
\mathtt{P}(\Lambda_1^\mathrm{DD})=\tr(\Lambda_1^{\mathrm{DD}\dagger}\Lambda_1^\mathrm{DD})
\end{equation}
of the reduced Choi-Jamio{\l}kowski state
\begin{equation}
\Lambda_1^\mathrm{DD}\equiv \tr_2(\Lambda(\mathcal{E}_\mathrm{DD}))
\end{equation}
of the decoupling evolution $\mathcal{E}_\mathrm{DD}$.
This purity $\mathtt{P}(\Lambda_1^\mathrm{DD})$ is a measure of the decoupling fidelity since it is directly related to the distance of the system evolution on $\mathscr{H}_1$ to a unitary evolution.
See Ref.~\cite[Proposition~12]{Hahn2022}\@.

To assess how well the Zeno Hamiltonian suppression works, we compute the distance
\begin{equation}
\Vert \Lambda(\mathcal{E}_\mathrm{Z})-\Lambda(\mathcal{E}_\varphi^n)\Vert_1
\end{equation}
between the Choi-Jamio{\l}kowski states of $\mathcal{E}_\mathrm{Z}$ and of the target evolution $\mathcal{E}_\varphi^n$ without the Zeno Hamiltonian $\mathcal{H}_\mathrm{Z}^\mathcal{E}$.
Here, $\Vert A\Vert_1=\tr\Bigl(\sqrt{A^\dagger A}\Bigr)$ is the trace norm.
We remark that the quantity $\Vert \Lambda(\mathcal{E}_\mathrm{Z})-\Lambda(\mathcal{E}_\varphi^n)\Vert_1$ is equivalent to the diamond norm distance between the channels $\mathcal{E}_\mathrm{Z}$ and $\mathcal{E}_\varphi^n$~\cite[Lemma~26]{Hahn2022}\@.

To see the performance of the bath DD (the Zeno Hamiltonian suppression), we compute $\mathcal{E}_\mathrm{DD}$ ($\mathcal{E}_\mathrm{Z}$) for 100 generators $\mathcal{H}=[H,{}\bullet{}]$ generated by randomly sampling 100 Hamiltonians $H$ normalized as $\Vert H\Vert_\infty=1$, with the operator norm $\Vert A\Vert_\infty=\sup_{\Vert \psi\Vert=1} \Vert A\psi\Vert$ giving the largest singular value of $A$.
For the bath DD, we look at the worst-case (minimum) purity $\min\mathtt{P}(\Lambda_1^\mathrm{DD})$ among those for the randomly generated Hamiltonians $H\in\mathcal{B}(\mathscr{H}_1\otimes\mathscr{H}_2)$.
The convergence of this purity to $1$ shows that the bath DD works.
For the Zeno Hamiltonian suppression, on the other hand, we look at the worst-case (maximum) Choi-Jamio{\l}kowski state distance $\max\Vert \Lambda(\mathcal{E}_\mathrm{Z})-\Lambda(\mathcal{E}_\varphi^n)\Vert_1$ among those for the randomly generated Hamiltonians $H\in\mathcal{B}(\mathscr{H})$.
If it goes to zero as $n\rightarrow\infty$, we can conclude that the Zeno Hamiltonian suppression works irrespective of the Hamiltonian $H$.
To see the failure of the bath DD (the Zeno Hamiltonian suppression), we construct a Hamiltonian $H$ for which the purity $\mathtt{P}(\Lambda_1^\mathrm{DD})$ (the distance $\Vert \Lambda(\mathcal{E}_\mathrm{Z})-\Lambda(\mathcal{E}_\varphi^n)\Vert_1)$ remains far from $1$ ($0$) even for large $n$.
Furthermore, we show the average purity $\mathbb{E}\big[\mathtt{P}(\Lambda_1^\mathrm{DD})\big]$ (the average distance $\mathbb{E}\big[\Vert \Lambda(\mathcal{E}_\mathrm{Z})-\Lambda(\mathcal{E}_\varphi^n)\Vert_1)\big]$) over those for the randomly generated Hamiltonians $H\in\mathcal{B}(\mathscr{H}_1\otimes\mathscr{H}_2)$ ($H\in\mathcal{B}(\mathscr{H})$) and show that it remains far from $1$ ($0$) even for large $n$.

\subsection{Ergodic Qubit Channel $\mathcal{E}^\updownarrow$}\label{sec:example_ergodic_qubit}
The qubit channel $\mathcal{E}^\updownarrow$ defined in Example~\ref{ex:ergodic_mixing}\eqref{it:example_erg2} acts on a density operator $\rho\in\mathcal{T}(\mathbb{C}^2)$ as $\mathcal{E}^{\updownarrow}(\rho)=\ket{0}\bra{1}\rho\ket{1}\bra{0}+\ket{1}\bra{0}\rho\ket{0}\bra{1}$, and it is ergodic.
As discussed in Example~\ref{example:peripheral_projections}\eqref{it:peripheral_qubit}, its peripheral projections are given by $\mathcal{P}_0^\updownarrow=\frac{1}{2}\mathbb{1}\tr(\mathbb{1}{}\bullet{})$ and $\mathcal{P}_1^\updownarrow=\frac{1}{2}Z\tr(Z{}\bullet{})$, corresponding to eigenvalues $\lambda_0=1$ and $\lambda_1=-1$, respectively, with $Z$ the third Pauli matrix.

To see the bath DD with this qubit channel $\mathcal{E}^\updownarrow$, we consider $\mathscr{H}_1=\mathbb{C}_1^2$ so that the total system-bath Hilbert space is $\mathscr{H}=\mathbb{C}_1^2\otimes\mathbb{C}_2^2$, and apply the bath DD scheme $\mathcal{E}_\mathrm{DD}^\updownarrow=\left((\mathcal{I}_1\otimes\mathcal{E}^\updownarrow)\rme^{-\rmi\frac{t}{n}\mathcal{H}}\right)^n$ for a dephasing interaction of the form $H=\sigma_{n_1}\otimes\sigma_{n_2}\in\mathcal{B}(\mathbb{C}_1^2\otimes\mathbb{C}_2^2)$, with adjoint representation $\mathcal{H}=[H,{}\bullet{}]$, where $\sigma_{n_i}=\bm{n}_i\cdot\bm{\sigma}$ with $\bm{n}_i\in\mathbb{R}^3$ ($i=1,2$), and the elements of $\bm{\sigma}=(X,Y,Z)$ are the Pauli matrices.
If we manage to remove this type of interaction, the bath DD works for general system-bath interactions.
Recall the operator Schmidt decomposition in Eq.~\eqref{eq:Hamiltonian_decomposition}.
In this case, the Zeno Hamiltonian $\mathcal{H}^{\mathcal{E}^\updownarrow}_\mathrm{Z}$, acting on a product input $A=A_1\otimes A_2\in\mathcal{B}(\mathbb{C}_1^2\otimes\mathbb{C}_2^2)$, reads
\begin{align}
	\mathcal{H}^{\mathcal{E}^\updownarrow}_\mathrm{Z}(A)
	={}&[(\mathcal{I}_1\otimes\mathcal{P}_0^\updownarrow)\mathcal{H}(\mathcal{I}_1\otimes\mathcal{P}_0^\updownarrow)](A)
	+[(\mathcal{I}_1\otimes\mathcal{P}_1^\updownarrow)\mathcal{H}(\mathcal{I}_1\otimes\mathcal{P}_1^\updownarrow)](A)\nonumber\\
	={}&\frac{1}{2}[(\mathcal{I}_1\otimes\mathcal{P}_0^\updownarrow)\mathcal{H}](A_1\otimes\mathbb{1})\tr(A_2)
	+\frac{1}{2}[(\mathcal{I}_1\otimes\mathcal{P}_1^\updownarrow)\mathcal{H}](A_1\otimes Z)\tr(ZA_2)\nonumber\\
	={}&\frac{1}{2}(\mathcal{I}_1\otimes\mathcal{P}_0^\updownarrow)\Bigl(
	[\sigma_{n_1},A_1]\otimes\sigma_{n_2}
	\Bigr)\tr(A_2)
	\nonumber\\
	&{}
	+\frac{1}{2}(\mathcal{I}_1\otimes\mathcal{P}_1^\updownarrow)\Bigl(
	[\sigma_{n_1},A_1]\otimes\sigma_{n_2}Z+A_1\sigma_{n_1}\otimes[\sigma_{n_2},Z]
	\Bigr)\tr(ZA_2)\nonumber\\
	={}&\frac{1}{4}[\sigma_{n_1},A_1]\otimes\mathbb{1}\tr(\sigma_{n_2})\tr(A_2)
	\nonumber\\
	&{}
	+\frac{1}{4}\,\Bigl(
	[\sigma_{n_1},A_1]\otimes Z\tr(\sigma_{n_2})+A_1\sigma_{n_1}\otimes Z\tr(Z[\sigma_{n_2},Z])
	\Bigr)\tr(ZA_2)\nonumber\\
	={}&0,\vphantom{\frac{1}{4}}
\end{align}
since $\tr(\sigma_{n_2})=\tr(Z[\sigma_{n_2},Z])=0$.
Therefore, the interaction Hamiltonian $H$ is completely cancelled by the bath DD with $\mathcal{E}^\updownarrow$.
The bath DD works.

To see the Zeno Hamiltonian suppression with the qubit channel $\mathcal{E}^\updownarrow$, we consider the Zeno evolution $\mathcal{E}_\mathrm{Z}^\updownarrow\equiv\left(\mathcal{E}\rme^{-\rmi\frac{t}{n}\mathcal{H}}\right)^n$ for the Hamiltonian $\mathcal{H}=[H,{}\bullet{}]$ with $H=\sigma_n=\bm{n}\cdot\bm{\sigma}\in\mathcal{B}(\mathbb{C}^2)$, $\bm{n}\in\mathbb{R}^3$, and compute the Zeno Hamiltonian $\mathcal{H}_\mathrm{Z}^{\mathcal{E}^\updownarrow}$.
For any $A\in\mathcal{B}(\mathbb{C}^2)$, we have
\begin{align}
\mathcal{H}_\mathrm{Z}^{\mathcal{E}^\updownarrow}(A)
={}&(\mathcal{P}_0^\updownarrow\mathcal{H}\mathcal{P}_0^\updownarrow)(A)
+(\mathcal{P}_1^\updownarrow\mathcal{H}\mathcal{P}_1^\updownarrow)(A)\nonumber\\
={}&\frac{1}{2}(\mathcal{P}_0^\updownarrow\mathcal{H})(\mathbb{1})\tr(A)
+\frac{1}{2}(\mathcal{P}_1^\updownarrow\mathcal{H})(Z)\tr(ZA)\nonumber\\
={}&\frac{1}{2}\mathcal{P}_1^\updownarrow([\sigma_n,Z])\tr(ZA)\nonumber\\
={}&\frac{1}{4}Z\tr(Z[\sigma_n,Z])\tr(ZA)\nonumber\\
={}&0.\vphantom{\frac{1}{4}}
\end{align}
The Zeno Hamiltonian suppression works.

We present numerical results on the bath DD and the Zeno Hamiltonian suppression with $\mathcal{E}^\updownarrow$ in Fig.~\ref{fig:Ergodic_Channel}\@.
Figure~\ref{fig:Ergodic_Channel}(a) shows the worst-case purity $\min\mathtt{P}(\Lambda_1^\mathrm{DD})$ of the reduced Choi-Jamio{\l}kowski state $\Lambda_1^\mathrm{DD}=\tr_2(\Lambda(\mathcal{E}_\mathrm{DD}^\updownarrow))$ for 100 randomly sampled Hamiltonians $H\in\mathcal{B}(\mathbb{C}_1^2\otimes\mathbb{C}_2^2)$ with $\|H\|_\infty=1$.
The purity approaches $1$ as the number of decoupling steps $n$ increses, which shows that the bath DD works for any Hamiltonian $H$.
Since the condition for the bath DD to work is stronger than that for the Zeno Hamiltonian suppression, the latter also works with $\mathcal{E}^\updownarrow$.
This is shown in Fig.~\ref{fig:Ergodic_Channel}(b), where we compute the distance $\Vert \Lambda(\mathcal{E}_\mathrm{Z}^\updownarrow)-\Lambda((\mathcal{E}^\updownarrow)^n)\Vert_1$ between the Zeno evolution $\mathcal{E}_\mathrm{Z}^\updownarrow$ and the target evolution $(\mathcal{E}_\varphi^\updownarrow)^n=(\mathcal{E}^\updownarrow)^n$ in terms of their respective Choi-Jamio{\l}kowski states $\Lambda(\mathcal{E}_\mathrm{Z}^\updownarrow)$ and $\Lambda((\mathcal{E}^\updownarrow)^n)$.
Again, we look at the worst-case distance $\max\Vert \Lambda(\mathcal{E}_\mathrm{Z}^\updownarrow)-\Lambda((\mathcal{E}^\updownarrow)^n)\Vert_1$ for 100 randomly sampled Hamiltonians $H\in\mathcal{B}(\mathbb{C}^2)$ with $\|H\|_\infty=1$.
The Zeno Hamiltonian suppression works with a convergence rate of $\mathcal{O}(1/n)$.
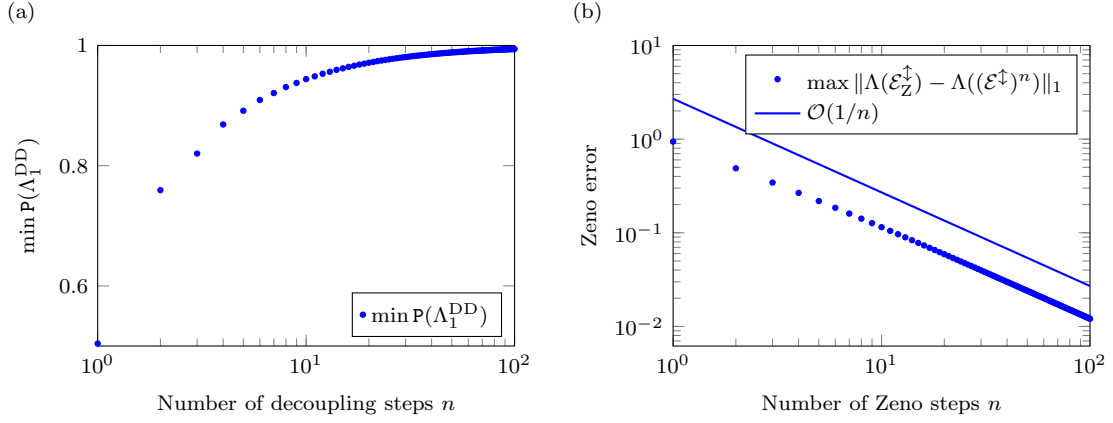
\begin{figure}
\centering
\begin{tabular}{rr}
\multicolumn{1}{l}{\footnotesize(a)}
&
\multicolumn{1}{l}{\footnotesize(b)}
\\
	\begin{tikzpicture}[mark size={1}, scale=1]
	\begin{axis}[
	xmin=1,
	xmax=100,
	ymin=0.5,
	ymax=1,
	ylabel near ticks,
	xlabel={\footnotesize Number of decoupling steps $n$},
	ylabel={\footnotesize $\min\mathtt{P}(\Lambda_1^\mathrm{DD})$},
	x post scale=0.8,
	y post scale=0.7,
	xmode=log,
	tick label style={font=\footnotesize},
	legend style={font=\footnotesize},
	legend pos={south east},
	legend cell align=left,
	]
	\addplot[blue, only marks] table [x=n, y=P, col sep=comma] {Ergodic_Purity_Bi.csv};
	\addlegendentry{$\min\mathtt{P}(\Lambda_1^\mathrm{DD})$};
	\end{axis}
	\end{tikzpicture}
&
  	\begin{tikzpicture}[mark size={1}, scale=1]
	\begin{axis}[
	xmin=1,
	xmax=100,
	ymin=0,
	ymax=10,
	ylabel near ticks,
	xlabel={\footnotesize Number of Zeno steps $n$},
	ylabel={\footnotesize Zeno error},
	xmode=log,
	ymode=log,
	x post scale=0.8,
	y post scale=0.7,
	tick label style={font=\footnotesize},
	legend style={font=\footnotesize},
	legend pos={north east},
	legend cell align=left,
	]
	\addplot[blue,only marks] table [x=n, y=error, col sep=comma] {Ergodic_Error_Mono.csv};
	\addlegendentry{$\max \Vert \Lambda(\mathcal{E}_\mathrm{Z}^\updownarrow)-\Lambda((\mathcal{E}^\updownarrow)^n)\Vert_1$};
	\addplot[color=blue, domain=1:100,thick] {2.7/x};
	\addlegendentry{$\mathcal{O}(1/n)$};
	\end{axis}
	\end{tikzpicture}
\end{tabular}
  	\caption{\label{fig:Ergodic_Channel}Bath dynamical decoupling and Zeno Hamiltonian suppression with the ergodic qubit channel $\mathcal{E}^\updownarrow$ introduced in Example~\ref{ex:ergodic_mixing}(ii). (a) The worst-case purity of the reduced Choi-Jamio{\l}kowski state, $\min\mathtt{P}(\Lambda_1^\mathrm{DD})$, for 100 generators $\mathcal{H}=[H,{}\bullet{}]$ constructed from randomly sampled Hamiltonians $H\in\mathcal{B}(\mathbb{C}_1^2\otimes\mathbb{C}_2^2)$ with $\|H\|_\infty=1$ for the bath dynamical decoupling. (b) The worst-case distance for the Zeno Hamiltonian suppression, $\max\Vert \Lambda(\mathcal{E}_\mathrm{Z}^\updownarrow)-\Lambda((\mathcal{E}^\updownarrow)^n)\Vert_1$, for 100 generators $\mathcal{H}=[H,{}\bullet{}]$ constructed from randomly sampled Hamiltonians $H\in\mathcal{B}(\mathbb{C}^2)$ with $\|H\|_\infty=1$. In both panels, $t=1$.}
\end{figure}

\subsection{Dephasing Channel $\mathcal{E}^\mathrm{d}$}\label{sec:example_dephasing}
The dephasing channel $\mathcal{E}^\mathrm{d}$ is introduced in Example~\ref{ex:ergodic_mixing}\eqref{it:example_dephasing}\@.
It is defined by its action on basis states $\{\ket{i}\}$ of a $d$-dimensional system as $\mathcal{E}^\mathrm{d}(\ket{i}\bra{j})=\delta_{ij}\ket{i}\bra{i}$ ($i,j=0,\ldots,d-1$). 
This channel $\mathcal{E}^\mathrm{d}$ is not ergodic but does not have a decoherence-free subsystem.
Thus, by Lemma~\ref{lemma:no_DFS}, it can be written as a direct sum of ergodic quantum channels on its space of recurrences $\mathcal{X}(\mathcal{E}^\mathrm{d})$.
In fact, $\mathcal{E}^\mathrm{d}=\sum_{i=0}^{d-1}\mathcal{P}_{\ket{i}\bra{i}}$, with $\mathcal{P}_{\ket{i}\bra{i}}(A)=\ket{i}\bra{i}A\ket{i}\bra{i}$ being the peripheral projections of $\mathcal{E}^\mathrm{d}$.
Notice that $\mathcal{E}^\mathrm{d}$ has $d$ spectral cycles, each of length $1$: $\mathcal{E}^\mathrm{d}(\ket{i}\bra{i})=\ket{i}\bra{i}$ ($i=0,\ldots,d-1$).
The space of recurrences $\mathcal{X}(\mathcal{E}^\mathrm{d})$ consists only of fixed points, $\mathcal{X}(\mathcal{E}^\mathrm{d})=\mathcal{F}(\mathcal{E}^\mathrm{d})=\{\ket{i}\bra{i}:i=0,\ldots,d-1\}$.

Let us see how the bath DD fails to work with this channel $\mathcal{E}^\mathrm{d}$.
We consider $\mathscr{H}_1=\mathbb{C}_1^2$ for the system and $\mathscr{H}_2=\mathbb{C}_2^2$ for the bath, so that the total Hilbert space is given by $\mathscr{H}=\mathbb{C}_1^2\otimes\mathbb{C}_2^2$.
We choose the Hamiltonian $H=Z\otimes Z\in\mathcal{B}(\mathbb{C}_1^2\otimes\mathbb{C}_2^2)$, with adjoint representation $\mathcal{H}=[H,{}\bullet{}]$. 
Notice that $Z=\ket{0}\bra{0}-\ket{1}\bra{1}$ acts nontrivially on both of the two cycles.
It is exactly of the structure of the Hamiltonian chosen in the proof of necessity in Theorem~\ref{thm:bath_DD}\@.
We now show that this interaction Hamiltonian $H=Z\otimes Z$ cannot be cancelled by the decoupling kicks with $\mathcal{I}_1\otimes\mathcal{E}^\mathrm{d}$.
Indeed, in this case, the DD evolution reads
\begin{align}
\mathcal{E}_\mathrm{DD}^\mathrm{d}
&=\left((\mathcal{I}_1\otimes\mathcal{E}^\mathrm{d})\rme^{-\rmi\frac{t}{n}\mathcal{H}}\right)^n
\nonumber\\
&=\left(
\rme^{-\rmi\frac{t}{n}\mathcal{H}_1}\otimes\mathcal{P}_{\ket{0}\bra{0}}
+
\rme^{\rmi\frac{t}{n}\mathcal{H}_1}\otimes\mathcal{P}_{\ket{1}\bra{1}}
\right)^n
\nonumber\\
&=\rme^{-\rmi t\mathcal{H}_1}\otimes\mathcal{P}_{\ket{0}\bra{0}}
+
\rme^{\rmi t\mathcal{H}_1}\otimes\mathcal{P}_{\ket{1}\bra{1}},
\end{align}
where $\mathcal{H}_1=[Z,{}\bullet{}]$, which acts on $\mathcal{B}(\mathbb{C}_1^2)$.
The system-bath evolution is not decoupled.
It can induce classical correlations between the system and the bath.
To see this explicitly, consider an arbitrary input density operator $\rho_1\in\mathcal{T}(\mathbb{C}_1^2)$ of the system satisfying $[Z,\rho_1]\neq 0$ and an input density operator $\rho_2=p\ket{0}\bra{0}+(1-p)\ket{1}\bra{1}\in\mathcal{T}(\mathbb{C}_2^2)$ of the bath with $p\in(0,1)$.
Then, the DD evolution $\mathcal{E}_\mathrm{DD}^{\mathrm{d}}$ maps the input product state $\rho=\rho_1\otimes\rho_2\in\mathcal{T}(\mathbb{C}_1^2\otimes\mathbb{C}_2^2)$ to
\begin{equation}
\mathcal{E}_\mathrm{DD}^{\mathrm{d}}(\rho)
=p\rme^{-\rmi tZ}\rho_1\rme^{\rmi tZ}\otimes\ket{0}\bra{0}
+
(1-p)\rme^{\rmi tZ}\rho_1\rme^{-\rmi tZ}\otimes\ket{1}\bra{1},
\end{equation}
developing a classical correlation between the system and the bath.
The bath DD does not work.
In Fig.~\ref{fig:Dephasing}(a), the purity $\mathtt{P}(\Lambda_1^\mathrm{DD})$ of the reduced Choi-Jamio{\l}kowski state $\Lambda_1^\mathrm{DD}=\tr_2(\Lambda(\mathcal{E}_\mathrm{DD}^\mathrm{d}))$ for the Hamiltonian $H=Z\otimes Z\in\mathcal{B}(\mathbb{C}_1^2\otimes\mathbb{C}_2^2)$ and the average purity $\mathbb{E}[\mathtt{P}(\Lambda_1^\mathrm{DD})]$ over 100 randomly sampled Hamiltonians $H\in\mathcal{B}(\mathbb{C}_1^2\otimes\mathbb{C}_2^2)$ with $\|H\|_\infty=1$ are shown as functions of the number of decoupling steps $n$.
The purity stays far from $1$, namely, the DD evolution $\mathcal{E}_\mathrm{Z}^\mathrm{d}$ is far from unitary, no matter how large we choose the number of decoupling steps $n$.
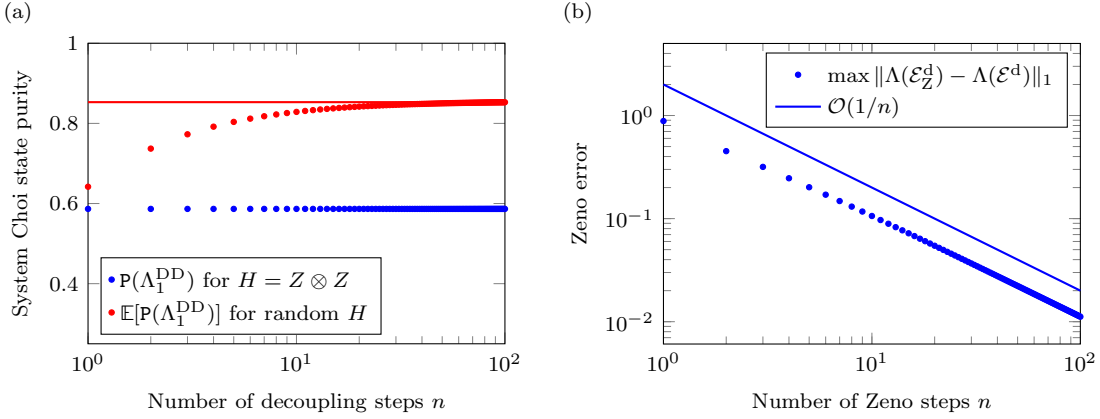
\begin{figure}
\begin{tabular}{rr}
\multicolumn{1}{l}{\footnotesize(a)}
&
\multicolumn{1}{l}{\footnotesize(b)}
\\
	\begin{tikzpicture}[mark size={1}, scale=1]
	\begin{axis}[
	xmin=1,
	xmax=100,
	ymin=0.25,
	ymax=1,
	ylabel near ticks,
	xlabel={\footnotesize Number of decoupling steps $n$},
	ylabel={\footnotesize System Choi state purity},
	x post scale=0.8,
	y post scale=0.7,
	xmode=log,
	tick label style={font=\footnotesize},
	legend style={font=\footnotesize},
	legend pos={south west},
	legend cell align=left,
	]
	\addplot[blue, only marks] table [x=n, y=P, col sep=comma] {Dephasing_Purity_Bi.csv};
	\addplot[red, only marks] table [x=n, y=P, col sep=comma] {Dephasing_Purity_Bi_Rand.csv};
	\addplot[color=red, domain=1:100,thick] {0.853};
	\addlegendentry{$\mathtt{P}(\Lambda_1^\mathrm{DD})$ for $H=Z\otimes Z$};
	\addlegendentry{$\mathbb{E}[\mathtt{P}(\Lambda_1^\mathrm{DD})]$ for random $H$};
	\end{axis}
	\end{tikzpicture}
&
  	\begin{tikzpicture}[mark size={1}, scale=1]
	\begin{axis}[
	xmin=1,
	xmax=100,
	ymin=0,
	ymax=5,
	ylabel near ticks,
	xlabel={\footnotesize Number of Zeno steps $n$},
	ylabel={\footnotesize Zeno error},
	xmode=log,
	ymode=log,
	x post scale=0.8,
	y post scale=0.7,
	tick label style={font=\footnotesize},
	legend style={font=\footnotesize},
	legend pos={north east},
	legend cell align=left,
	]
	\addplot[blue,only marks] table [x=n, y=error, col sep=comma] {Dephasing_Error_Mono.csv};
	\addlegendentry{$\max \Vert \Lambda(\mathcal{E}_\mathrm{Z}^\mathrm{d})-\Lambda(\mathcal{E}^\mathrm{d})\Vert_1$};
	\addplot[color=blue, domain=1:100,thick] {2/x};
	\addlegendentry{$\mathcal{O}(1/n)$};
	\end{axis}
	\end{tikzpicture}
\end{tabular}
\caption{\label{fig:Dephasing}Bath dynamical decoupling and Zeno Hamiltonian suppression with the qubit dephasing channel $\mathcal{E}^\mathrm{d}$ introduced in Example~\ref{ex:ergodic_mixing}(vi)\@. (a) The purity $\mathtt{P}(\Lambda_1^\mathrm{DD})$ of the reduced Choi-Jamio{\l}kowski state for the Hamiltonian $H=Z\otimes Z\in\mathcal{B}(\mathbb{C}_1^2\otimes\mathbb{C}_2^2)$ (blue) and the average purity $\mathbb{E}[\mathtt{P}(\Lambda_1^\mathrm{DD})]$ over 100 randomly sampled Hamiltonians $H\in\mathcal{B}(\mathbb{C}_1^2\otimes\mathbb{C}_2^2)$ normalized as $\|H\|_\infty=1$ (red), for the bath dynamical decoupling. The total evolution time is fixed at $t=1$. The former (blue) is constant at $\mathtt{P}(\Lambda_1^\mathrm{DD})\approx0.59$, while the latter (red) saturates to $\mathbb{E}[\mathtt{P}(\Lambda_1^\mathrm{DD})]\approx0.85$. The bath dynamical decoupling does not work. (b) The worst-case distance for the Zeno Hamiltonian suppression, $\max\Vert \Lambda(\mathcal{E}_\mathrm{Z}^\mathrm{d})-\Lambda(\mathcal{E}^\mathrm{d})\Vert_1$, for 100 randomly sampled Hamiltonians $H\in\mathcal{B}(\mathbb{C}^2)$ normalized as $\|H\|_\infty=1$. The total evolution time is fixed at $t=1$.}
\end{figure}

Even though the bath DD with $\mathcal{E}^\mathrm{d}$ does not work, the Zeno Hamiltonian suppression does.
This is because $\mathcal{E}^\mathrm{d}$ does not have any decoherence-free subsystem.
We can quickly convince ourselves that indeed any Hamiltonian $H\in\mathcal{B}(\mathscr{H})$ with adjoint representation $\mathcal{H}=[H,{}\bullet{}]$ gets switched off by frequently kicking the system with $\mathcal{E}^\mathrm{d}:\mathcal{B}(\mathscr{H})\rightarrow\mathcal{B}(\mathscr{H})$.
For simplicity, let us consider $\mathscr{H}=\mathbb{C}^2$, so that $\mathcal{E}^\mathrm{d}(A)=\ket{0}\bra{0}A\ket{0}\bra{0}+\ket{1}\bra{1}A\ket{1}\bra{1}$, and consider the Zeno evolution $\mathcal{E}_\mathrm{Z}^\mathrm{d}=\Bigl(\mathcal{E}^\mathrm{d}\rme^{-\rmi\frac{t}{n}\mathcal{H}}\Bigr)^n$. 
Since $\mathcal{E}^\mathrm{d}$ itself is the peripheral projection of $\mathcal{E}^\mathrm{d}$, we get the Zeno Hamiltonian
\begin{align}
	\mathcal{H}_\mathrm{Z}^{\mathcal{E}^\mathrm{d}}(\rho)
	={}&(\mathcal{E}^\mathrm{d}\mathcal{H}\mathcal{E}^\mathrm{d})(\rho)\nonumber\\
	={}&(\mathcal{E}^\mathrm{d}\mathcal{H})(\ket{0}\bra{0}A\ket{0}\bra{0}+\ket{1}\bra{1}A\ket{1}\bra{1})\nonumber\\
	={}&\mathcal{E}^\mathrm{d}\Bigl(
	\Bigl[H,\ket{0}\bra{0}A\ket{0}\bra{0}\Bigr]
	+\Bigl[H,\ket{1}\bra{1}A\ket{1}\bra{1}\Bigr]
	\Bigr)\nonumber\\
	={}&\ket{0}\bra{0}\,\Bigl(
	\Bigl[H,\ket{0}\bra{0}A\ket{0}\bra{0}\Bigr]
	+\Bigl[H,\ket{1}\bra{1}A\ket{1}\bra{1}\Bigr]
	\Bigr)\,\ket{0}\bra{0}
	\nonumber\\
	&{}+\ket{1}\bra{1}\,\Bigl(
	\Bigl[H,\ket{0}\bra{0}A\ket{0}\bra{0}\Bigr]
	+\Bigl[H,\ket{1}\bra{1}A\ket{1}\bra{1}\Bigr]
	\Bigr)\,\ket{1}\bra{1}
	\nonumber\\
	={}&0,
\end{align}
for any $\rho\in\mathcal{T}(\mathbb{C}^2)$.
This shows that the Zeno Hamiltonian suppression with the dephasing channel $\mathcal{E}^\mathrm{d}$ works.
Since the dephasing channel $\mathcal{E}^\mathrm{d}$ itself is the peripheral projection of $\mathcal{E}^\mathrm{d}$, we have $(\mathcal{E}_\varphi^\mathrm{d})^n=\mathcal{E}^\mathrm{d}$ for its peripheral part.
The target Zeno evolution is thus given by $(\mathcal{E}_\varphi^\mathrm{d})^n\rme^{-\rmi t\mathcal{H}_\mathrm{Z}^{\mathcal{E}^\mathrm{d}}}=\mathcal{E}^\mathrm{d}$.
The convergence of $\mathcal{E}_\mathrm{Z}^\mathrm{d}$ to $\mathcal{E}^\mathrm{d}$ can be seen numerically by computing the distance $\Vert \Lambda(\mathcal{E}_\mathrm{Z}^\mathrm{d})-\Lambda(\mathcal{E}^\mathrm{d})\Vert_1$.
In Fig.~\ref{fig:Dephasing}(b), the worst-case distance $\max\Vert \Lambda(\mathcal{E}_\mathrm{Z}^\mathrm{d})-\Lambda(\mathcal{E}^\mathrm{d})\Vert_1$ for 100 randomly sampled Hamiltonians $H\in\mathcal{B}(\mathbb{C}^2)$ with $\Vert H\Vert_\infty=1$ is shown as a function of the number of Zeno steps $n$.
We infer that $\Vert \Lambda(\mathcal{E}_\mathrm{Z}^\mathrm{d})-\Lambda(\mathcal{E}^\mathrm{d})\Vert_1=\mathcal{O}(1/n)$ independently of the Hamiltonian $H$, and the Zeno Hamiltonian suppression indeed works.

More complicated non-ergodic channels without decoherence-free subsystems can be constructed using Lemma~\ref{lemma:no_DFS}, by taking the direct sum of ergodic channels.
Such channels have at least two cycles and work for the Zeno Hamiltonian suppression but do not work for the bath DD\@.

\subsection{Channel with Decoherence-Free Subsystem $\mathcal{E}^{\mathbb{1}/2}$}\label{sec:example_DFSS}
Let us next look at the quantum channel $\mathcal{E}^{\mathbb{1}/2}$ introduced in Example~\ref{example:DFSS}\eqref{it:DFSS1} with $\Omega=\frac{1}{2}\mathbb{1}$.
It is a two-qubit channel acting on $\mathcal{B}(\mathbb{C}_1^2\otimes\mathbb{C}_2^2)$.
It is defined by $\mathcal{E}^{\mathbb{1}/2}(A)=\tr_2(A)\otimes\frac{1}{2}\mathbb{1}$ for any $A\in\mathcal{B}(\mathbb{C}_1^2\otimes\mathbb{C}_2^2)$.
Here, $\Omega=\frac{1}{2}\mathbb{1}\in\mathcal{T}(\mathbb{C}_2^2)$ is the maximally mixed state.
The peripheral spectrum of $\mathcal{E}^{\mathbb{1}/2}$ is $\sigma_\varphi(\mathcal{E}^{\mathbb{1}/2})=\{1\}$, with the peripheral projection being $\mathcal{E}^{\mathbb{1}/2}$ itself.
However, its space of fixed points reads $\mathcal{F}(\mathcal{E}^{\mathbb{1}/2})=\{x\otimes\frac{1}{2}\mathbb{1}:x\in\mathcal{B}(\mathbb{C}_1^2)\}$, which is $4$-dimensional, so the channel $\mathcal{E}^{\mathbb{1}/2}$ is not ergodic. 
It admits a decoherence-free subsystem of $\mathcal{B}(\mathbb{C}_1^2)$.

Let us look at the Zeno Hamiltonian suppression with $\mathcal{E}^{\mathbb{1}/2}$.
To this end, take the Hamiltonian $H=Z\otimes\mathbb{1}\in\mathcal{B}(\mathbb{C}_1^2\otimes\mathbb{C}_2^2)$ and consider its adjoint representation $\mathcal{H}=[H,{}\bullet{}]$.
This Hamiltonian commutes with $\mathcal{E}^{\mathbb{1}/2}$, i.e.~$[\mathcal{H},\mathcal{E}^{\mathbb{1}/2}]=0$, and hence,
\begin{equation}
\mathcal{H}_\mathrm{Z}^{\mathcal{E}^{\mathbb{1}/2}}
=\mathcal{E}^{\mathbb{1}/2}\mathcal{H}\mathcal{E}^{\mathbb{1}/2}
=\mathcal{H}\mathcal{E}^{\mathbb{1}/2}.
\label{eq:DFSS_zeno_example}
\end{equation}
The Hamiltonian $\mathcal{H}$ is not cancelled, and it survives the Zeno projection.
Indeed,
\begin{align}
	\mathcal{E}_\mathrm{Z}^{\mathbb{1}/2}
	&=\Bigl(
	\mathcal{E}^{\mathbb{1}/2}\rme^{-\rmi\frac{t}{n}\mathcal{H}}\Bigr)^n\nonumber\\
	&=(\mathcal{E}_\varphi^{\mathbb{1}/2})^n\rme^{-\rmi t\mathcal{H}_\mathrm{Z}^{\mathcal{E}^{\mathbb{1}/2}}}\nonumber\\
	&=\mathcal{E}^{\mathbb{1}/2}\rme^{-\rmi t\mathcal{H}\mathcal{E}^{\mathbb{1}/2}}\nonumber\\
	&=\mathcal{E}^{\mathbb{1}/2}\rme^{-\rmi t\mathcal{H}}.
\end{align}
Here, we have used the fact that $\mathcal{E}^{\mathbb{1}/2}$ itself is the peripheral projection and thus $(\mathcal{E}_\varphi^{\mathbb{1}/2})^n=\mathcal{E}^{\mathbb{1}/2}$ for all $n\in\mathbb{N}$.
The Zeno Hamiltonian suppression does not work with $\mathcal{E}^{\mathbb{1}/2}$.
Likewise, the bath DD cannot work.

We confirm these numerically in Fig.~\ref{fig:DFSS}\@.
In Fig.~\ref{fig:DFSS}(a), the bath Hilbert space $\mathbb{C}_1^2\otimes\mathbb{C}_2^2$ is coupled with another single-qubit Hilbert space $\mathbb{C}_0^2$ via the Hamiltonian $H=Z\otimes Z\otimes\mathbb{1}\in\mathcal{B}(\mathbb{C}_0^2\otimes\mathbb{C}_1^2\otimes\mathbb{C}_2^2)$.
We compute the purity $\mathtt{P}(\Lambda_0^\mathrm{DD})$ of the reduced Choi-Jamio{\l}kowski state $\Lambda_0^\mathrm{DD}=\tr_{1,2}(\Lambda(\mathcal{E}_\mathrm{DD}^{\mathbb{1}/2}))$ for $\mathcal{E}_\mathrm{DD}^{\mathbb{1}/2}=\Bigl((\mathcal{I}_0\otimes\mathcal{E}^{\mathbb{1}/2})\rme^{-\rmi\frac{t}{n}\mathcal{H}}\Bigr)^n$.
The purity stays constant and does not approach $1$ as the number of decoupling steps $n$ increases.
The bath DD does not work.
Likewise, for the Zeno Hamiltonian suppression with the channel $\mathcal{E}^{\mathbb{1}/2}$, we numerically confirm that the Zeno evolution $\mathcal{E}_\mathrm{Z}^{\mathbb{1}/2}$ does not remove the Hamiltonian $\mathcal{H}=[H,{}\bullet{}]$ with $H=Z\otimes\mathbb{1}\in\mathcal{B}(\mathbb{C}_1^2\otimes\mathbb{C}_2^2)$.
This can be seen in Fig.~\ref{fig:DFSS}(b), where we compute the distance $\Vert \Lambda(\mathcal{E}_\mathrm{Z}^{\mathbb{1}/2})-\Lambda(\mathcal{E}^{\mathbb{1}/2})\Vert_1$ between the Choi-Jami{\l}kowski states of the Zeno evolution $\mathcal{E}_\mathrm{Z}^{\mathbb{1}/2}$ and of the target evolution $\mathcal{E}^{\mathbb{1}/2}$ without the Hamiltonian component.
This distance does not shrink with the number of Zeno steps $n$.
Hence, the Zeno Hamiltonian suppression does not work.
\begin{figure}
\begin{tabular}{rr}
\multicolumn{1}{l}{\footnotesize(a)}
&
\multicolumn{1}{l}{\footnotesize(b)}
\\
	\begin{tikzpicture}[mark size={1}, scale=1]
	\begin{axis}[
	xmin=1,
	xmax=100,
	ymin=0.25,
	ymax=1,
	ylabel near ticks,
	xlabel={\footnotesize Number of decoupling steps $n$},
	ylabel={\footnotesize System Choi state purity},
	x post scale=0.85,
	y post scale=0.7,
	xmode=log,
	tick label style={font=\footnotesize},
	legend style={font=\footnotesize},
	legend pos={south west},
	legend cell align=left,
	]
	\addplot[blue, only marks] table [x=n, y=P, col sep=comma] {DFSS_Purity_Bi.csv};
	\addlegendentry{$\mathtt{P}(\Lambda_0^\mathrm{DD})$ for $H=Z\otimes Z\otimes\mathbb{1}$};
	\addplot[red, only marks] table [x=n, y=P, col sep=comma] {DFSS_Purity_Bi_Rand.csv};
	\addlegendentry{$\mathbb{E}[\mathtt{P}(\Lambda_0^\mathrm{DD})]$ for random $H$};
	\addplot[color=red, domain=1:100,thick] {0.909};
	\end{axis}
	\end{tikzpicture}
&
  	\begin{tikzpicture}[mark size={1}, scale=1]
	\begin{axis}[
	xmin=1,
	xmax=100,
	ymin=0,
	ymax=2,
	ylabel near ticks,
	xlabel={\footnotesize Number of Zeno steps $n$},
	ylabel={\footnotesize Zeno error},
	x post scale=0.85,
	y post scale=0.7,
	xmode=log,
	tick label style={font=\footnotesize},
	legend style={font=\footnotesize, at={(0.98,0.75)},anchor=north east},
	legend cell align=left,
	]
	\addplot[blue,only marks] table [x=n, y=error, col sep=comma] {DFSS_Error_Mono.csv};
	\addplot[red,only marks] table [x=n, y=error, col sep=comma] {DFSS_Error_Mono_Rand.csv};
	\addplot[color=red, domain=1:100,thick] {0.553};
	\addlegendentry{$\Vert \Lambda(\mathcal{E}_\mathrm{Z}^{\mathbb{1}/2})-\Lambda(\mathcal{E}^{\mathbb{1}/2})\Vert_1$ for $H=Z\otimes\mathbb{1}$};
	\addlegendentry{$\mathbb{E}[\Vert \Lambda(\mathcal{E}_\mathrm{Z}^{\mathbb{1}/2})-\Lambda(\mathcal{E}^{\mathbb{1}/2})\Vert_1]$ for random $H$};
	\end{axis}
	\end{tikzpicture}
\end{tabular}
\caption{\label{fig:DFSS}Bath dynamical decoupling and Zeno Hamiltonian suppression with the two-qubit channel $\mathcal{E}^{\mathbb{1}/2}$ introduced in Example~\ref{example:DFSS}(i), which admits a decoherence-free subsystem on the first qubit. (a) The purity $\mathtt{P}(\Lambda_0^\mathrm{DD})$ of the reduced Choi-Jamio{\l}kowski state for the Hamiltonian $H=Z\otimes Z\otimes\mathbb{1}\in\mathcal{B}(\mathbb{C}_0^2\otimes\mathbb{C}_1^2\otimes\mathbb{C}_2^2)$ (blue) and the average purity $\mathbb{E}[\mathtt{P}(\Lambda_0^\mathrm{DD})]$ over 100 randomly sampled Hamiltonians $H\in\mathcal{B}(\mathbb{C}_0^2\otimes\mathbb{C}_1^2\otimes\mathbb{C}_2^2)$ normalized as $\|H\|_\infty=1$ (red), for the bath dynamical decoupling. The total evolution time is fixed at $t=1$. The former (blue) is constant at $\mathtt{P}(\Lambda_0^\mathrm{DD})\approx0.59$, while the latter (red) saturates to $\mathbb{E}[\mathtt{P}(\Lambda_0^\mathrm{DD})]\approx0.91$. The bath dynamical decoupling does not work. (b) The distance $\Vert \Lambda(\mathcal{E}_\mathrm{Z}^{\mathbb{1}/2})-\Lambda(\mathcal{E}^{\mathbb{1}/2})\Vert_1$ for the Hamiltonian $H=Z\otimes\mathbb{1}\in\mathcal{B}(\mathbb{C}_1^2\otimes\mathbb{C}_2^2)$ (blue) and the average distance $\mathbb{E}[\Vert \Lambda(\mathcal{E}_\mathrm{Z}^{\mathbb{1}/2})-\Lambda(\mathcal{E}^{\mathbb{1}/2})\Vert_1]$ over 100 randomly sampled Hamiltoanians $H\in\mathcal{B}(\mathbb{C}_1^2\otimes\mathbb{C}_2^2)$ normalized as $\|H\|_\infty=1$ (red), for the Zeno Hamiltonian suppression. The total evolution time is fixed at $t=1$. The former (blue) is constant at $\Vert \Lambda(\mathcal{E}_\mathrm{Z}^{\mathbb{1}/2})-\Lambda(\mathcal{E}^{\mathbb{1}/2})\Vert_1\approx1.68$, while the latter (red) approaches $\mathbb{E}[\Vert \Lambda(\mathcal{E}_\mathrm{Z}^{\mathbb{1}/2})-\Lambda(\mathcal{E}^{\mathbb{1}/2})\Vert_1]\approx0.55$. The Zeno Hamiltonian suppression does not work.}
\end{figure}
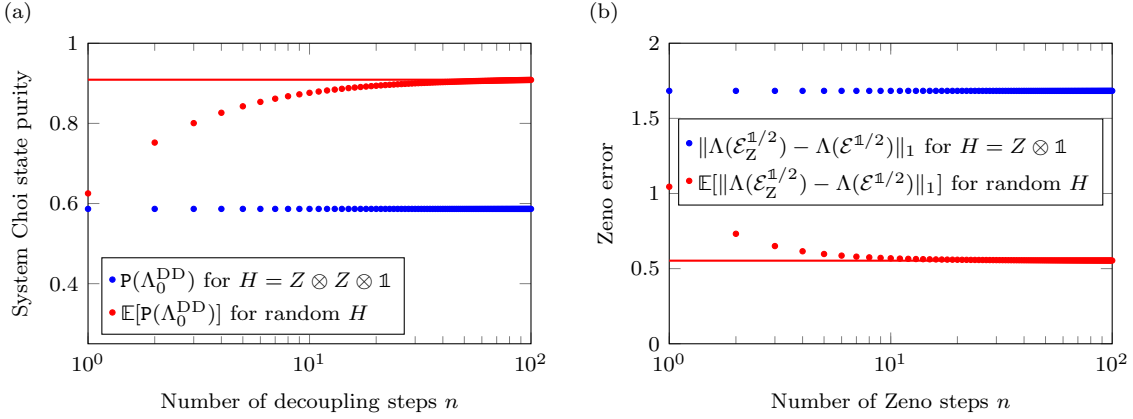

\section{Conclusion}\label{sec:conclusion}
In this paper, we generalized the notion of dynamical decoupling from cycles of unitary kicks to repeated CPTP kicks.
This approach is physically motivated by the application of the DD sequence to the bath.
In this case, it does not matter that the decoupling operations destroy the coherence of the input state as we only care about the \emph{system} state at the end of the evolution.
We found that this procedure of bath DD works if and only if the applied quantum channel is ergodic.
Thus, our method provides a true generalization of the standard unitary DD, where the effective channel after each cycle is mixing.
Furthermore, we study under which condition the repeated application of a quantum channel completely suppresses the Hamiltonian of a mono-partite system \`a la quantum Zeno effect.
It turns out that a weaker condition than ergodicity suffices for this to happen.
In particular, the quantum channel must not admit any decoherence-free subsystems to induce the quantum Zeno dynamics without any Hamiltonian component.
As in the case of bath DD, this condition is both necessary and sufficient.
To arrive at these results, we proved some spectral properties of quantum channels, which might be of independent interest beyond the scope of this paper.
In particular, we characterized the peripheral projections of ergodic quantum channels.
Furthermore, we characterized all quantum channels that are neither ergodic nor have decoherence-free subsystems in terms of the action on their respective peripheral (recurrent) subspace.

We believe that our results have applications in quantum technologies when the coherence times of quantum systems should be enhanced.
In this regard, solid-state systems might be particularly well suited for bath DD as it may be feasible to physically address and control impurities that act as a bath.
Importantly, the bath DD scheme can be combined with the standard unitary system DD to further improve the decoupling fidelity.
Physically, the bath DD might be achieved either through coupling the bath to a larger quantum system or by applying noisy operations to the bath.

Finally, we would like to remark on two possibilities for further generalizations of our results.
\begin{enumerate}
	\item It would be interesting to derive bath dynamical decoupling conditions for \emph{cycles} of quantum channels.
		In this work, we only considered repeated applications of a single quantum channel.
		However, cycles of different quantum channels might potentially allow to ease the decoupling conditions even further.
	\item Another relevant generalization of our results would be to study infinite-dimensional baths, such as bosonic environments.
		Such models include thermal noise so that one could potentially find conditions, under which bath heating leads to decoupling.
\end{enumerate}

\section*{Acknowledgments}
We acknowledge several interesting discussions about experimental realizations of bath dynamical decoupling with Jemy Geordy, Sarath Raman Nair, and Thomas Volz.
Furthermore, we thank Daniele Amato and Arturo Konderak for their valuable feedback on the manuscript.

DB acknowledges funding from the Australian Research Council (project numbers FT190100106, DP210101367, CE170100009).
AH was partially supported by the Sydney Quantum Academy.
KY acknowledges support from the Top Global University Project from the Ministry of Education, Culture, Sports, Science and Technology (MEXT), Japan, and the supports by JSPS KAKENHI Grant Numbers JP18K03470, JP18KK0073, and JP24K06904, from the Japan Society for the Promotion of Science (JSPS).

\bibliographystyle{prsty-title-hyperref}
\bibliography{20241126_Bath_DD.bib}
\end{document}